\def\d{\mathrm{d}}
\def\laweq{\buildrel \d \over =}
\newcommand{\var}{\mathrm{Var}}
\newcommand{\SD}{\mathrm{SD}}
\newcommand{\VaR}{\mathrm{VaR}}
\newcommand{\ES}{\mathrm{ES}}
\newcommand{\E}{\mathbb{E}}
\newcommand{\R}{\mathbb{R}}
\newcommand{\mD}{\mathcal{D}}
\newcommand{\N}{\mathbb{N}}
\newcommand{\p}{\mathbb{P}}
\newcommand{\id}{\mathds{1}}
\newcommand{\X}{\mathcal X}
\newcommand{\esssup}{\mathrm{ess\mbox{-}sup}}
\newcommand{\essinf}{\mathrm{ess\mbox{-}inf}}
\renewcommand{\ge}{\geqslant}
\renewcommand{\le}{\leqslant}
\renewcommand{\geq}{\geqslant}
\renewcommand{\leq}{\leqslant}
\renewcommand{\epsilon}{\varepsilon}
\theoremstyle{plain}
\newtheorem{theorem}{Theorem}
\newtheorem{lemma}{Lemma}
\newtheorem{proposition}{Proposition}
\theoremstyle{definition}
\newtheorem{definition}{Definition}
\newtheorem{example}{Example}
\newtheorem{assumption}{Assumption}
\theoremstyle{remark}
\newtheorem{remark}{Remark}
\theoremstyle{definition}
\renewcommand{\cite}{\citet}
\begin{document}


\title{Monotonic  mean-deviation risk measures}

\author{
Xia Han\thanks{School of Mathematical Sciences and LPMC,  Nankai University, Tianjin, China.   E-mail: \texttt{xiahan@nankai.edu.cn}}
 \and Ruodu Wang\thanks{Department of Statistics and Actuarial Science, University of Waterloo, Canada. E-mail: \texttt{wang@uwaterloo.ca}}
\and  Qinyu Wu\thanks{Correspondence: Department of Statistics and Actuarial Science, University of Waterloo, Canada. E-mail: \texttt{q35wu@uwaterloo.ca}}}

 \date{\today}

\maketitle

\begin{abstract} Mean-deviation  models, along with the existing theory of coherent risk measures,  are well studied in the literature.  
In this paper,   we  characterize monotonic  mean-deviation (risk) measures from a general mean-deviation model by  applying a risk-weighting function  to the deviation part.  The  form is  a combination of  the deviation-related functional and  the expectation,  and such   measures  belong to the class of consistent risk measures.  The  monotonic  mean-deviation measures admit an axiomatic foundation  via preference relations. 
By further  assuming the   convexity  and linearity of the  risk-weighting function,  the  characterizations for  convex  and coherent  risk measures are  obtained,    giving rise to many new explicit examples of convex and nonconvex consistent risk measures. In particular, we  specialize in the convex case of the monotonic mean-deviation measure and obtain its dual representation.  
Further,  we establish asymptotic consistency and normality of the natural estimators of the monotonic  mean-deviation measures. {\color{black}Finally, the monotonic mean-deviation measures are applied to  a problem of portfolio selection  using financial data.}

\medskip
\noindent
\textsc{Keywords:} Risk management, axiomatization,  deviation measures, monotonicity,   convexity\end{abstract}

\section{Introduction}

In the last few decades, risk measures  and
deviation measures  have been popular in banking and finance for various
purposes, such as calculating solvency capital reserves, pricing of insurance risks, performance
analysis, and internal risk management. Roughly speaking, deviation measures evaluate
the degree of nonconstancy in a random variable (i.e., the extent to which outcomes may deviate from a center, such as the expectation of the random variable), whereas risk measures evaluate overall prospective loss (from the benchmark of zero
loss).  
Different classes of axioms are proposed for risk measures
and deviation measures in the literature; see  \cite{ADEH99} for  coherent risk measures, \cite{FS02} and \cite{FR02} for  convex risk measures, and  \cite{RUZ06} for generalized deviation measures.   

Since the seminal work of \cite{M52}, mean-deviation or mean-risk problems are central to financial studies. In this context, a decision maker's objective functional $U$ on a loss/profit random variable $X$ can be characterized  by 
\begin{align}
\label{eq:intro}U(X)=V(\E[X],D(X)), 
\end{align}
where $\E$ is the expectation,
$V$ is a monotonic bivariate function, and $D$ measures the \emph{risk part} of $X$, which is chosen as the variance in the context of \cite{M52}, and  as a risk measure or deviation measure in subsequent studies.
For instance, the classic problem of expected return maximization with variance constraint  
  can be written as to minimize $V_{\sigma}(\E[X], \var(X))$
 where \begin{equation}\label{eq:Mark1} 
 V_\sigma(m,d):= m   + \infty \times \id_{\{d>\sigma^2 \}}\end{equation} 
 for some $\sigma >0$,\footnote{Here we interpret $X$ as loss, so  the expected return is $-\E[X]$.}
 and  it is 
 typically solved by
minimizing $V^{\lambda}(\E[X], \var(X))$, where \begin{equation}\label{eq:Mark} V^{\lambda}(m,d) := m + \lambda d\end{equation} for some $\lambda>0$ via a Lagrangian method.
Since any law-invariant coherent risk measure $R$ induces a deviation measure  $D$ via $D=R-\E$, we can write
$$
V(\E[X],R(X)) = V'(\E[X],D(X)),
$$
where $V'(m,d)=V(m,d+m)$. 
Therefore,  in this paper we focus on \eqref{eq:intro} with $D$ being a deviation measure.

The mean-deviation model is widely used in  the finance and optimization literature; see the early work of    \cite{M52}, \cite{S64},  \cite{S97}, and the more recent progresses in  \cite{GMZ12},  \cite{GZ12},   \cite{RU13},   and  \citet{HK22a,HK22b}. 
Nevertheless, there are only a few studies, including notably 
\cite{GMZ12}, that focus on the preference functional $U$  in \eqref{eq:intro}, which is an interesting mathematical object by itself, as  the decision criteria used for optimization.   

In general,   $U$  in \eqref{eq:intro} is not monotonic, as  mean-variance analysis is inconsistent with 
monotonic preferences; see, e.g.,  \cite{MMRT09}.    
Monotonicity  is self-explanatory and is common in the literature on decision theory and risk measures.   As of today, the most popular risk measures are \emph{monetary risk measures} that satisfy the  two properties of  monotonicity and cash additivity, with \emph{Value at Risk} (VaR) and \emph{Expected Shortfall} (ES) being the most famous examples. 
The monetary property allows for the interpretation of a risk measure as regulatory capital requirement defined via acceptance sets.
Therefore,  it is  natural to consider the intersection of mean-deviation models and monetary risk measures, enjoying the advantages from both streams of literature.
The functionals belonging to both classes will be called \emph{monotonic mean-deviation (risk) measures}. We omit the term ``risk" for simplicity, while keeping in mind that these functionals are risk measures in the sense of \cite{ADEH99} and \cite{FS16}.

Throughout, we consider deviation measures $D$ satisfying  the  properties of \cite{RUZ06}.  The  definitions, along with other preliminaries,  are provided in Section \ref{sec:2}.
A natural candidate for monotonic mean-deviation measures is to use the sum $U=\E+\lambda D$ for some $\lambda\ge 0$, which appears in the Markowitz model through \eqref{eq:Mark} and also in insurance pricing (see \cite{D90} and \cite{FL06}). However, this is not the only possible choice.   
  In Section \ref{sec:2new},    we characterize monotonic mean-deviation  measures among general mean-deviation models (Theorem \ref{thm:monetay}).
  It turns out that 
  they admit the form of  a combination of  the expectation and a deviation part distorted by a \emph{risk-weighting function} $g$, and $D$ needs to satisfy a condition of range normalization (defined in Section \ref{sec:2new}). Such monotonic mean-deviation measures are denoted by $\mathrm{MD}^D_g$, that is, \begin{equation}\label{eq:main-con} \mathrm{MD}^D_g=g\circ D + \E.\end{equation} 
  As far as we are aware, 
  the form of risk measures in \eqref{eq:main-con} has not been proposed in the literature, except for some special cases. 
  Although measuring both the mean and the diversification (via the deviation measure), $\mathrm{MD}^D_g$ is not necessarily a convex risk measure in the sense of \cite{FS16}. 
  Nevertheless, 
  $\mathrm{MD}^D_g$   satisfies a weaker requirement reflecting on diversification, that is, consistency with respect to second-order stochastic dominance.
 Compared with $U=\E+\lambda D$,  the risk-weighting function $g$ allows us to relax restrictions of the mean-deviation model, 
in a way similar to \cite{FS02} and \cite{FR02}, who relaxed coherent risk measures to convex ones, and to \cite{CCMTW22}, who relaxed convex risk measures to star-shaped ones. Thus, the new class of risk measures offers additional flexibility while maintaining the essential ingredients needed to assess risk via deviation in particular contexts.

 \begin{figure}[htb!]
\centering
 \includegraphics[width=15cm]{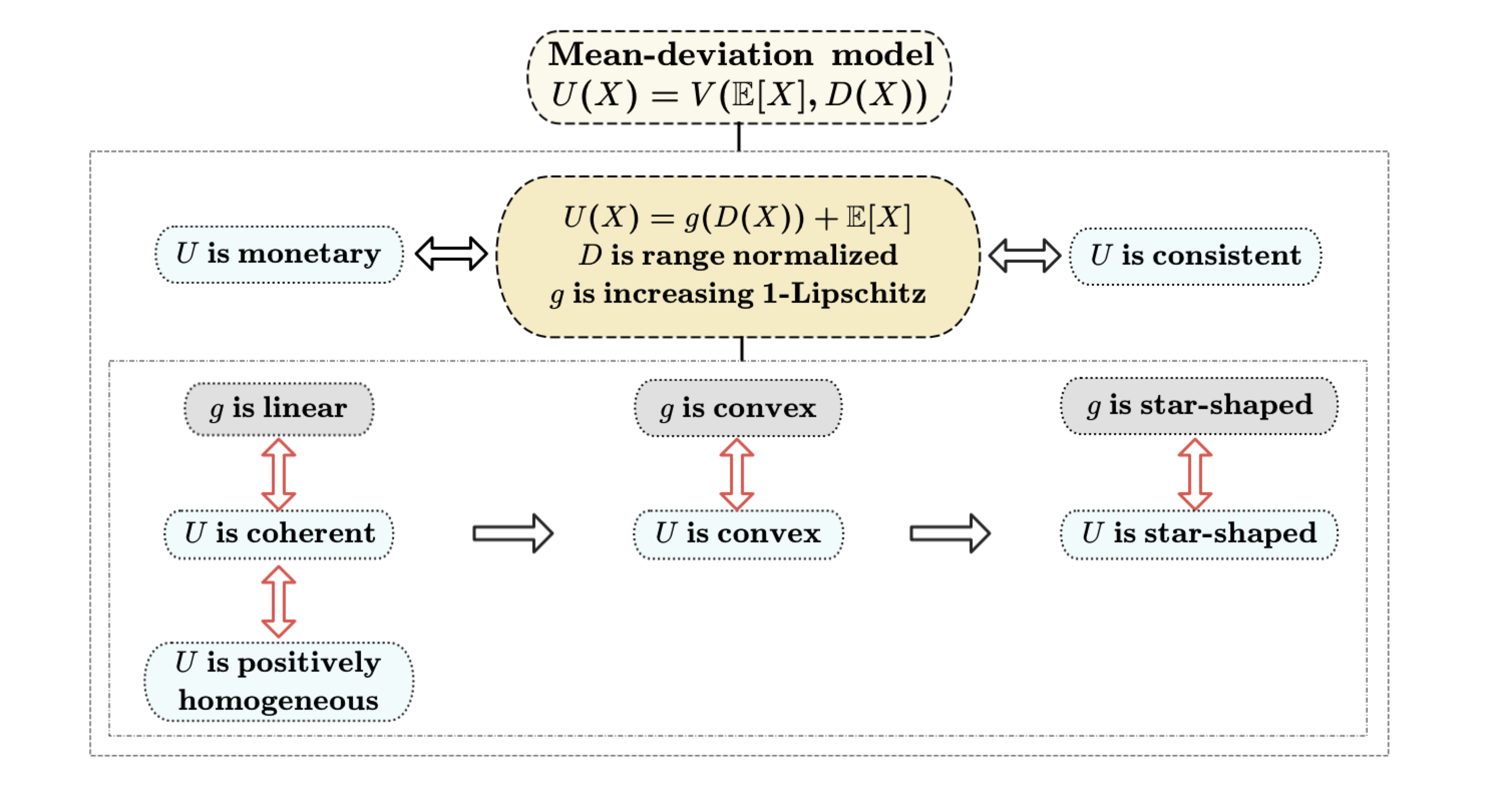}
 \captionsetup{font=small}
\caption{An illustration of properties of the mean-deviation model}\label{fig:intro}
\end{figure}

In addition to proposing the mean-deviation measures in \eqref{eq:main-con}, our main contributions include a comprehensive study on this class of risk measures. 
  In Section \ref{sec:3},    an axiomatic foundation  for $ \mathrm{MD}^D_g$ (Theorem  \ref{thm:1}) is proposed  based on results of \cite{GMZ12}, and   characterizations for coherent, convex or star-shaped  risk measures  are obtained in Theorem \ref{prop:2}.  We show that  there is a one-to-one correspondence between  $ \mathrm{MD}^D_g$  and  the risk-weighting function $g$, and hence the above classes can be identified based on properties of $g$.  
Figure \ref{fig:intro} contains an illustration of   properties of $\mathrm{MD}^D_g$.  In particular, convexity of  $g$ is equivalent to convexity  of $ \mathrm{MD}^D_g$. As a consequence,  our structure offers   new convex risk measures with explicit formulas, in addition to the existing convex distortion risk measures and entropy risk measures; see e.g., \cite{D06}, \cite{LS13} and    \cite{FS16}. 
  Specifically, 
  these formulas help us construct risk measures that are consistent yet not convex, or convex but not coherent  (Theorem \ref{prop:2} and Proposition \ref{thm:2}).

In Section \ref{sec:4},  we specialize
in the convex case of the monotonic mean-deviation measure and further  study the dual representation of  $ \mathrm{MD}^D_g$   (Theorem  \ref{th-dual}), which is  obtained directly through the conjugate function of  $g$.  
 In Section \ref{sec:6},  when the deviation measures are the  convex signed Choquet integral defined in \cite{WWW20b},   we discuss non-parametric estimation of $ \mathrm{MD}^D_g$ (Theorem \ref{thm:4}). The asymptotic normality and the asymptotic variance  for the empirical
estimators are obtained  explicitly.  
These results  yield an intuitive trade-off between statistical efficiency, in terms of estimation error,  and sensitivity to risk, in terms of the risk-weighting function.  {\color{black}In Section \ref{sec:opt}, we present an application of $ \mathrm{MD}^D_g$  in portfolio selection based on financial data, and discuss some empirical observations.}

We conclude the paper in Section \ref{sec:7}.  
The supplement regarding the characterization of monotonicity in mean-deviation models is put in Appendix \ref{app:[M]}, and
the details in the axiomatization results are relegated to  Appendix \ref{app:B}.  Appendix \ref{app:pf_thm5} provides a  proof  that is omitted from Section \ref{sec:6}. {\color{black}Appendix  \ref{sec:5}  further analyzes worst-case
values of  $ \mathrm{MD}^D_g$  under two popular settings to show its feasibility in    model uncertainty  problems.}

\section{Preliminaries
 }\label{sec:2}

Throughout this paper, we work with a nonatomic probability space $(\Omega,\mathcal F,\p)$.   Let  $\X$ be a convex cone of random variables.  All equalities and inequalities of functionals on $(\Omega,\mathcal F,\p)$ are under $\p$ almost surely ($\p$-a.s.) sense.  Let  $X\in \X$ represent the random loss faced by financial institutions in a fixed period of time. That is, a positive
value of $X\in \X$  represents a loss and a negative value represents a surplus in our sign convention, which is used by, e.g., \cite{MFE15}.     
Further, denote by $\X^\circ$ the set of all nonconstant random variables in $\mathcal X$. 
Let $F_X$ be the  distribution function of $X$, and we write $X \laweq  Y$ if two random variables $X$ and $Y$ have the same distribution. 
Terms such as increasing or decreasing functions are in the non-strict sense. 
For $p \in[1,\infty)$,  we denote by $L^p=L^p (\Omega, \mathcal{F}, \mathbb{P})$ the set  of all random variables $X$ such that $\|X\|_p=(\E[|X|^p])^{1/p}<\infty$. Furthermore, $L^\infty=L^\infty(\Omega,\mathcal F,\p)$ is the space of all essentially bounded random variables, and $ L^0=L^0(\Omega,\mathcal F,\p)$ denotes the space of all random variables. When considering a mapping defined on $L^p$ for some $p\in[1,\infty]$, we refer to its continuity in the context of the $L^p$-norm.  We denote by $\E[X]$ the expectation of the random variable $X$.
  
We  define the two important risk measures in banking and insurance practice. The Value-at-Risk ($\VaR$)  at level $\alpha\in (0,1)$ is the functional $\VaR_\alpha:L^0 \to \mathbb{R}$ defined by $$
 \VaR_\alpha(X)= \inf\{x\in \R: \p(X\le x)\ge  \alpha\},
$$
which  is precisely  the left $\alpha$-quantile of $X$. In some contexts, we also use $F^{-1}_X(\alpha)$ instead of $\VaR_\alpha(X)$ for convenience.  The Expected Shortfall ($\ES$) at  level $\alpha\in[0,1)$ is the functional $\ES_\alpha:L^1 \to \mathbb{R}$ defined by
$$
  \ES_\alpha(X)=\frac{1}{1-\alpha}\int_\alpha^1 \VaR_s(X)\d s.
$$

 \cite{ADEH99} introduced \emph{coherent risk measures} {\color{black} as functionals $\rho:\R\to(-\infty,\infty]$} that satisfy the following four properties. 
\begin{itemize}
\item[{[M]}] Monotonicity: $\rho(X)\le \rho(Y)$ for all $X,Y \in \mathcal X$ with $X\le Y$.
\item[{$[\mathrm{CA}]$}] Cash additivity:  $\rho(X+c)=\rho(X)+ c$ for all  $c\in \R$ and $X\in \X$.
\item[{$[\mathrm{PH}]$}] Positive homogeneity:  $\rho(\lambda X)=\lambda \rho(X)$ for all $\lambda \in (0, \infty)$ and $X\in \X$.
\item[{[SA]}] Subadditivity:  $\rho(X+Y)\le \rho(X)+\rho(Y)$ for all $X,Y\in \X$.
\end{itemize}
 ES satisfies all four properties above, whereas VaR does not satisfy [SA]. 
 We say that $\rho$ is a \emph{monetary risk measure}  if it satisfies [M] and [CA]. 
Moreover, $\rho$ is a \emph{convex risk measure} if  it is monetary and further satisfies 
\begin{itemize}
\item[{[Cx]}] Convexity: $\rho(\lambda X+(1-\lambda)Y)\le \lambda\rho(X)+(1-\lambda)\rho(Y)$ for all  $X,Y\in \X$ and $\lambda\in[0,1]$. 
\end{itemize} Clearly,  [PH] together with  [SA] implies [Cx]. 
Risk measures satisfying [CA] and [Cx] but not [M] are studied by e.g., \cite{FS08}.
For more discussions and interpretations of these properties, {\color{black}we refer to \cite{FS02} and \cite{FR02}}. Another class of risk measures is defined based on 
consistency with respect to second-order stochastic dominance (SSD):  
\begin{enumerate}
 \item[{[SC]}] {SSD-consistency: $\rho(X)\le \rho(Y)$ if  $X\le_{\rm SSD}Y$ (i.e.,   $\E[u(X)]\le \E[u(Y)]$ for all increasing convex functions $u$).}\footnote{Note that random variable represents the random loss instead of the random wealth. In our context, SSD is also known as increasing convex order in probability theory and stop-loss order in actuarial science. Up to a sign change converting losses to gains, SSD corresponds to increasing concave order which is the classic second-order stochastic dominance in decision theory.} 
\end{enumerate}
Monetary risk measures satisfying [SC] are called \emph{consistent risk measures}, {\color{black} introduced by  \cite{D05}}.  The consistent risk measures are characterized by \cite{MW20} as the infima of law-invariant convex risk measures (law-invariance is defined via (D5) below).
The property [SC] is often called \emph{strong risk aversion} for a preference functional in decision theory; see \cite{RS70}.
{A related notion to SSD is convex order, also called mean-preserving spread, denoted by $X\le_{\rm cx}Y$, meaning $X\le_{\rm SSD}Y$ and $\E[X]=\E[Y]$.} 
  


 In decision making,  deviation measures  
 are also introduced to measure the uncertainty inherent in a random variable, and are  studied  systematically for their application to risk management in areas like portfolio optimization and engineering.  Such measures include standard  deviation as a  special case but need not be symmetric with respect to ups and downs.  We give the definition of \emph{deviation measures}  in  \cite{RUZ06}  below.  
  \begin{definition}[Deviation measures] Fix $p\in[1,\infty]$. A \emph{deviation measure} is a functional $ D :  L^p\rightarrow [0,\infty)$ satisfying
\begin{itemize}
\item[(D1)] $ D (X+c)= D (X)$ for all $X\in L^p$ and $c\in\R$.
\item[(D2)] 
$ D (X)>0$ for  all $X\in (L^p)^\circ$. 
\item[(D3)]  $ D (\lambda X)=\lambda  D (X)$ for all $X\in L^p$ and $\lambda\ge 0$. 
\item[(D4)] $ D (X+Y) \leq  D (X)+ D (Y)$ for all $X,Y\in L^p$.
\end{itemize}\label{def:devi}
\end{definition}

Note that (D3) implies $D(0)=0$. We remark that the deviation measures in \cite{RUZ06} is defined on $L^2$ since it is easy to access tools associated with duality. However, as mentioned in \cite{RUZ06}, this does not prevent us from working with the general $L^p$ norms for $p \in[1, \infty]$.
Moreover,  we will focus on \emph{law-invariant deviation measures}, which   further satisfy \begin{itemize} 
 \item[(D5)] $D(X)=D(Y)$  for  all $X,Y\in L^p$  if  $X \stackrel{d}{=} Y$.\end{itemize}  
 We  use $\mathcal{D}^p$ to denote the set of $D$  satisfying (D1)-(D5). 
 Note that the combination of (D3) with (D4) implies that each $D\in\mathcal D^p$ is a convex functional.  
The law-invariant  deviation measures include, for instance, standard deviation,  semideviation, ES deviation and range-based deviation; see Examples 1 and 2 of  \cite{RUZ06} and Section 4.1 of \cite{GMZ12}. Moreover,  $D$ is called an  \emph{upper range dominated deviation measure}  if it has the following property   \begin{equation}\label{eq:upper}
D (X) \leq \esssup X-\E[X] ~~\text{for all~} X\in L^p,
\end{equation}  
where $\esssup X$ is the essential supremum of $X$.
For more discussions and interpretations of the properties of deviation measures mentioned above, we refer to \cite{RUZ06}.  

Deviation measures are not risk measures in the sense of  \cite{ADEH99}, but the connection between deviation measures and risk measures is strong. It is shown in Theorem 2 of  \cite{RUZ06} that   upper range bounded deviation measures $D$  correspond one-to-one with coherent, strictly expectation bounded\footnote{ A risk measure $\rho:\X \to(-\infty, \infty]$ is  \emph{strictly expectation bounded}  if it satisfies   $\rho(X)>\E[X]$ for all $X\in\X^\circ$.} risk measures $R$ with the relations  that  $D(X)=R(X)-\E[X]$ or $R(X)=D(X)+\E[X]$.  Note that the additive structure $R = D + \E$ can be seen  as a special form of the combination of mean and deviation. Below, we define a general mean-deviation model. 

\begin{definition}[Mean-deviation model] 
\label{def:MD}
Fix $p\in[1,\infty]$. For a deviation measure $D\in \mD^p$, a \emph{mean-deviation model} is a  functional  $U: L^p\rightarrow (-\infty,\infty]$  defined as 
\begin{equation}\label{eq:V}
U(X)=V(\E[X], D(X)),
\end{equation}
where $V:\R\times[0,\infty) \to (-\infty,\infty]$ satisfies
(i) $V$ is increasing component-wise; 
(ii) $V(m, 0)=m$ for all $m\in\R$; 
(iii) $V(m,d)$ is not determined only by $m$.\footnote{That is, there exist $m\in \R$ and $d_1, d_2\ge 0$  such that 
$V(m,d_1)\ne V(m,d_2)$.} 
\label{def:2}
\end{definition}

The three conditions on $V$ in  Definition \ref{def:MD} are simple and intuitive. 
More specifically, (i) is the basic requirement that $U$ increases when the mean or deviation increases, with the other argument fixed; 
(ii) means that a constant random variable has risk value equal to itself; 
(iii) means that the model is not trivial in the sense that it does not ignore  the deviation $D(X)$. 
Our definition is different from that of \cite{GMZ12}, who required further strict monotonicity of $V$
with a real-valued range. 
Therefore, our requirement is weaker than \cite{GMZ12}, and this relaxation allows us to include the most popular models of \cite{M52}  in \eqref{eq:Mark1}, that is,
$$
V(\E[X],\SD(X))= V_\sigma(\E[X],\var(X))=  \E[X]  + \infty \times \id_{\{\SD(X)>\sigma\}} ,
$$ which is neither strictly increasing nor real-valued.
Here we use $\SD$ (the standard deviation) instead of  $\var$ because $\SD\in \mathcal D^2$.

As mentioned in Introduction, the mean-deviation model has many nice properties; however,  it is not necessarily monotonic or cash additive in general,   and thus is not a monetary risk measure.  \cite{GMZ12} provided an axiomatic framework for  the mean-deviation model via the preference relation by further assuming some other properties;  but it does not belongs to the class of  monetary risk measures.   
\cite{HWWW22} characterized mean-deviation models with $D=\ES_\alpha-\E$ for $\alpha\in (0,1)$ by extending axioms for ES in \cite{WZ21}, which can be further required to be monetary.
Since [M] and  [CA]  are common in the literature of  decision theory and risk measures,
and they correspond to the interpretation of a risk measure as regulatory capital requirement, 
it is natural to further consider {general conditions for  a mean-deviation model to be  monetary.}  
This leads to the main object of this paper, monotonic mean-deviation measures, formally introduced in the next section.

\section{Monotonic mean-deviation measures}\label{sec:2new}

We first define monotonic mean-deviation measures. For this, we 
 write 
 $$
\overline{\mD}^p =  \left\{ D \in \mathcal D^p:   \sup _{X \in  {(L^p)}^\circ }\frac{D(X)}{\esssup X-\E[X]}=1\right\}.$$
Deviation measures in $\overline{\mathcal D}^p$
are called \emph{range normalized}. 
For $\lambda>0$, a real   function $g$ is \emph{$\lambda$-Lipschitz} if
 \begin{align}\label{eq-condition}
|g(x)-g(y)|\le  \lambda |x-y|~~\mbox{ for}~x,y \mbox{~in the domain of $g$}.
\end{align}
\begin{definition}
\label{def:1} Fix $p\in[1,\infty]$ and let  $D\in\overline\mD^p$. A \emph{monotonic  mean-deviation measure}  $\mathrm{MD}^D_g : {L}^p  \to \R$  is defined by   \begin{equation}\label{eq:MD}\mathrm{MD}^D_g(X)=g( D(X))+\E[X],\end{equation}
where $g:[0, \infty) \rightarrow \mathbb{R}$  is  a non-constant increasing  and  $1$-Lipschitz  function satisfying   $g(0)=0$, called   a \emph{risk-weighting function}. We use $ \mathcal{G}$ to denote the set of such functions $g$. 
\end{definition}

The interpretation of $g$ should be self-evident: it dictates  how  $D(X)$ is reflected in the calculation of  $\mathrm{MD}^D_g$, and it is a generalization of  the risk-weighting parameter $\lambda$  in $\E+\lambda D$,  hence the name.    The reason of requiring the conditions $D\in\overline\mD^p$
and $g\in \mathcal G$  in Definition \ref{def:1} will be justified in Theorem \ref{thm:monetay} below.   {\color{black}  Intuitively, since $D \in \overline{D}^p$ is not monotonic, while $\mathbb{E}$ is monotonic, a function $g$ that satisfies the 1-Lipschitz condition can mitigate the impact of $D$, ensuring the overall mean-deviation model is monotonic.  
Note that $\mathrm{MD}^D_g$ is not  subadditive in general. For instance, let  $ g(x)=(x-1)_+$ and $D=\ES_\alpha-\E$  with $\alpha=0.5$. Take  $X,Y$ be such that  $\mathbb P(X=Y=0)=\mathbb P(X=Y=2)=1/2$. 
One can check $D(X)=D(Y)=1$ and $D(X+Y)=2$. Thus, $\mathrm{MD}^D_g(X+Y)=3$ and $\mathrm{MD}^D_g(X) = \mathrm{MD}^D_g(Y)=1$, which violates subadditivity.}



\begin{theorem} \label{thm:monetay}
Fix $p\in[1,\infty]$.
Suppose that $U:L^p\to(-\infty,\infty]$ is a mean-deviation model in \eqref{eq:V} with  $D\in {\mD}^p$. The following statements are equivalent.
\begin{itemize}
\item[(i)] $U$ is a monetary risk measure.
\item[(ii)] $U$ is a consistent risk measure.
\item[(iii)] 
For some $\lambda>0$, $\lambda D\in \overline{\mD}^p $ and 
$U=g\circ D +\E $ where  $g:[0,\infty) \rightarrow \mathbb{R}$  is  a non-constant increasing and  $\lambda$-Lipschitz  function satisfying $g(0)=0$.
\end{itemize}
\end{theorem}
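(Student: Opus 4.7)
The plan is to prove the cyclic implications (iii)$\Rightarrow$(ii)$\Rightarrow$(i)$\Rightarrow$(iii); the implication (ii)$\Rightarrow$(i) is immediate since a consistent risk measure is by definition monetary. For (iii)$\Rightarrow$(ii), cash additivity of $U=g\circ D+\E$ is immediate from (D1). For monotonicity, I would fix $X\leq Y$ and set $Z=Y-X\geq 0$; subadditivity (D4) gives $D(X)=D(Y+(-Z))\leq D(Y)+D(-Z)$, while the range-normalization $\lambda D\in\overline{\mD}^p$ combined with $-Z\leq 0$ yields $\lambda D(-Z)\leq \esssup(-Z)-\E[-Z]=\E[Z]-\essinf Z\leq \E[Z]$. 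Together with the monotonicity and $\lambda$-Lipschitz property of $g$, this gives $g(D(X))-g(D(Y))\leq\lambda(D(X)-D(Y))^+\leq \E[Z]$, hence $U(X)\leq U(Y)$. For SSD-consistency, I would invoke the Strassen-type characterization: $X\leq_{\rm SSD}Y$ iff there exists $Z$ with $X\leq_{\rm cx}Z$ and $Z\leq_{\rm st}Y$. Since $D\in\mD^p$ is a law-invariant convex functional, it preserves convex order, so $D(X)\leq D(Z)$ and $\E[X]=\E[Z]$; combined with $g$ increasing this gives $U(X)\leq U(Z)$, while $Z\leq_{\rm st}Y$ together with the already-established monotonicity and law-invariance of $U$ gives $U(Z)\leq U(Y)$.

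For (i)$\Rightarrow$(iii), I would proceed in two steps. First, cash additivity applied to $V(\E[X]+c,D(X))=V(\E[X],D(X))+c$ forces $V(m,d)=m+g(d)$ with $g(d):=V(0,d)$, and the three conditions in Definition \ref{def:MD} translate to $g(0)=0$, $g$ increasing, and $g$ non-constant, so $U=g\circ D+\E$. Second, to obtain the Lipschitz bound and range-normalization simultaneously, fix a bounded $\xi\in(L^p)^\circ$ (necessarily $D(\xi)>0$ by (D2)) and $d_1>d_2\geq 0$, and set
\[
X=\frac{d_1}{D(\xi)}\xi,\qquad Y=\frac{d_2}{D(\xi)}\xi+\frac{d_1-d_2}{D(\xi)}\esssup\xi.
\]
Using (D1) and (D3) one checks $D(X)=d_1$, $D(Y)=d_2$, $X\leq Y$, and $\E[Y-X]=(d_1-d_2)(\esssup\xi-\E[\xi])/D(\xi)$. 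Monotonicity of $U$ then yields $g(d_1)-g(d_2)\leq(d_1-d_2)(\esssup\xi-\E[\xi])/D(\xi)$, and taking infimum over $\xi$ gives $g(d_1)-g(d_2)\leq(d_1-d_2)/\mu$ with $\mu:=\sup_{\xi\in(L^p)^\circ}D(\xi)/(\esssup\xi-\E[\xi])$. I would then verify that $\mu\in(0,\infty)$: any bounded nonconstant $\xi\in(L^p)^\circ$ gives $D(\xi)>0$ and $\esssup\xi-\E[\xi]<\infty$, so $\mu>0$; and $\mu=\infty$ would force the right-hand side to be $0$ and hence $g$ constant, contradicting non-triviality. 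Setting $\lambda=1/\mu$, we obtain that $g$ is $\lambda$-Lipschitz and $\sup_\xi \lambda D(\xi)/(\esssup\xi-\E[\xi])=1$, i.e., $\lambda D\in\overline{\mD}^p$, completing (iii).

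The main obstacle is the Lipschitz derivation in (i)$\Rightarrow$(iii): one must exhibit a coupling (here the explicit $X,Y$ above) that simultaneously prescribes $D(X),D(Y)$ and realises $X\leq Y$, in order to convert the qualitative monotonicity of $U$ into a quantitative Lipschitz bound on $g$ whose constant is tied to the range-normalization of $D$. The SSD-consistency step in (iii)$\Rightarrow$(ii) is cleaner but relies on two standard facts — Strassen's decomposition of SSD and convex-order consistency of law-invariant convex functionals on an atomless space — which I would cite rather than reprove.
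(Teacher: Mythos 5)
Your proof is correct and follows essentially the same route as the paper: the same cyclic decomposition, the same use of (D4) plus the range bound for monotonicity in (iii)$\Rightarrow$(ii), the same reduction of SSD-consistency to convex-order consistency of the law-invariant convex functional $D$ combined with a Strassen-type decomposition of $\le_{\rm SSD}$ (the paper cites Theorem 4.A.6 of Shaked--Shanthikumar for exactly this), and the same idea of extracting the Lipschitz bound in (i)$\Rightarrow$(iii) from an explicit coupling $X\le Y$ with prescribed deviations. The one place you genuinely streamline is (i)$\Rightarrow$(iii): the paper first proves finiteness of $K=\sup_\xi D(\xi)/(\esssup\xi-\E[\xi])$ by a separate contradiction argument (its Lemma on $\overline{\mD}^p$) and then runs an $\epsilon$-approximation of the supremum, which only yields a bound on the left limit $g_-$ and requires an extra step to upgrade to the Lipschitz property; your version derives the inequality $g(d_1)-g(d_2)\le (d_1-d_2)(\esssup\xi-\E[\xi])/D(\xi)$ for \emph{each} bounded nonconstant $\xi$ and only then takes the infimum, which simultaneously delivers finiteness of $g$, the case $\mu=\infty$ contradiction with non-triviality of $V$, the correct Lipschitz constant, and the range normalization, with no left-limit technicality. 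The only point worth flagging is that convex-order consistency of $D$ requires continuity (or at least some regularity) of $D$ in addition to convexity and law-invariance; this is automatic here because a finite convex functional on $L^p$ is norm-continuous, but you should say so, as the paper does.
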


\begin{proof}

(ii) $\Rightarrow$ (i) is trivial.

(iii) $\Rightarrow$ (ii): 
Without loss of generality we can take $\lambda=1$. 
The property of [CA] is clear. Next, we show the property of [M]. For any $X,Y\in L^p$ with $X\le Y$, if $D(Y)\ge D(X)$, it is obviously to see that $U(X)\le U(Y)$. Assume now $D(X)>D(Y)$. It holds that
\begin{align*}
U(Y)-U(X)=g(D(Y))+\E[Y]-g(D(X))-\E[X]
\ge D(Y)-D(X)+\E[Y]-\E[X],
\end{align*}
where we have used the $1$-Lipschitz  condition of $g$ in the inequality. Since $D\in \overline{\mD}^p$, it follows from  Theorem 2 of \cite{RUZ06} that there exists one-to-one  correspondence with coherent risk measures denoted by $R$ in the relation that $R(X)=D(X)+\E[X]$  for $X\in L^p$. The {\color{black}monotonicity} of $R$ implies that
\begin{align*}
U(Y)-U(X)\ge D(Y)-D(X)+\E[Y]-\E[X]=R(Y)-R(X)\ge 0.
\end{align*}
Hence, we have verified [M] of $U$.

It remains to show that $U$ satisfies [SC]. 
Since $D$ is continuous and further satisfies convexity, law-invariance and the space is nonatomic,  we have that $D$ is consistent with respect to {convex order}  (see, e.g., Theorem 4.1 of \cite{D05}). Noting that $g$ is  increasing, we have $U$ is also consistent with respect to convex order. Combining with [M] of $U$, it follows from Theorem 4.A.6 of \cite{SS07} that $U$ satisfies [SC]. This completes the proof of (iii) $\Rightarrow$ (ii).

(i) $\Rightarrow$ (iii):
Define $g(d)=V(0,d)$ for  $d\geq0$. It is clear that   $g$ is  an  increasing function with  $g(0)=V(0,0)=0$.     By [CA], we have   $$ \begin{aligned}U (X)&=U(X-\E[X])+\E[X]\\&=V(0, D (X))+\E[X]=g( D (X))+\E[X].\end{aligned}$$
By Lemma \ref{prop:finite_K} below, we have $\lambda D\in \overline{\mathcal D}^p$ for some $\lambda>0$. 
It remains to show that $g$ is $\lambda$-Lipschitz. Denote by $k=1/\lambda$.
Since $\lambda D\in\overline \mD^p$,  for any $\epsilon\in(0,k)$, there exists $X_1$ such that $$k-\epsilon <\frac{D(X_1)}{\esssup X_1-\E[X_1]}\leq k.$$ 
For any $a>0$ and $d\ge0$, define  $$X_2=a\frac{X_1-\esssup X_1}{\esssup X_1-\E[X_1]}~~\text{and}~~X_3=\frac{d}{a}X_2+d.$$ 
It is obvious that $\E[X_2]=-a, X_2\leq 0$ and $\E[X_3]=0.$  Moreover, $a(k-\epsilon)<D(X_2)\leq ak$ and $d(k-\epsilon)<D(X_3)\leq dk$. 
Also, $\E[X_2+X_3]=-a$  and $(d+a)(k-\epsilon)<D(X_2+X_3)\leq (d+a)k$.  
Since $X_2+X_3\leq X_3$, by [M], we have $g(D(X_2+X_3))+\E[X_2+X_3]\leq g(D(X_3))+\E[X_3]$. Letting $\epsilon\to0$, we conclude that
$g_-((d+a)k)\le g(dk)+a$, 
where $g_-(x)=\lim_{y\uparrow x}g(y)$ for all $x\ge 0$. 
This is equivalent to $g_-(d+a)-g(d)\le \lambda a$ for any $a>0$ and $d\ge 0$. Note that $g$ is increasing.
We have that $g:[0,\infty)\to\R$ is  $\lambda$-Lipschitz. This completes the proof.
\end{proof}

{\color{black} Theorem \ref{thm:monetay} shows that the conditions in Definition \ref{def:1} are necessary and sufficient for \eqref{eq:MD} to be monetary, up to normalization of $D$. Specifically,
 if $D$ and $g$ satisfy the conditions in Theorem \ref{thm:monetay} (iii),
 then there exists $\widetilde D\in\overline{\mathcal D}^p$ and $\widetilde 
 g\in\mathcal G $ such that $\widetilde g\circ\widetilde D= g\circ D$.    This means that for any pair $ (D, g) $ satisfying the conditions in Theorem \ref{thm:monetay} (iii), there is a corresponding pair $ (\widetilde{D}, \widetilde{g}) $ that  satisfies the conditions  in Definition \ref{def:1}.
Therefore, Definition \ref{def:1} includes all possible choices of mean-deviation models satisfying [M] and [CA].}

The following result is needed in the proof of Theorem \ref{thm:monetay}. 

\begin{lemma}\label{prop:finite_K} 
Fix $p\in[1,\infty]$, and let $D\in \mathcal D^p$.
If $ U=V(\E,D)$  in \eqref{eq:V} satisfies [M], then we have
$U(X)<\infty$ for all $X\in L^p$, and
there exists $\lambda>0$ such that $\lambda D \in \overline{\mD}^p$. 
\end{lemma}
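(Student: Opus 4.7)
The plan is to exploit monotonicity [M] against bounded random variables and condition (iii) on $V$ to simultaneously get the two conclusions. Define
\[
K := \sup_{X_0 \in (L^\infty)^\circ}\frac{D(X_0)}{\esssup X_0-\E[X_0]},
\]
noting that this is effectively the supremum appearing in the definition of $\overline{\mathcal D}^p$, since an unbounded $X$ contributes $0$ to the ratio. By (D2) and the fact that $\esssup X_0-\E[X_0]\in(0,\infty)$ for every nonconstant bounded $X_0$, we have $K>0$. The whole statement reduces to showing $K<\infty$, because then $\lambda=1/K$ gives $\lambda D\in\overline{\mathcal D}^p$, and the bound I derive below gives $U<\infty$ on all of $L^p$.

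The key construction is this: for any $(m,d)\in\R\times[0,\infty)$ with $d>0$ and any $X_0\in(L^\infty)^\circ$, set
\[
X=\frac{d}{D(X_0)}\bigl(X_0-\E[X_0]\bigr)+m.
\]
Then (D1) and (D3) give $\E[X]=m$ and $D(X)=d$, while $X$ is bounded with
\[
\esssup X = m + \frac{d\,(\esssup X_0-\E[X_0])}{D(X_0)}.
\]
Since $X\le\esssup X$ (a constant), [M] combined with $V(\esssup X,0)=\esssup X$ yields
\[
V(m,d)=U(X)\le U(\esssup X)=\esssup X=m+\frac{d(\esssup X_0-\E[X_0])}{D(X_0)}.
\]
Taking the infimum over $X_0$, which converts the sup defining $K$ into an inf of the reciprocal, gives the universal bound
\[
V(m,d)\le m+d/K \qquad \text{for all } m\in\R,\ d\ge 0,
\]
with the convention $d/K=0$ when $K=\infty$.

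Now I can close both conclusions. Condition (iii) supplies some $m^*\in\R$ and $d^*>0$ with $V(m^*,d^*)>V(m^*,0)=m^*$, so the bound forces $d^*/K>0$, i.e.\ $K<\infty$. This, together with the definition of $K$, gives $\lambda D\in\overline{\mathcal D}^p$ for $\lambda=1/K$. For the finiteness of $U$, the bound reads $V(m,d)\le m+d/K$, and since $\E[X]\in\R$ and $D(X)\in[0,\infty)$ for every $X\in L^p$, we conclude $U(X)\le\E[X]+D(X)/K<\infty$.

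The only subtle point is making sure the construction is always available, i.e.\ that $(L^\infty)^\circ$ is nonempty on the nonatomic space (which it is) and that $D(X_0)>0$ for any $X_0$ used in the supremum (which is exactly (D2)); the rest is bookkeeping with (D1), (D3), and the properties (i)--(iii) of $V$. No fancy machinery is needed, just the right bounded comparison random variable to activate [M].
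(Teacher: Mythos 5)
Your argument is correct in substance and rests on the same mechanism as the paper's proof: build a random variable with prescribed mean $m$ and deviation $d$ by rescaling and recentering a near-optimal $X_0$ from the supremum defining $K$, compare it with the constant $\esssup X$ via [M], and invoke condition (iii) on $V$ to rule out degeneracy. The paper packages this as a proof by contradiction (if $K=\infty$, then $U(X)\le U(Y)$ whenever $\E[X]<\E[Y]$, forcing $U=\E$), whereas you extract the quantitative bound $V(m,d)\le m+d/K$ and read off both conclusions at once; that is a slightly cleaner organization of the same idea, not a different route. One imprecision should be fixed: for $p<\infty$ the supremum in the definition of $\overline{\mD}^p$ runs over $(L^p)^\circ$, and a random variable that is bounded above but unbounded below lies outside $(L^\infty)^\circ$ yet contributes a strictly positive ratio, so your justification for restricting to $(L^\infty)^\circ$ (``an unbounded $X$ contributes $0$'') is not valid as stated; if the two suprema could differ, $\lambda=1/K$ would not certify $\lambda D\in\overline{\mD}^p$, nor would finiteness of your $K$ give finiteness of the relevant one. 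The repair is immediate, since your construction uses only $\esssup X_0<\infty$ and $D(X_0)>0$ and never boundedness below: define $K$ as the supremum over all of $(L^p)^\circ$ (random variables with $\esssup X_0=\infty$ contribute $0$ and are irrelevant), and the identical argument yields $V(m,d)\le m+d/K$ for that $K$, whence $K<\infty$ by (iii) and both conclusions follow.
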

\begin{proof} 
To show that $U(X)$ is finite, take $Y\in L^\infty$ such that $\E[Y]=\E[X]$ and $D(Y)=D(X)$. Such $Y$ exists because $D$ is positively homogeneous. Therefore, $U(X)=U(Y)\le U(\esssup Y)=V(\esssup Y,0)=\esssup Y <\infty.$

We next prove 
 \begin{equation}\label{eq:K}
  K:= \sup _{X \in  {(L^p)}^\circ }\frac{D(X)}{\esssup X-\E[X]}<\infty.\end{equation}
 by contradiction, which 
is equivalent to $\lambda D \in \overline{\mD}^p $ with $\lambda=1/K$. 
Assume that $K=\infty$ in \eqref{eq:K}. For $X_1, X_2\in L^p$ such that $\E[X_1]<\E[X_2]$,  let $m_1=\E[X_1]$, $d_1=D(X_1)$, $m_2=\E[X_2]$, $d_2=D(X_2)$, and $e=m_2-m_1$. If $K=\infty$, there exists  $Y_1$ such that
$
{D\left(Y_1\right)}/{(\esssup Y_1-\E[Y_1])} \geq {d_1}/{e} .
$
Denote by
$
Y_2=e {(Y_1-\esssup Y_1)}/{(\esssup Y_1-\E[Y_1])}+m_2.
$
It holds that $\E\left[Y_2\right]=-e+m_2=m_1$ and ${D}\left(Y_2\right) \geq d_1$, and  thus  $U(X_1)=V( m_1,d_1) \leq V(\E[Y_2],D(Y_2))=U(Y_2)$. On the other hand, observe that $Y_2 \leq m_2$. Consequently, by  monotonicity,  we have $U(Y_2)\leq U(m_2)$.  Thus, we have $U(X_1)\leq U(Y_2)\leq U(m_2)\leq U(X_2),$
which implies that $U(X) \leq U(Y)$ for every $X$ and $Y$ with $\E[X]<\E[Y]$.  
Hence, $$\E[X]-\epsilon=U(\E[X]-\epsilon)\le U(X)\le U(\E[X]+\epsilon)=\E[X]+\epsilon$$ for any $X\in\ L^p$ and $\epsilon>0$.
Letting $\epsilon \downarrow 0$ yields $U(X)=\E[X]$, contradicting (iii) in Definition \ref{def:MD}.   Therefore, we conclude $K<\infty.$ This completes the proof.
\end{proof}

Lemma \ref{prop:finite_K} provides a necessary condition for [M] on the mean-deviation model. For the sake of completeness, we will elaborate on the characterization of [M], under the assumption that this necessary condition is met; this is detailed in Appendix \ref{app:[M]}. Lemma \ref{prop:finite_K} also implies in particular that we can limit the range of the mean-deviation model to $\R$ when [M] is imposed.

For a deviation measure defined on $L^p$, the condition that $\lambda D\in\overline{\mathcal D}^p$ for some $\lambda>0$ in Lemma \ref{prop:finite_K} is called  weakly upper-range dominated by \cite{GMZ12}.
It is clear from  \eqref{eq:upper} that  every upper-range dominated deviation measure is  weakly upper-range dominated with $\lambda\ge 1$.   In particular,  if  $D$ takes the form of   $\ES_\alpha-\E$ with $\alpha\in(0,1)$,   $\esssup X-\E [X]$ or $\E[|X-\E[X]|]/2$, we have $D\in \overline \mD^p$ (see Example 5 of \cite{GMZ12} for the last one).
Examples of deviation measures $D$ satisfying  $\lambda D \in \overline{\mathcal D}^p$ for some $\lambda>0$ also include the mean-absolute deviation, the Gini deviation, the inter-ES range, and the inter-expectile range (for the last two, see  \cite{BFWW22}).



Note that $R=D+\E$ is a finite coherent risk measure on $L^p$
for any $D\in\overline{\mathcal D}^p$. It follows that $R$ is continuous  (see e.g., Corollary 2.3 of \cite{KR09}). Consequently, this implies that a range-normalized deviation measure is always continuous.
Below we characterize the class of range-normalized deviation measures by elucidating the relationship between coherent risk measures and the deviation measures in $\overline{D}^p$.
\begin{proposition}
\label{prop:verify}
Fix $p\in[1,\infty]$. The deviation measure $D\in \mathcal D^p$ is range normalized  if and only if $ D+\E$ is a coherent risk measure and $ \lambda D+\E$ is not a coherent risk measure for $\lambda >1$.
\end{proposition}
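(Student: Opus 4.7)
The plan is to reduce both directions to Theorem 2 of \cite{RUZ06}, which establishes a one-to-one correspondence between upper-range dominated deviation measures $D$ (those satisfying $D(X)\le\esssup X-\E[X]$ for all $X$) and coherent, strictly expectation bounded risk measures $R$ via $R=D+\E$. Throughout I would exploit the elementary scaling identity
\[
\sup_{X\in(L^p)^\circ}\frac{\lambda D(X)}{\esssup X-\E[X]}=\lambda K,\qquad K:=\sup_{X\in(L^p)^\circ}\frac{D(X)}{\esssup X-\E[X]},
\]
valid for any $\lambda>0$, so that membership of $\lambda D$ in $\overline{\mD}^p$ is equivalent to $\lambda K=1$.

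For necessity, assume $D\in\overline{\mD}^p$, i.e., $K=1$. Then $D(X)\le\esssup X-\E[X]$ for every $X$, so $D$ is upper-range dominated and $D+\E$ is coherent by Theorem 2 of \cite{RUZ06}. For any $\lambda>1$, the definition of the supremum yields nonconstant $X_n$ with $D(X_n)/(\esssup X_n-\E[X_n])\to 1$; hence for large $n$ we have $\lambda D(X_n)>\esssup X_n-\E[X_n]$, which rearranges to $(\lambda D+\E)(X_n)>\esssup X_n=(\lambda D+\E)(\esssup X_n)$ and violates monotonicity, so $\lambda D+\E$ fails to be coherent.

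For sufficiency, suppose $D+\E$ is coherent. Monotonicity and cash additivity of $D+\E$, together with $(D+\E)(0)=0$ from positive homogeneity, give $(D+\E)(X)\le(D+\E)(\esssup X)=\esssup X$ for every $X$; this is equivalent to $K\le 1$. If the inequality were strict, then $\lambda:=1/K>1$ would satisfy $\lambda D(X)\le\esssup X-\E[X]$ for all $X$, so $\lambda D$ would be upper-range dominated and Theorem 2 of \cite{RUZ06} would make $\lambda D+\E$ coherent, contradicting the second hypothesis. Hence $K=1$ and $D\in\overline{\mD}^p$.

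I do not anticipate a serious obstacle: the entire argument rests on the RUZ06 correspondence combined with the scaling identity above. The only mildly delicate point is avoiding the assumption that $K$ is attained at some maximizer, which I sidestep by working with the scaled measure $(1/K)D$ in the sufficiency direction and by invoking an approximating sequence $X_n$ in the necessity direction.
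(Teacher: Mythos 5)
Your proof is correct and follows essentially the same route as the paper: both directions reduce to the Theorem 2 correspondence of \cite{RUZ06} between upper-range dominated deviation measures and coherent risk measures, combined with the scaling behaviour of the range-normalization supremum $K$. The only cosmetic differences are that you verify the failure of coherence of $\lambda D+\E$ for $\lambda>1$ by a direct monotonicity violation (rather than invoking the converse direction of the correspondence) and you derive $K\le 1$ directly from coherence instead of arguing by cases; both variants are sound.
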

\begin{proof}
The necessity follows immediately from Theorem 2 of \cite{RUZ06} since $D\in\overline{\mathcal D}^p$ is upper range dominated (see \eqref{eq:upper} for the definition) and $\lambda D$ is not upper range dominated for any $\lambda>1$. Conversely, we assume by contradiction that $D$ is not range normalized. Then, either $k D\in \overline{\mathcal D}^p$ for some $k>1$ or $k D\in \overline{\mathcal D}^p$ for some $k<1$ holds.


In the first case, there exists $\lambda>1$ such that $\lambda D$ is upper range dominated. Applying Theorem 2 of \cite{RUZ06}, we have
$\lambda D+\E$ is a coherent risk measure, thereby leading to a contradiction. In the second case, it holds that $D+\E$ is not a coherent risk measure since $D$ is not upper range dominated, which also yields a contradiction. 
\end{proof}


{The expected return maximization with variance constraint of \cite{M52} has the form
$
\mathrm{MD}_g^D
$ 
where $D=\SD$ and $g(d)=\infty \times \id_{\{d>\sigma\}}$ for some $\sigma>0$ as in \eqref{eq:Mark1}.
In this example, $g$ is not real-valued. 
Therefore, although sharing the form \eqref{eq:MD}, 
$
\mathrm{MD}_g^D
$ is not a monotonic mean-deviation measure. 
Similarly, 
for $\lambda>0$, the functional
$\mathrm{MD}_g^D(X)= \lambda (\SD(X))^2+\E[X]$ 
in \eqref{eq:Mark}
or  
$\mathrm{MD}_g^D(X)= \lambda \SD(X)+\E[X]$  is not a monotonic mean-deviation measure, because $\SD$
does not satisfies \eqref{eq:K}
for any $p\in [1,\infty]$. 
Nevertheless, in all three examples, $g$ is convex. Indeed, convexity of $g$ has important implications, and this will be studied in Section \ref{sec:3.2} below.
}



 \section{Characterization}\label{sec:3}

\subsection{Axiomatization of monotonic mean-deviation measures}\label{sec:3.1}

We first present an axiomatization of the monotonic mean-deviation measure $\mathrm{MD}^D_g$ through preference relations. This axiomatization is very similar to  \cite{GMZ12}, who axiomatized  preferences represented by  a monotone functional $X\mapsto V(\E[X],D(X))$ for some strictly increasing function $V$.
We  relegate all details, including all proofs and a comparison with \cite{GMZ12},  to Appendix \ref{app:B}.
Our main purpose here is to show that $\mathrm{MD}^D_g$ has an axiomatic foundation. 

  A preference relation  $\succeq$ is defined by  a total preorder.\footnote{A preorder is a binary relation  on $\mathcal{X}$, which is reflexive and transitive. A binary relation $\succeq$ is reflexive if $X \succeq X$ for all $X \in \mathcal{X}$, and transitive if $X \succeq Y$ and $Y \succeq Z$ imply $X \succeq Z$. A  total preorder is a preorder which in addition is complete, that is, $X \succeq Y$ or $Y \succeq X$ for all $X, Y \in \mathcal{X}$.\label{foot3}} 
 As usual,   $\succ$ and $ \simeq $  correspond to the antisymmetric and equivalence relations, respectively. For two random losses $X,Y$, the relation  $X\succeq Y$  indicates that 
$X$ is preferred over $Y$, or equivalently, that 
$Y$ is considered more dangerous than $X$.  A numerical representation of a preference  $\succeq$  is a mapping $\rho: \X \rightarrow\R$, such that  $X \succeq Y \Longleftrightarrow \rho(X)\le \rho(Y)$. Note that  $\succeq$ can be represented by a mapping  $\rho$  if  $\succeq$ is separable;  see e.g.,   \cite{DK13}.\footnote{A total preorder $\succeq$ is separable if there exists a countable set $\X \subseteq L^p $  for  $p\in[1,\infty]$ such that for any $x, y \in \X$ with $x\succ  y$ there is $z \in \X$ for which $x \succeq z \succeq y$.} 
     We use the following  axioms,  where all random variables are tacitly assumed to be in $ L^p$ for some fixed $p\in[1,\infty]$.  
\begin{itemize}
\item[{A1}] (Monotonicity).  If $X_1\le X_2,$ then $X_1\succeq X_2$.

\item[A2] (Translation-invariance).  For any $c \in \R$,   $X\succeq Y$ if and ony if $X+c\succeq Y+c$. 
\item[A3] (Weak positive homogeneity).  
{If 
 $\E[X]=\E[Y]$ and $X\succeq Y$, then $ \lambda X\succeq \lambda Y$ for any $\lambda >0$. }
\item[A4](Risk aversion). 
{If $X\le_{\rm cx}Y$,
 then  $X \succeq  Y$.
In addition, 
$\E[X]\succ X$ for any non-constant $X$.}
\item[A5] ({Solvability}). There exists $c \in \R$ such that $X \simeq c$. 
\item[A6]  (Weak convexity). If $\E[X]=\E[Y] $ and $X\simeq Y$, then $ \lambda X+(1-\lambda) Y \succeq X$ for all $\lambda \in[0,1]$.
\item[A7] ({Continuity}). {For every $X$,} the sets  $\left\{Y \in L^p:  Y \succeq X\right\}$  and
$ \left\{Y \in L^p : X \succeq Y\right\}$ 
are  $L^p$-closed.
\end{itemize}

These axioms are standard, and we refer to \cite{Y87}, \cite{DK13},  \citet[Chapeter 2]{FS16} and \cite{GMZ12}   for interpretations and discussions of these axioms. 
The following result gives an axiomatization of $\mathrm{MD}^D_g$ in Definition \ref{def:1}.
  \begin{theorem}\label{thm:1}
{Fix $p\in [1,\infty]$.} A preference  $\succeq$ on $L^p$ satisfies Axioms A1--A7  if and only if  $\succeq$ can be represented by $\mathrm{MD}^D_g=g \circ D+\mathbb{E}$ for some  $D\in \overline \mD^p$  and  {$g\in \mathcal G$ that is strictly increasing.}
\end{theorem}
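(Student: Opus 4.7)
My plan is to prove the two directions separately: the ``if'' direction by a direct verification of each axiom from the form $\mathrm{MD}^D_g=g\circ D+\E$, and the ``only if'' direction by first reducing the axioms to a mean-deviation representation and then invoking Theorem~\ref{thm:monetay}. For the ``if'' direction, suppose $\succeq$ is represented by $\mathrm{MD}^D_g$ with $D\in\overline{\mD}^p$ and strictly increasing $g\in\G$. Theorem~\ref{thm:monetay} ensures that $\mathrm{MD}^D_g$ is a consistent monetary risk measure, yielding A1, A2 and the weak part of A4; the strict part of A4 follows from (D2) combined with strict monotonicity of $g$. A3 uses (D3): $\E[X]=\E[Y]$ with $X\succeq Y$ gives $g(D(X))\le g(D(Y))$, hence $D(X)\le D(Y)$ by strict monotonicity of $g$, and then $D(\lambda X)\le D(\lambda Y)$. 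A5 is immediate. A6 uses that on an equal-mean slice, $X\simeq Y$ forces $D(X)=D(Y)$, and convexity of $D$ (from (D3) and (D4)) then gives the claim. A7 follows from $L^p$-continuity of $D$, which holds because $D+\E$ is a finite coherent risk measure on $L^p$.

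For the ``only if'' direction, I will first use A5 and A7 to produce a numerical representation $\rho:L^p\to\R$ with $X\simeq\rho(X)$. Axioms A1 and A2 translate directly to [M] and [CA] for $\rho$. The first part of A4 gives $\rho(X)\le\rho(Y)$ whenever $X\le_{\rm cx}Y$, and specializing to $X\stackrel{d}{=}Y$ yields law-invariance of $\rho$; combining [M] with convex-order consistency and Theorem~4.A.6 of \cite{SS07} gives [SC]. Thus $\rho$ is a law-invariant consistent risk measure.

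The main obstacle is then to show that $\rho$ admits a mean-deviation structure. Following the path of \cite{GMZ12}, I will construct a deviation measure adapted to $\rho$. The natural candidate is $D(X)=\rho(X)-\E[X]$, for which (D1), (D5) and (D2) follow immediately from [CA], law-invariance and the strict part of A4 respectively. The delicate work is (D3) and (D4). For (D3), I plan to establish positive homogeneity of $\rho$ on the mean-zero subspace $L_0^p=\{X:\E[X]=0\}$ by exploiting A3 together with A5 and A7 via a reference-variable calibration argument; (D3) on all of $L^p$ then follows via [CA]. For (D4), A6 provides mid-point convexity on equal-mean slices, which upgrades to full convexity by continuity from A7 and the homogeneity (D3), and subadditivity then follows from convexity plus positive homogeneity. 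With $D\in\mD^p$ in hand, $\rho=V(\E,D)$ with $V(m,d)=m+d$ is a mean-deviation model, and Theorem~\ref{thm:monetay} together with the rescaling discussion immediately following its statement yields $\rho=g\circ D+\E$ with $D\in\overline{\mD}^p$ and $g\in\G$.

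It remains to verify that $g$ is strictly increasing, and I will argue by contradiction. Suppose $g(d_1)=g(d_2)$ for some $0<d_1<d_2$. Pick any non-constant $X\in L^p$ with $\E[X]=0$ and $D(X)=d_2$ (such $X$ exists by (D2) and (D3) applied to any non-constant centered random variable after rescaling), and set $\alpha_0=d_1/d_2\in(0,1)$. Then $D(\alpha_0 X)=d_1$ and $\rho(\alpha_0 X)=g(d_1)=g(d_2)=\rho(X)$, so $\alpha_0 X\simeq X$. Applying A3 in both directions of the equivalence gives $\lambda\alpha_0 X\simeq\lambda X$ for every $\lambda>0$, hence $g(\lambda d_1)=g(\lambda d_2)$ for all $\lambda>0$. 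Since $\bigcup_{\lambda>0}[\lambda d_1,\lambda d_2]=(0,\infty)$ and overlapping intervals force the constant values to agree, $g$ is constant on $(0,\infty)$. Combined with $g(0)=0$ and the continuity of $g$ (which is $1$-Lipschitz by definition), this forces $g\equiv 0$, contradicting the non-constancy of $g$. Hence $g$ must be strictly increasing, completing the proof.
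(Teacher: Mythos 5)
Your sufficiency direction and your closing argument for strict monotonicity of $g$ are both sound and close in spirit to the paper's. The necessity direction, however, contains a genuine gap at its central step: the candidate $D(X)=\rho(X)-\E[X]$ is \emph{not} a deviation measure in general, because it fails (D3). Since $\rho$ is the certainty-equivalent representation (pinned down by $\rho(c)=c$ and A2), for a preference represented by $g\circ D_0+\E$ with a nonlinear strictly increasing $g\in\G$ and $D_0\in\overline\mD^p$ (for example $g(x)=1-e^{-x}$ and $D_0=\ES_\alpha-\E$, which your own sufficiency direction certifies as satisfying A1--A7) one has $\rho(X)-\E[X]=g(D_0(X))$, and $g(\lambda D_0(X))\ne \lambda g(D_0(X))$. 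No argument from A3 can rescue this: A3 only preserves the \emph{ordering} of equal-mean random variables under scaling, not the values of $\rho$, so ``positive homogeneity of $\rho$ on the mean-zero subspace'' is simply false here. Indeed, if it held, then together with [CA] and subadditivity $\rho$ would be coherent, forcing $g$ linear (Theorem \ref{prop:2}(i)) and contradicting the sufficiency direction for nonlinear $g$. The same defect propagates to your treatment of (D4), and A6 as stated gives only a level-set (quasi-convexity) inequality, not mid-point convexity of $\rho-\E$.

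The repair is precisely the reference-variable calibration you allude to, but it must be used to \emph{define} $D$ rather than to prove homogeneity of the wrong candidate. Fix a non-constant $X_0$ with $\E[X_0]=0$, set $\phi(\lambda)=\rho(\lambda X_0)$, and use A3, A4 and A7 to show $\phi$ is continuous and strictly increasing (if $\phi(\lambda_1)=\phi(\lambda_2)$ with $\lambda_1<\lambda_2$, A3 and iteration give $\rho(\lambda_1 X_0)=\rho(\lambda_1(\lambda_1/\lambda_2)^nX_0)\to\rho(0)$, contradicting the strict part of A4). Then define $D(X)=\phi^{-1}(\rho(X-\E[X]))$ and $g=\phi$. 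With this definition, $X-\E[X]\simeq D(X)X_0$, so A3 immediately yields (D3), and A6 applied to $(X-\E[X])/D(X)\simeq X_0\simeq (Y-\E[Y])/D(Y)$ with the weight $\lambda=D(X)/(D(X)+D(Y))$ yields (D4). This is the route taken in the paper's Proposition \ref{prop:1}; after that, your appeal to Lemma \ref{prop:finite_K} and Theorem \ref{thm:monetay} to normalize $D$ into $\overline\mD^p$ and to obtain the Lipschitz property of $g$, and your contradiction argument for strict monotonicity, go through as written.
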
 
 
\cite{GMZ12} obtained a representation  with the form $X\mapsto V(\E[X],D(X))$  using a weak translation-invariant property, and  the mean-deviation model  is not cash-additive.   
Our stronger version of translation-invariance pins down 
the more explicit form of  monotonic mean-deviation  measures. {\color{black} We will also establish   explicit  one-to-one correspondence between    properties of the risk measure  $\mathrm{MD}^D_g$  and   properties of the risk-weighting function $g$ in the next section. }

 For the detailed differences between our axiomatization and that of \cite{GMZ12}, see Appendix \ref{app:B}. A subtle difference between Theorem \ref{thm:1} and Definition \ref{def:1} is that  $g$ is strictly increasing in  Theorem \ref{thm:1} but not necessarily so in  Definition \ref{def:1}.  An axiomatization of $\mathrm{MD}^D_g$ with $g$ not necessarily strictly increasing is an open question, as we were not able to identify proper relaxations of the proposed axioms.


 
\subsection{Characterizations of convex and coherent risk measures}\label{sec:3.2}
In this subsection, we continue to study   properties of  $\mathrm{MD}^D_g$.     Specifically, we characterize  $g$  such that   $\mathrm{MD}^D_g$  belongs  to the class of   coherent risk measures or convex  risk measures.  
{Moreover, we consider \emph{star-shaped risk measures}, which are monetary risk measures $\rho$ further satisfying 
\begin{itemize}
\item[{[SS]}] Star-shapedness: $\rho(0)=0$ and $\rho(\lambda X)\le \lambda\rho(X) $ for all  $X \in \X$ and $\lambda\in[0,1]$. 
\end{itemize}
Similarly, a function $g :[0,\infty)\to \R$ is \emph{star-shaped} if $g(0)=0$ and $g(\lambda x)\le \lambda g(x) $ for all  $x \in [0,\infty)$ and $\lambda\in[0,1]$. 
{Star-shaped risk measures} are characterized by \cite{CCMTW22} as the infimum of normalized (i.e., $\rho(0)=0$) convex risk measures.
Under normalization,   star-shapedness is weaker than 
both convexity and  positive homogeneity. {\color{black}We refer to  \cite{HK22b},  \cite{LGZ24} and \cite{NTJ24}  for more recent developments on star-shaped risk measures.} }

\begin{theorem}\label{prop:2} 
Suppose that  $D\in \overline \mD^p$ for   $p\in[1,\infty]$ and $g\in\mathcal G$.  The following statements hold. 
\begin{enumerate}
\item[(i)]  $\mathrm{MD}^D_g$ is a coherent risk measure if and only if $g$ is linear.
\item[(ii)]  $\mathrm{MD}^D_g$ is a convex risk measure   if  and only if $g $   is convex.  
\item[(iii)] {$\mathrm{MD}^D_g$ is a star-shaped risk measure if and only if  $g$ is star-shaped.}
\end{enumerate}
\end{theorem}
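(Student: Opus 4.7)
The plan is to exploit the additive structure $\mathrm{MD}^D_g = g\circ D + \E$ together with positive homogeneity (D3) of $D$ to transfer structural properties of $\mathrm{MD}^D_g$ to corresponding properties of $g$, and vice versa. Since the monetary property (M and CA) of $\mathrm{MD}^D_g$ is already guaranteed by Theorem \ref{thm:monetay}, each part reduces to verifying the additional axiom ([PH]+[SA], [Cx], or [SS]) against the relevant property of $g$.

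For the sufficiency directions, the calculations are essentially linear in $\E[\cdot]$ and the corresponding inequality on $D(\cdot)$ propagates through $g$. For (i), if $g$ is linear then $g(x)=cx$ with $c\in(0,1]$ (forced by $g(0)=0$, monotonicity, 1-Lipschitz, and non-constancy), so $\mathrm{MD}^D_g = (1-c)\E + c(D+\E)$, which is a convex combination of the expectation and the coherent risk measure $D+\E$ (coherent because $D\in\overline{\mathcal D}^p$, by Theorem 2 of \cite{RUZ06}), hence coherent. For (ii), assuming $g$ convex and combining convexity of $D$ (from (D3)+(D4)) with monotonicity of $g$ yields
\begin{align*}
\mathrm{MD}^D_g(\lambda X+(1-\lambda)Y)
&\le g\bigl(\lambda D(X)+(1-\lambda)D(Y)\bigr) + \lambda\E[X]+(1-\lambda)\E[Y] \\
&\le \lambda\mathrm{MD}^D_g(X)+(1-\lambda)\mathrm{MD}^D_g(Y).
\end{align*}
For (iii), if $g$ is star-shaped then $g(\lambda D(X)) \le \lambda g(D(X))$ for $\lambda\in[0,1]$, and $\mathrm{MD}^D_g(0)=g(0)=0$, giving [SS].

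For the necessity directions, the unifying trick is to pick a single $Z\in L^p$ with $D(Z)>0$ (which exists by (D2) and $(L^p)^\circ\ne\emptyset$) and exploit $D(aZ)=aD(Z)$ to realize any prescribed value of $D$. Rescaling so that $D(Z)=1$, for (ii) apply convexity of $\mathrm{MD}^D_g$ to $aZ$ and $bZ$: since $\lambda(aZ)+(1-\lambda)(bZ)=(\lambda a+(1-\lambda)b)Z$, the $\E[Z]$ terms cancel on both sides of the convexity inequality and one is left precisely with $g(\lambda a+(1-\lambda)b)\le \lambda g(a)+(1-\lambda)g(b)$. For (iii), apply $\mathrm{MD}^D_g(\lambda(xZ))\le \lambda\mathrm{MD}^D_g(xZ)$, again observing that the $\E$-contributions match, to recover $g(\lambda x)\le \lambda g(x)$. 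For (i), positive homogeneity of $\mathrm{MD}^D_g$ applied to $\lambda(xZ)$ gives $g(\lambda x)=\lambda g(x)$ for all $x>0$ and $\lambda>0$, forcing $g$ to be linear.

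The main obstacle — if there is one — is the reduction from vector statements about $\mathrm{MD}^D_g$ to scalar statements about $g$ in the necessity directions: a priori, convexity or star-shapedness of $\mathrm{MD}^D_g$ is tested on pairs $(X,Y)$ whose means and deviations interact, so the $\E$-contributions need not cancel. The single-generator construction $aZ, bZ$ circumvents this and is the decisive step, since convexity of $D$ is tight along such rays (we have equality $D(\lambda aZ+(1-\lambda)bZ)=\lambda D(aZ)+(1-\lambda)D(bZ)$), isolating the role of $g$. No separate verification of [M] and [CA] is needed in either direction, as these follow automatically from Definition \ref{def:1} together with Theorem \ref{thm:monetay}.
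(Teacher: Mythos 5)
Your proposal is correct and follows essentially the same route as the paper: sufficiency by propagating the relevant inequality through the linear expectation and the convex, positively homogeneous $D$, and necessity by testing $\mathrm{MD}^D_g$ along the ray $\{xZ : x\ge 0\}$ for a fixed $Z$ with $D(Z)=1$ so that the statement collapses to one about $g$. The only cosmetic difference is that the paper normalizes $\E[Z]=0$ while you instead observe that the $\E$-contributions cancel by linearity; both work.
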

\begin{proof}  
Sufficiency is straightforward. To show necessity,    let  $X$ be such that  $\E[X]=0$ and $D(X)=1$; such $X$ exists due to Property (D3). Coherence of $\mathrm{MD}^D_g$ implies that for all $x>0$,
$$
g(x )=\mathrm{MD}^D_g(x X)=x \mathrm{MD}^D_g(X)= x  g(1).
$$
This implies that $g$ is linear.

 (ii) To see sufficiency, if  $g$ is convex, then   $\mathrm{MD}^D_g$ is a convex risk measure because expectation is linear and $D$ is  convex.  {To show necessity,  
take $x,y\ge0$ and $\lambda \in [0,1]$. Let  $X$ be such that  $\E[X]=0$ and $D(X)=1$.
Since $\mathrm{MD}^D_g$ is convex and $D$ satisfies (D3), we have 
\begin{align*}
g(\lambda x+(1-\lambda)y)
&= g \circ D((\lambda x+ (1-\lambda)y)X
)
\\ &=\mathrm{MD}^D_g((\lambda x+ (1-\lambda)y)X )
\\&\le \lambda\mathrm{MD}^D_g(x X)+(1-\lambda)\mathrm{MD}^D_g(y X)=\lambda g(x) + (1-\lambda)g(y).
\end{align*}
Thus, $g$ is convex.}

 (iii)
 To see sufficiency, if $g$ is star-shaped, then ${\rm MD}_g^D$ is star-shaped because expectation is linear and $D$ satisfies (D3). Conversely, let $X$ be such that  $\E[X]=0$ and $D(X)=1$. For any $x\in[0,\infty)$ and $\lambda\in[0,1]$, it follows from the star-shapedness of ${\rm MD}_g^D$ that $g(0)={\rm MD}_g^D(0)=0$ and
 \begin{align*}
g(\lambda x)={\rm MD}_g^D(\lambda x X)\le \lambda{\rm MD}_g^D( x X)=\lambda g(x).
 \end{align*}
This implies that $g$ is star-shaped. 
\end{proof}

 By Theorem \ref{prop:2} (i),   $\mathrm{MD}^D_g$ is coherent if and only if  $$\mathrm{MD}^D_g(X)=
\lambda D (X) + \E[X] =\lambda  R(X)+(1-\lambda )\mathbb{E}[X],~~X\in L^p$$ for some $\lambda \in [0,1]$, where  $R=D+\E$ is a coherent risk measure.  
In fact,
positive homogeneity of $\mathrm{MD}^D_g$ is sufficient for  $g$ to be linear, as seen from the proof of (i). Therefore, for $\mathrm{MD}^D_g$, positive homogeneity and coherence are equivalent.
Moreover, following the same proof, the result in (ii) can be strengthened to a more general form without monotonicity: For any function $g:[0,\infty)\to \R$ and $D\in\mathcal D^p$ with $p\in[1,\infty]$, we have that $\mathrm{MD}^D_g$ is  convex    if  and only if $g $   is convex. 
{\color{black} As shown in   Proposition 3 of \cite{CCMTW22},   if $\mathrm{MD}^D_g$  is  subadditive, then the three (coherent, convex, star-shaped) classes of risk measures in   Theorem \ref{prop:2} (i)--(iii) coincide. 
}

{For the special choice of  $D=\ES_\alpha-\E$ where $\alpha\in(0,1)$,  \cite{HWWW22} obtained characterizations for $\mathrm{MD}^D_g$ to be coherent, convex, or consistent risk measures.   Theorem \ref{prop:2} extends this result to deviation measures. 
Our results allow for explicit formulas for many consistent risk measures that are not convex. In contrast, existing examples of consistent but non-convex risk measures are often obtained by taking an infimum over convex risk measures.  
 }
 
In the following proposition, we obtain an alternative representation result for $\mathrm{MD}^D_g$ when $g$ is convex.  





\begin{proposition}\label{thm:2} Fix $p\in[1,\infty]$.  For $g\in\mathcal G$ and $D\in \overline {\mathcal D}^p$,  $\mathrm{MD}^D_g$ is a convex risk measure  if  and only if \begin{equation}\label{eq:conv_ES}\mathrm{MD}^D_g(X)=\lambda \E[(D(X)-Y)_+]+\E[X]\end{equation}   for some  non-negative random variable $Y\in L^1$ and some constant $\lambda\in [0,1]$.  In particular, $\mathrm{MD}^D_g$ is a  coherent risk measure if and only if $Y=0$. \end{proposition}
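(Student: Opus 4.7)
The plan is to combine Theorem \ref{prop:2}(ii), which characterizes convexity of $\mathrm{MD}^D_g$ by convexity of $g$, with a structural representation: a convex $g\in\mathcal G$ should be of the form $g(x)=\lambda\E[(x-Y)_+]$ for some $\lambda\in[0,1]$ and non-negative $Y\in L^1$. Sufficiency is the easy direction: the map $x\mapsto(x-y)_+$ is convex, increasing and $1$-Lipschitz with value $0$ at $x=0$ whenever $y\ge 0$, and these properties are preserved under expectation, so any $g$ of the stated form lies in $\mathcal G$ (using $\lambda\le 1$ for the $1$-Lipschitz bound), is convex, and hence $\mathrm{MD}^D_g$ is convex by Theorem \ref{prop:2}(ii).

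For necessity, Theorem \ref{prop:2}(ii) gives that $g$ is convex. Since $g$ is also $1$-Lipschitz, increasing, and $g(0)=0$, its right derivative $g'_+:[0,\infty)\to[0,1]$ is non-decreasing. Set $\lambda:=\lim_{s\to\infty}g'_+(s)\in(0,1]$ (positive by non-constancy of $g$) and $F(s):=g'_+(s)/\lambda$. Then $F$ is right-continuous, non-decreasing, with $F(0-)=0$ and $F(\infty)=1$, hence it is the cumulative distribution function of a non-negative random variable $Y$. Using the stop-loss identity $\E[(x-Y)_+]=\int_0^x F(s)\,\d s$, we obtain $g(x)=\int_0^x g'_+(s)\,\d s=\lambda\E[(x-Y)_+]$. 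The coherent case follows immediately: by Theorem \ref{prop:2}(i), coherence of $\mathrm{MD}^D_g$ is equivalent to linearity of $g$, which in the representation forces $Y=0$ almost surely.

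The main obstacle will be verifying that $Y\in L^1$, since $\E[Y]=\int_0^\infty(1-g'_+(s)/\lambda)\,\d s$ is finite only when $\lambda x-g(x)$ remains bounded as $x\to\infty$. This asymptotic estimate is not an immediate consequence of convexity together with the $\mathcal G$ axioms, so either extra input from the structure of $D\in\overline{\mathcal D}^p$ must be exploited, or the claim must be read with $Y$ allowed in a class slightly broader than $L^1$; pinning this down is the delicate step of the argument.
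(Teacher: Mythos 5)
Your proof is correct, but it is genuinely more self-contained than the paper's. The paper reduces the proposition (exactly as you do) to the claim that $g\in\mathcal G$ is convex if and only if $g(x)=\lambda\E[(x-Y)_+]$ for some $Y\ge 0$ and $\lambda\in[0,1]$, and then simply cites Theorems 1 and 6 of Williamson (1956) for that equivalence. You instead prove it directly, and your construction is the right one: since $g$ is convex, increasing, $1$-Lipschitz with $g(0)=0$, the right derivative $g'_+$ is a non-decreasing $[0,1]$-valued function, so $\lambda=\lim_{s\to\infty}g'_+(s)$ and $F=g'_+/\lambda$ give a distribution function on $[0,\infty)$ (right-continuity of $g'_+$ comes from convexity), and the stop-loss identity $\E[(x-Y)_+]=\int_0^x F(s)\,\d s$ together with $g(x)=\int_0^x g'_+(s)\,\d s$ closes the argument. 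Moreover, since $g'=\lambda F_Y$ a.e.\ and $F_Y\to 1$, the pair $(\lambda,\mathrm{law\ of\ }Y)$ is essentially unique, which also makes the coherent case ($g$ linear $\Leftrightarrow$ $F_Y\equiv 1$ on $[0,\infty)$ $\Leftrightarrow$ $Y=0$ a.s.) immediate, as you note.

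On the point you flag as delicate: you are right that $Y\in L^1$ cannot be extracted, and the correct resolution is your second alternative --- the statement must be read with $Y$ merely a non-negative random variable, not necessarily integrable. No structure of $D\in\overline{\mathcal D}^p$ can help, because the representation determines the law of $Y$ up to null sets, and $\E[Y]=\int_0^\infty(1-g'_+(s)/\lambda)\,\d s$ genuinely diverges for admissible $g$: the paper's own Example \ref{exm:1}(ii) with $\theta\le 1$, e.g.\ $g(x)=x-\log(1+x)$, forces $Y$ to be Pareto with tail index $1$, which has infinite mean. Note that the paper's proof text itself states the equivalence only ``for some $Y\ge 0$'' without the $L^1$ restriction, and $\E[(x-Y)_+]$ is finite for every non-negative $Y$ since $(x-Y)_+\le x$, so dropping $L^1$ costs nothing. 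In short: your argument is complete once the $L^1$ hypothesis in the statement is weakened to $Y\ge 0$, and you have in fact identified a small inaccuracy in the proposition as printed.
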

\begin{proof} 
We need to show that   $g$ is an increasing, convex function which satisfies $1$-Lipschitz condition  if and only if $g(x)=\lambda\E[(x-Y)_+]$ for some $Y\geq0$ and $0\le \lambda \leq1$. 
This is known in the literature; see 
  Theorems 1 and 6 of \cite{W56}.
\end{proof}

By Theorem \ref{thm:monetay}, we know that $\mathrm{MD}^D_g$ is a consistent risk measure, yet it fails to exhibit convexity when $g\in\mathcal G$ is not convex as shown in Theorem \ref{prop:2}. For instance, take $g(x)=\lambda \E[x\wedge Y]$ for some non-negative $Y$ and $\lambda\in[0,1]$. It is obvious that $g$  is concave and satisfies $1$-Lipschitz condition. In this case, $\mathrm{MD}^D_g(X)$ can be expressed as $$\mathrm{MD}^D_g(X)=\lambda\E[D(X)\wedge Y]+\E[X],$$ 
which is a consistent but not convex risk measure. Furthermore, Theorem \ref{prop:2} illustrates that $\mathrm{MD}^D_g$ is a convex but not coherent risk measure if $g\in\mathcal G$ is convex yet non-linear. This insight opens up a new perspective for constructing risk measures within the class of monotone mean-deviation risk measures. Specifically, it guides us in developing risk measures that are consistent yet not convex, or alternatively, convex but not coherent, all while possessing an explicit formulation. 
By assuming that $g(x)=\E[(x- Y)_+]$ or  $g(x)=\E[x\wedge Y]$  for some  non-negative random variable $Y$, we can construct  many convex  or consistent risk measures with explicit form which appear to be new in the literature.

\begin{example}\label{exm:1} Suppose that $g(x)=\E[(x- Y)_+]$  for some  $Y\geq0$ and $D\in  \overline{\mD}^p$ with some $p\in[1,\infty]$.

\begin{enumerate}[(i)] \item Let  $Y$  be the exponential distribution with  parameter $\beta>0$, that is, $\mathbb{P}(Y>y)=e^{-\beta y}$, then $g(x)=x+\left(e^{-\beta x}-1\right)/{\beta}$.  According to Proposition \ref{thm:2},  we have  $$\rho(X)=\E[X]+D(X)+\frac{1}{\beta}\left(e^{-\beta D(X)}-1\right),$$ which is a convex risk measure.
  
\item  Let  $Y$ follow a Pareto distribution with tail parameter $\theta>0$, that is, $\mathbb{P}(Y>y)=(1+y)^{-\theta}$ for $y \geq 0$. We have 
 $$g(x)=\left\{\begin{aligned}&x+\left((1+x)^{1-\theta}-1\right)/{(\theta-1)}, &\theta\neq1,\\ &x-\log(1+x),&\theta=1.
  \end{aligned} \right.$$
This yields
$$\rho(X)=\left\{\begin{aligned}&\E[X]+D(X)+\left(\left(1+D(X)\right)^{1-\theta}-1\right)/(\theta-1),&\theta\neq1,\\ &\E[X]+D(X)-\log(1+D(X)),&\theta=1,
  \end{aligned} \right.$$  and $\rho$  is a convex risk measure.
  \end{enumerate}
  \end{example}
  \begin{example}\label{exm:2} Suppose that   $g(x)=\E[x\wedge Y]$ for some $Y\geq0$ and $D\in  \overline{\mD}^p$ for some $p\in[1,\infty]$.
  
\begin{enumerate}[(i)]
    \item  Let $Y$ follow the exponential distribution with  parameter $\beta>0$,    we have $g(x)=\left(1-e^{-\beta x}\right)/\beta$.  Then it follows that 
  $$\rho(X)=\E[X]+\frac{1}{\beta}\left(1-e^{-\beta D(X)}\right),$$  which is a consistent risk measure but not a convex risk measure.
  
\item  Let  $Y$ follow a Pareto distribution with tail parameter $\theta>0$. We have  $$ g(x)=\left\{\begin{aligned}&\left(1-(1+x)^{1-\theta}\right)/{(\theta-1)}, &\theta\neq1,\\&\log (1+x), &\theta=1.
 \end{aligned} \right.$$ This yields
  $$\rho(X)=\left\{\begin{aligned}&\E[X]+\frac{ 1-(1+D(X))^{1-\theta}}{\theta-1},& \theta\neq1,\\&\E[X]+\log (1+D(X)),& \theta=1,\end{aligned}\right.$$  which  is a consistent risk measure but not a convex risk measure.
  \end{enumerate}
\end{example}

\section{Dual representation} \label{sec:4}
In this section, we  investigate the dual representation of monotonic deviation measures that are convex. Before showing the main result, we need some preliminaries. For $p\in[1,\infty)$, denote by $q$ the conjugate dual of $p$, i.e., $q=(1-1 / p)^{-1}$. 
Define $\mathcal A_p=\{Z\in L^q: Z\ge 0,~\E[Z]=1\}$.
For a convex $g\in \mathcal G$, we use $g^*$ to represent its conjugate function, i.e., $g^*(y)=\sup_{x\ge 0} \{xy-g(x)\}$. 
One can easily check that $g^*$ is increasing, convex and lower semicontinuous.  Note that $g:[0,\infty)\to\R$ is increasing and $1$-Lipschitz continuous with $g(0)=0$. Denote by $a=\lim_{x\to\infty} g'(x)\in[0,1]$ where $g'$ is the left derivative of $g$, and
we have 
$g^*(y)=0$ for $y\le 0$ and $g^*(y)=\infty$ for $y>a$. Hence, it holds that $g(x)=g^{**}(x)=\sup_{y\in[0,a]} \{xy-g^*(y)\}$ for $x\ge 0$ (see e.g., Proposition A.6 of \cite{FS16}).  For a range-normalized deviation measure $D$, denote by $R=D+\E$, which is a finite coherent risk measure on $L^p$. 
Moreover,  the following representation holds:
\begin{align}\label{eq-codual}
R(X)=D(X)+\E[X]=\max_{Z\in \mathcal A} \E[XZ],~~X\in L^p
\end{align}
for some   convex and weakly compact set $\mathcal A\subseteq \mathcal A_p$. 

\begin{theorem}\label{th-dual}
 Fix $p\in[1,\infty)$. 
Suppose that $g\in\mathcal G$ is convex with $\lim_{x\to\infty} g'(x)=a$ and  $D\in\mathcal {\overline D}^p$. We have 
\begin{align*}
\mathrm{MD}^D_g(X)=
\max_{Z\in\mathcal A}\left\{a\E[XZ]-g^*\left(\frac{a}{\sup\{\lambda\in[1,\infty): \lambda(Z-1)+1\in\mathcal A\}}\right)\right\}+(1-a)\E[X],~~X\in L^p,
\end{align*}
where $\mathcal A$ is defined in \eqref{eq-codual}.
\end{theorem}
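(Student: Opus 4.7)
The core idea is to dualize $g$ via its biconjugate, then reparametrize the resulting double optimum in a way that matches the given formula. Since $g:[0,\infty)\to\R$ is convex, increasing, $1$-Lipschitz with $g(0)=0$, and since $a=\lim_{x\to\infty}g'(x)\in[0,1]$, we have $g^*(y)=0$ for $y\le 0$, $g^*(y)=\infty$ for $y>a$, and $g(x)=\sup_{y\in[0,a]}\{xy-g^*(y)\}$ for $x\ge 0$. Substituting into \eqref{eq:MD} and using $D(X)=R(X)-\E[X]=\max_{Z\in\mathcal A}\E[X(Z-1)]$ gives
\begin{align*}
\mathrm{MD}^D_g(X)
=\sup_{y\in[0,a]}\bigl\{y D(X)-g^*(y)\bigr\}+\E[X]
=\sup_{y\in[0,a]}\max_{Z\in\mathcal A}\bigl\{y\E[X(Z-1)]+\E[X]-g^*(y)\bigr\}.
\end{align*}
A key preliminary observation is that $1\in\mathcal A$: since $D\ge 0$ gives $R\ge\E$, and $\mathcal A$ is convex and weakly closed, this follows from a standard separation argument. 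Thus $\mathcal A$ is star-shaped with respect to $1$.

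\textbf{Direction $\ge$ (upper bound on RHS).} Fix $Z\in\mathcal A$ and let $\lambda=\lambda^*(Z)\ge 1$. If $\lambda<\infty$, then by closedness of $\mathcal A$ the sup is attained, so $W:=\lambda(Z-1)+1\in\mathcal A$, whence $\E[XW]\le R(X)=D(X)+\E[X]$, i.e.\ $\E[X(Z-1)]\le D(X)/\lambda$. Therefore
\begin{align*}
a\E[XZ]-g^*\bigl(a/\lambda\bigr)+(1-a)\E[X]
=a\E[X(Z-1)]+\E[X]-g^*(a/\lambda)
\le (a/\lambda)D(X)+\E[X]-g^*(a/\lambda),
\end{align*}
which is at most $\sup_{y\in[0,a]}\{yD(X)-g^*(y)\}+\E[X]=\mathrm{MD}^D_g(X)$ (taking $y=a/\lambda\in(0,a]$). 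The case $\lambda=\infty$ is handled in the limit, noting $g^*(0)=0$.

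\textbf{Direction $\le$ (attainment).} Compactness of $[0,a]$ and lower semicontinuity of $g^*$ give a maximizer $y^*\in[0,a]$ of $y\mapsto yD(X)-g^*(y)$. If $D(X)=0$ or $y^*=0$, the choice $Z=1\in\mathcal A$ (with $\lambda^*(1)=\infty$, $g^*(0)=0$) already recovers $\mathrm{MD}^D_g(X)=\E[X]$. Otherwise $D(X)>0$, and by weak compactness of $\mathcal A$ choose $Z^*\in\mathcal A$ with $\E[XZ^*]=R(X)$. A crucial step is the claim that $\lambda^*(Z^*)=1$: indeed, if $\lambda^*(Z^*)>1$ then $Z^{**}:=\lambda^*(Z^*)(Z^*-1)+1\in\mathcal A$ would satisfy $\E[XZ^{**}]=\lambda^*(Z^*)(R(X)-\E[X])+\E[X]>R(X)$, contradicting the definition of $R$. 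Now define
\begin{align*}
Z':=(y^*/a)Z^*+(1-y^*/a)\cdot 1\in\mathcal A,
\end{align*}
which is a convex combination since $y^*/a\in(0,1]$. Then $Z'-1=(y^*/a)(Z^*-1)$, so $\lambda(Z'-1)+1=(\lambda y^*/a)(Z^*-1)+1$ belongs to $\mathcal A$ iff $\lambda y^*/a\le\lambda^*(Z^*)=1$, giving $\lambda^*(Z')=a/y^*$ and hence $a/\lambda^*(Z')=y^*$. A direct expansion yields
\begin{align*}
a\E[XZ']-g^*(a/\lambda^*(Z'))+(1-a)\E[X]
=y^*R(X)+(1-y^*)\E[X]-g^*(y^*)
=y^*D(X)+\E[X]-g^*(y^*),
\end{align*}
which equals $g(D(X))+\E[X]=\mathrm{MD}^D_g(X)$. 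Combining both directions finishes the proof.

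\textbf{Anticipated obstacle.} The delicate step is verifying $\lambda^*(Z^*)=1$ for optimizers $Z^*$ and the consequent computation of $\lambda^*(Z')=a/y^*$ for the constructed $Z'$; this is where the star-shaped structure of $\mathcal A$ around $1$, combined with the optimality of $Z^*$, is essential. The boundary cases ($D(X)=0$, $y^*=0$, or $\lambda^*(Z)=\infty$) must also be handled carefully, using $g^*(0)=0$ and $1\in\mathcal A$ throughout.
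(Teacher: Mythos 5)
Your proof is correct, and it takes a genuinely different route from the paper's. The paper first invokes the abstract dual representation of finite convex risk measures on $L^p$ (Kaina--R\"uschendorf) and then computes the penalty function $\beta(Z)=\sup_{X}\{\E[XZ]-\mathrm{MD}^D_g(X)\}$ explicitly; since the biconjugate of $g$ introduces an infimum over $y$ inside a supremum over $X$, this forces two applications of a minimax theorem and a separate identification of the effective domain of $\beta$. You instead observe that after substituting $g=g^{**}$ and $D=R-\E$ into the definition, $\mathrm{MD}^D_g(X)$ is already a \emph{double supremum} over $(y,Z)\in[0,a]\times\mathcal A$, which commutes trivially, and you then match this against the stated formula by a direct reparametrization: the geometry of $\mathcal A$ (convex, weakly compact, containing $1$, hence star-shaped about $1$) gives the interval structure $\{\mu\ge 0:\mu(Z^*-1)+1\in\mathcal A\}=[0,1]$ for any maximizer $Z^*$ of $Z\mapsto\E[XZ]$, which is exactly the fact $\lambda^*(Z^*)=1$ that makes $a/\lambda^*(Z')=y^*$ for your constructed $Z'=(y^*/a)Z^*+(1-y^*/a)$. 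All the supporting claims check out: $1\in\mathcal A$ by separation, attainment of the supremum defining $\lambda^*(Z)$ by weak closedness, attainment of $y^*$ by upper semicontinuity on the compact interval $[0,a]$, and the boundary cases $D(X)=0$, $y^*=0$, $\lambda^*(Z)=\infty$ are all handled consistently using $g^*(0)=0$ and $g\ge 0$ (note $a>0$ automatically, since a convex $g\in\mathcal G$ with $a=0$ would be constant). What each approach buys: the paper's argument additionally produces the penalty function $\beta$ in closed form (equation \eqref{eq-penalty}), which is of independent interest; your argument is more elementary, avoids both the representation theorem and the minimax theorems, and explicitly exhibits the maximizer $Z'$, thereby justifying the ``$\max$'' in the statement constructively.
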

\begin{proof}
By Theorem \ref{prop:2}, ${\rm MD}^D_g$ is a finite convex risk measure on $L^p$. It follows from Theorem 2.11 of \cite{KR09} that 
\begin{align*}
\mathrm{MD}^D_g(X)=\max_{Z\in\mathcal A_p}\{\E[XZ]-\beta(Z)\},~~X\in L^p,
\end{align*}
for some $\beta:L^q\to (-\infty,\infty]$ that is convex and lower semicontinuous, given by
\begin{align*}
\beta(Z)=\sup_{X\in L^p}\{\E[XZ]-\mathrm{MD}^D_g(X)\},~~Z\in L^q.
\end{align*}
We first aim to prove that 
\begin{align}\label{eq-penalty}
\beta(Z)=\begin{cases}
g^*\left(\frac{1}{\sup\{\lambda\in[1/a,\infty): \lambda(Z-1)+1\in\mathcal A\}}\right),~~& Z\in\mathcal Z,\\
\infty,~~& {\rm otherwise},
\end{cases}
\end{align}
where $\mathcal Z=\{aY+1-a: Y\in\mathcal A\}$.
For $Z\in L^q$, we have
\begin{align}\label{eq-dual1}
\beta(Z)
&=\sup_{X\in L^p}\{\E[XZ]-\mathrm{MD}^D_g(X)\} \notag 
\\& =\sup_{X\in L^p}\{\E[XZ]-\E[X]-g(D(X))\}\notag\\
&=\sup_{X\in L^p}\inf_{y\in[0,a]}\{\E[XZ]-\E[X]-D(X)y+g^*(y)\},
\end{align}
where we have used $g(x)=\sup_{y\in[0,a]} \{xy-g^*(y)\}$ in the last step. It holds that the objective function of \eqref{eq-dual1} is convex and lower semicontinuous in $y$ for any fixed $X$ since $g^*$ is convex and lower semicontinuous, and it is concave in $X$ for any fixed $y$. By a minimax theorem (see e.g., Theorem 2 of \cite{F53}), we have
\begin{align}\label{eq-dual2}
\beta(Z)
&=\inf_{y\in[0,a]}\sup_{X\in L^p}\{\E[XZ]-\E[X]-D(X)y+g^*(y)\}\notag\\
&=\inf_{y\in[0,a]}\sup_{X\in L^p}\{\E[XZ]-\E[X]-(R(X)-\E[X])y+g^*(y)\}\notag\\
&=\inf_{y\in[0,a]}\sup_{X\in L^p}\inf_{Y\in\mathcal A}\{\E[(Z-1+y-yY)X]+g^*(y)\},
\end{align}
where we have used \eqref{eq-codual} in the second and third steps. Obviously, the objective function of \eqref{eq-dual2} is convex and continuous with respect to the weak topology in $Y$ for any fixed $X$ and concave in $X$. 
Also note that $\mathcal A$ is convex and weakly compact. By the minimax theorem, we have
\begin{align}\label{eq-minimax2}
\beta(Z)=\inf_{y\in[0,a], Y\in\mathcal A}\sup_{X\in L^p}\{\E[(Z-1+y-yY)X]+g^*(y)\}.
\end{align}
Denote by $\widetilde{\mathcal Z}=\{yY+1-y: y\in[0,a],~Y\in\mathcal A\}$.
Note that the inner supremum problem above is infinite if $\p(Z-1+y-yY\neq 0)>0$ and is equal to $g^*(y)$ if $Z-1+y-yY=0$. We have that $\beta(Z)=\infty$ if $Z\in L^q\setminus \widetilde{\mathcal Z}$, and for $Z\in \widetilde{\mathcal Z}$,
\eqref{eq-minimax2} reduces to
\begin{align*}
\beta(Z)&=\inf\left\{g^*(y): y\in[0,a],~Y\in\mathcal A,~y(Y-1)+1=Z\right\}\\
&=\inf\left\{g^*(y): y\in[0,a],~\frac{Z-1}{y}+1\in\mathcal A\right\}\\
&=\inf\left\{g^*\left(\frac{1}{\lambda}\right): \lambda\in\left[\frac1a,\infty\right),~\lambda(Z-1)+1\in\mathcal A\right\}\\
&=g^*\left(\frac{1}{\sup\{\lambda\in[1/a,\infty): \lambda(Z-1)+1\in\mathcal A\}}\right),
\end{align*}
where the last step holds because $g^*$ is increasing.
To verify \eqref{eq-penalty}, it remains to show that $\mathcal Z=\widetilde{\mathcal Z}$, i.e., $\{aY+1-a: Y\in\mathcal A\}=\{yY+1-y: y\in[0,a],~Y\in\mathcal A\}$. It is clear that $\mathcal Z\subseteq \widetilde{\mathcal Z}$. Conversely, for any $Z\in \widetilde{\mathcal Z}$ with the representation $Z=yY+1-y$ for some $y\in[0,a]$ and $Y\in\mathcal A$, since $\mathcal A$ is convex and $1\in\mathcal A$, we have that $\mathcal Z$ is convex and $1\in\mathcal Z$. Note that $Z=(y/a)(aY+1-a)+(1-y/a)\cdot 1$, where $y/a\in[0,1]$ and  $aY+1-a\in \mathcal Z$. It holds that $Z\in\mathcal Z$. This yields the converse direction. Hence, we have verified \eqref{eq-penalty}.
Therefore, we have
$$
\mathrm{MD}^D_g(X)=\max _{Z \in {\mathcal Z}}\left\{\mathbb{E}[X Z]-g^*\left(\frac{1}{\sup \{\lambda \in[1 / a, \infty): \lambda(Z-1)+1 \in \mathcal{A}\}}\right)\right\}, \quad X \in L^p,
$$
where $\mathcal{Z}=\{a Y+1-a: Y \in \mathcal{A}\}$. Moreover, for $Z\in\mathcal Z$ with the form $Z=aY+1-a$, where $Y\in\mathcal A$, it holds that
 \begin{align*}&\E[X Z]-g^*\left(\frac{1}{\sup \{\lambda \in[1 / a, \infty): \lambda(Z-1)+1 \in \mathcal{A}\}}\right)\\&= \E[X( aY+1-a)]-g^*\left(\frac{1}{\sup \{\lambda \in[1 / a, \infty): \lambda a(Y-1)+1 \in \mathcal{A}\}}\right) \\&=a \E[XY]-g^*\left(\frac{a}{\sup \{\lambda \in[1, \infty): \lambda(Y-1)+1 \in \mathcal{A}\}}\right)+(1-a)\E[X].\end{align*} 
This completes the proof.
\end{proof}

Below we give two specific examples of Theorem \ref{th-dual} by choosing the coherent risk measure $R$ as ES or expectile (see e.g., \cite{NP87} and \cite{B14}), which are popular in practice. This choice results in two classes of $\mathrm{MD}_g^D$. 
\begin{example}
Let $R=\ES_\alpha$ with $\alpha\in(0,1)$,
$D=R-\E$,  $g\in\mathcal G$ be convex with $\lim_{x\to\infty} g'(x)=a$, and ${\rm MD}^D_g=g \circ D+\E$.
The well-known dual representation of ES in \citet[Example 4.40]{FS16} gives $R(X)=\max_{Z\in\mathcal A}\E[XZ]$ for $X\in L^1$ where $\mathcal A=\{Z\in\mathcal A_\infty: Z\le 1/(1-\alpha)\}$. Then
\begin{align*}
\sup\{\lambda\in[1,\infty): \lambda(Z-1)+1\in\mathcal A\}
&=\sup\left\{\lambda\in[1,\infty): \lambda(\esssup Z-1)+1\le \frac{1}{1-\alpha}\right\}\\
&=\frac{\alpha}{1-\alpha}(\esssup Z-1)^{-1}.
\end{align*}
By Theorem \ref{th-dual}, we obtain 
\begin{align*}
{\rm MD}^D_g(X)&=\max_{Z\in\mathcal A}\left\{a\E[XZ]-g^*\left(\frac{(1-\alpha)a}{\alpha}(\esssup Z-1)\right)\right\}+(1-a)\E[X]\\
&=\sup_{\gamma\in\left[1,\frac{1}{1-\alpha}\right]}\sup\left\{a\E[XZ]-g^*\left(\frac{(1-\alpha)(\gamma-1)a}{\alpha}\right):Z\in\mathcal A_\infty,~\esssup Z=\gamma\right\}+(1-a)\E[X]\\
&=\sup_{\gamma\in\left[1,\frac{1}{1-\alpha}\right]}\left\{a\ES_{1-\frac{1}{\gamma}}(X)-g^*\left(\frac{(1-\alpha)(\gamma-1)a}{\alpha}\right)\right\}+(1-a)\E[X]\\
&=\sup_{\gamma\in[0,\alpha]}\left\{a\ES_{\gamma}(X)-g^*\left(\frac{1-\alpha}{\alpha}\frac{\gamma a}{1-\gamma}\right)\right\}+(1-a)\E[X].
\end{align*}
Suppose now $a=1$, and we define $f:[0,1]\to(-\infty,\infty]$ as
\begin{align*}
f(\gamma)=\begin{cases}
g^*\left(\frac{1-\alpha}{\alpha}\frac{\gamma}{1-\gamma}\right),~~&\gamma\in[0,\alpha],\\
\infty,~~&\gamma\in(\alpha,1].
\end{cases}
\end{align*}
Obviously, $f$ is an increaing and convex function on $[0,1]$ as $g^*$ and $\gamma\mapsto \gamma/(1-\gamma)$ are both increasing and convex.
It holds that 
$$
{\rm MD}^D_g(X)=\sup_{\gamma\in[0,1]}\{\ES_\gamma(X)-f(\gamma)\}.
$$
{A functional of the form  $ \sup_{\gamma\in[0,1]}\{\ES_\gamma(X)-h(\gamma)\} $  for a general function $h$ is  called an \emph{adjusted Expected Shortfall} (AES)  by \cite{BMW22}.
Different from the general class of AES considered by \cite{BMW22}, 
the subclass  $\mathrm{MD}_g^D$ has an explicit formula, i.e., ${\rm MD}^D_g(X)=g(\ES_{\alpha}(X))+\E[X]$.}
\end{example}

\begin{example}
An expectile at level $\alpha \in (0,1)$, denoted by ${\rm ex}_\alpha$, is defined as the solution of the following equation:
\begin{align*}
\alpha \E[(X-x)_+]= (1-\alpha) \E[(X-x)_-],~~X\in L^1.
\end{align*}
When $\alpha\ge 1/2$, ${\rm ex}_\alpha$ is a convex risk measure admitted a dual  representation (see e.g., Proposition 8 of \cite{B14}):
\begin{align*}
{\rm ex}_\alpha(X)=\max_{Z\in\mathcal A}\E[XZ]~~~{\rm with}~\mathcal A=\left\{Z\in \mathcal A_\infty: \frac{\esssup Z}{\essinf Z}\le \frac{\alpha}{1-\alpha}\right\}.
\end{align*}
Let $R={\rm ex}_\alpha$ with $\alpha\in[1/2,1)$,
$D=R-\E$ and $g\in\mathcal G$ be convex with $\lim_{x\to\infty} g'(x)=a$, and let ${\rm MD}^D_g=g \circ D+\E$. It holds that
\begin{align*}
\sup\{\lambda\in[1,\infty): \lambda(Z-1)+1\in\mathcal A\}
&=\sup\left\{\lambda\in[1,\infty): \frac{\lambda (\esssup Z-1)+1}{\lambda(\essinf Z-1)+1}\le \frac{\alpha}{1-\alpha}\right\}\\
&=\frac{2\alpha-1}{2\alpha-1+(1-\alpha)\esssup Z-\alpha\essinf Z}.
\end{align*}
By Theorem \ref{th-dual}, we obtain
\begin{align*}
{\rm MD}^D_g(X)=\sup_{Z\in\mathcal A}\left\{a\E[XZ]-g^*\left(\frac{a((1-\alpha)\esssup Z-\alpha\essinf Z)}{2\alpha-1}+a\right)\right\}+(1-a)\E[X].
\end{align*}
\end{example}

Recall that $g\in\mathcal G$ is convex with $\lim_{x\to\infty} g'(x)=a$ and $D\in\mathcal {\overline D}^p$ for some $p\in[1,\infty)$ with $R=D+\E$ defined in \eqref{eq-codual}. 
Theorem \ref{th-dual} illustrates that the smallest coherent risk measure that dominates the convex risk measure ${\rm MD}_g^D$ is $a D +\E$. Below we give an analogous result for ${\rm MD}^D_g$ where $g\in\mathcal G$ is not necessarily convex.

\begin{proposition}\label{prop:smCoherant}
Let $g\in\mathcal G$ and $D\in \overline{\mathcal D}^p$. The smallest coherent risk measure that dominates ${\rm MD}^D_g$ is $(\sup_{x>0}g(x)/x) D(X)+\E[X]$.
\end{proposition}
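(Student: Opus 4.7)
\textbf{Proof plan for Proposition \ref{prop:smCoherant}.}

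Let $\lambda^{*} = \sup_{x>0} g(x)/x$. Since $g\in\mathcal G$ is increasing, $1$-Lipschitz, non-constant, with $g(0)=0$, one checks $\lambda^{*}\in(0,1]$. Writing $\rho^{*}(X) := \lambda^{*} D(X)+\E[X]$, the plan has three steps.

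First, I verify that $\rho^{*}$ is a coherent risk measure. Since $D\in\overline{\mathcal D}^{p}$, Theorem 2 of \cite{RUZ06} gives that $R:=D+\E$ is coherent. Writing
\[
\rho^{*}(X) = \lambda^{*} R(X)+(1-\lambda^{*})\E[X],
\]
with $\lambda^{*}\in[0,1]$, we see that $\rho^{*}$ is a convex combination of two coherent risk measures, hence coherent.

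Second, I verify the domination $\rho^{*}\ge{\rm MD}^{D}_{g}$. For any $X\in L^{p}$, if $D(X)=0$ then $g(D(X))=0=\lambda^{*}D(X)$; otherwise $g(D(X))/D(X)\le\lambda^{*}$ by the definition of the supremum. In either case $\lambda^{*}D(X)\ge g(D(X))$, so $\rho^{*}(X)\ge g(D(X))+\E[X]={\rm MD}^{D}_{g}(X)$.

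Third, I show minimality: if $\rho$ is any coherent risk measure with $\rho\ge {\rm MD}^{D}_{g}$, then $\rho\ge\rho^{*}$. Fix $Y\in L^{p}$ and set $Y_{0}=Y-\E[Y]$. By [CA], $\rho(Y)=\rho(Y_{0})+\E[Y]$ and $\rho^{*}(Y)=\lambda^{*}D(Y_{0})+\E[Y]$ (using (D1)). Since $\rho$ satisfies [PH] and dominates ${\rm MD}^{D}_{g}$, for every $t>0$ we have
\[
\rho(Y_{0}) = \frac{\rho(tY_{0})}{t} \ge \frac{{\rm MD}^{D}_{g}(tY_{0})}{t} = \frac{g(tD(Y_{0}))}{t},
\]
where the last equality uses $\E[tY_{0}]=0$ and (D3). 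Taking the supremum over $t>0$, and using the substitution $s=tD(Y_{0})$ when $D(Y_{0})>0$,
\[
\rho(Y_{0}) \ge \sup_{t>0}\frac{g(tD(Y_{0}))}{t}=D(Y_{0})\sup_{s>0}\frac{g(s)}{s} = \lambda^{*}D(Y_{0}),
\]
while the case $D(Y_{0})=0$ (i.e., $Y_{0}=0$ by (D2)) is trivial. Hence $\rho(Y)\ge\rho^{*}(Y)$, concluding the proof.

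The only non-routine step is the third one; the mild obstacle there is realizing that one must rescale $Y_{0}$ by an arbitrary positive factor (not just use $Y_{0}$ directly) in order to probe the full supremum defining $\lambda^{*}$, and then combine [PH] with (D3) to reduce the bound to $\sup_{s>0}g(s)/s$.
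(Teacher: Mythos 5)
Your proof is correct and follows essentially the same route as the paper: the key computation $\sup_{t>0} g(tD(X))/t = D(X)\sup_{s>0} g(s)/s$ is exactly the paper's identification of the positively homogeneous hull of ${\rm MD}^D_g$. You additionally spell out the coherence of the candidate and the domination step, which the paper leaves implicit.
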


\begin{proof}
The smallest positive homogeneous functional that dominates ${\rm MD}^D_g$ is given by
\begin{align*}
\rho(X)=\sup_{\lambda>0} \frac{{\rm MD}^D_g(\lambda X)}{\lambda}
&=\sup_{\lambda>0} \frac{g(\lambda D(X))}{\lambda}+\E[X]\\
&=D(X) \sup_{\lambda>0}\frac{g(\lambda)}{\lambda}+\E[X]. 
\end{align*}
Hence, we get the desired result.
\end{proof}

Proposition \ref{prop:smCoherant} addresses the case where $g$ is convex, which aligns with the observation in Theorem \ref{th-dual}. This is because $\sup_{x>0}g(x)/x=\lim_{x\to\infty} g'(x)$ for convex $g\in\mathcal G$.
Despite the simplicity of the proof of above proposition,
the smallest dominating coherent risk measure of a given risk measure has several interesting applications; see \cite{WBT15} in the context of subadditivity,
and \cite{HK22b} in the context of arbitrage induced by risk measure.

\section{Non-parametric estimation}\label{sec:6}
In this section, we consider the properties of non-parametric estimators of $\mathrm{MD}^D_g$  when  $D$ is   a  convex signed Choquet integral. 
Denote by  $$\mathcal{H}=\{h: h ~\text{is~a concave~function from}~ [0,1] ~\text{to}~ \mathbb{R} ~\text{with}~ h(0)=h(1)=0\}.$$
Since $h\in\mathcal H$ is concave, its left derivative $h'$ is well defined almost everywhere. If $h$ is further continuous on $[0,1]$, we denote by $\left\|h^{\prime}\right\|_q$ the $q$-Lebesgue norm of $h^{\prime}$, i.e., $\left\|h^{\prime}\right\|_q=(\int_0^1|h^{\prime}(t)|^q \mathrm{~d} t)^{1 / q}$ for $q\in[1,\infty)$ and $\|h'\|_\infty=\sup_{t\in(0,1)}|h'(t)|$. If $h$ is not continuous, we adopt the convention that $\|h'\|_q=\infty$ for all $q\in[1,\infty]$.
By Theorem 2.4 of \cite{LCLW20},  for $D\in  \mD^p$ and $p \in[1, \infty)$, there exists a set $\Psi^p \subseteq \Phi^p$ such that    \begin{equation}\label{eq:h}
 D(X)=\sup _{h \in \Psi^p}\left\{\int_0^1 \VaR_\alpha(X) h'(1-\alpha)\mathrm{d}\alpha \right\}, \quad X \in L^p,
 \end{equation}
where
 $ 
 \Phi^p=\{h \in \mathcal H:\left\|h^{\prime}\right\|_q<\infty\}$ with $q=(1-1 / p)^{-1}$.   For $h \in  \mathcal{H}$, the mapping   \begin{equation}\label{eq:D_h} D_h(X)=\int_0^1 \mathrm{VaR}_\alpha(X)  h'(1-\alpha)\mathrm{d} \alpha=\int_\R h 
 \left( \p(X> x)\right)\d x,
 \mbox{~~~~
  $X \in L^p$,}
 \end{equation} is  a   signed Choquet integral, characterized {by  \cite{WWW20a,WWW20b} via   comonotonic additivity.}\footnote{ Random variables $X$ and $Y$ are said to be comonotonic if there exists $\Omega_0 \in \mathcal{A}$ with $\mathbb{P}\left(\Omega_0\right)=1$ such that $\omega, \omega^{\prime} \in \Omega_0$
$
\left(X(\omega)-X\left(\omega^{\prime}\right)\right)\left(Y(\omega)-Y\left(\omega^{\prime}\right)\right) \geq 0.
$
 For a functional $\rho: \mathcal{X} \rightarrow \mathbb{R}$, we say that $\rho$ is comonotonic additive, if for any comonotonic random variables $X, Y \in \mathcal{X},~\rho(X+Y)=\rho(X)+\rho(Y)$.} The function $h$ is called the distortion function of $D_h$.  
{\color{black}  By Theorem 3 of \cite{WWW20b}, for any function $ h $ mapping $[0,1]$ to $\mathbb{R}$ with bounded variation and $ h(0)=0 $, $ h $ is concave if and only if $ D_h $ is subadditive. Combining this with the facts that $ h \in \mathcal H$ is always non-negative and $ \int_0^1 c h'(1-\alpha) \, \mathrm{d}\alpha = 0 $ for any $c\in\R$,  we can check  that $D$ in (16) and $ D_h $ in (17) satisfy all four properties of deviation measures in Definition 1.}
 
 By Proposition 1 of \cite{WWW20a},   $D_h$ is finite on $L^p$ for $p \in[1, \infty)$  if $h\in \Phi^p$, and $D_h$ is always finite on $L^{\infty}$.  In particular,  for $D\in  \mD^p$, if  $D$ is comonotonic additive,   then $D$  can only be the signed Choquet integrals; see  Theorem 1 of \cite{WWW20a}.
{
For $h\in\Phi^p$ with $p\in[1,\infty)$ and $D_h: L^p\to\R$ defined by \eqref{eq:D_h}, one can observe that $\lambda D_h+\E$ satisfies [CA], [PH], [SA] for any $\lambda\ge 0$. Combining Proposition 2 (ii) of \cite{WWW20a} and Proposition \ref{prop:verify}, it is established that $D_h$ is range normalized if and only if $t\mapsto h(t)+t$ is increasing on $[0,1]$ and $t\mapsto\lambda h(t)+t$ is not an increasing function on $[0,1]$ for any $\lambda>1$, and this is further equivalent to $h'(1)=-1$; we do not assume this condition in this section.
}


The  non-parametric estimators of $\mathrm{MD}^D_g(X)$  can be derived from those of $D_h$, VaR and the expectation, as we will explain in this section.
For $p\in[1,\infty]$, suppose that $X_1, X_2, \ldots, X_n \in  L^p$ are  an iid sample from (the  
 distribution of) a  random variable $X$. Recall that the empirical distribution $\widehat{F}_n$ of $X_1, \ldots, X_n$ is given by
$$\widehat{F}_n(x)=\frac{1}{n} \sum_{j=1}^n \id_{\left\{X_j \leq x\right\}}, \quad x \in \mathbb{R} .$$
Let $\widehat{\mathrm {MD}}^D_g(n)$ be the empirical estimator of $\mathrm {MD}^D_g(X)$, obtained by applying $\mathrm {MD}^D_g$ to the empirical distribution of $X_1, \ldots, X_n$.  We will establish consistency and asymptotic normality of the empirical estimators, based on corresponding results on empirical estimators of $\E[X]$ and $D_h(X)$. Let $\widehat x_n$ and $\widehat D_h(n)$ be the empirical estimators of $\E[X]$ and $D_h(X)$ based on the first $n$ sample data points.  We make  following standard regularity assumption on the distribution of the random variable $X$.
\begin{assumption}\label{assump:1}
The distribution $F$ of $X \in \X$ is supported on a convex set and has a positive density function $f$ on the support. Denote by $\tilde f=f \circ F^{-1}$. 
\end{assumption}
The proof of Theorem \ref{thm:4} below relies on standard techniques in empirical quantile processes, and it is given in Appendix \ref{app:pf_thm5}. 
{In what follows, $g'$ is the left derivative of $g$.} 
\begin{theorem}\label{thm:4} Fix $p\in[1,\infty)$. Let $g\in \mathcal G$ and $D=D_h$  where $h\in \Phi^p$.  Suppose that $X_1,\dots, X_n \in L^p$ {are} an iid sample from  $X\in L^p$ and Assumption \ref{assump:1} holds. Then,  $g(\widehat D(n))+\widehat x_n\stackrel{\mathbb{P}}{\rightarrow}g\left(D(X)\right)+\E[X]$ as $n\to\infty$. Moreover, if $p<2$ and $X  \in L^{\gamma}$ for some $\gamma>2p/(2-p)$, then 
we have   $$
\begin{aligned}
\sqrt{n}\left(\widehat{\mathrm{MD}}^D_g(n)-\mathrm{MD}^D_g(X)\right) \stackrel{\mathrm{d}}{\rightarrow} \mathrm{N}\left(0, \sigma_g^2\right),
\end{aligned}
$$ in which \begin{equation}\label{eq:sigma_g}\begin{aligned}\sigma^2_g= \int_0^1 \int_0^1 \frac{(h'(1- s)g'(D(X)) +1)(h'(1- t)g'(D(X))+1)(s \wedge t-s t)}{\tilde f(s) \tilde f(t)} \mathrm{d} t \mathrm{d} s. \end{aligned}\end{equation}
\end{theorem}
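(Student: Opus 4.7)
The plan is to treat $\widehat{\mathrm{MD}}^D_g(n) = g(\widehat D(n)) + \widehat x_n$ as a smooth function of two statistics that can be handled simultaneously through the empirical quantile process. For the consistency part, I would simply invoke the strong law of large numbers to obtain $\widehat x_n \to \E[X]$ almost surely, apply the known consistency result for signed Choquet integrals under $h \in \Phi^p$ and $X \in L^p$ (available from the results of \cite{WWW20a}) to get $\widehat D(n) \to D(X)$ in probability, and then use continuity of $g$ (which is $1$-Lipschitz) together with Slutsky's lemma.

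For the asymptotic normality, the key observation is that both centered statistics are linear functionals of the empirical quantile process $Q_n(s) := \sqrt{n}(\widehat F_n^{-1}(s) - F^{-1}(s))$:
\begin{align*}
\sqrt{n}(\widehat D(n) - D(X)) &= \int_0^1 Q_n(s)\, h'(1-s)\, \d s, \\
\sqrt{n}(\widehat x_n - \E[X]) &= \int_0^1 Q_n(s)\, \d s.
\end{align*}
Under Assumption \ref{assump:1} and the moment condition $X \in L^{\gamma}$ with $\gamma > 2p/(2-p)$, classical results on weighted empirical quantile processes (of Cs\"org\H{o}--Horv\'ath type) give joint weak convergence of the right-hand sides to $(Z_1, Z_2)$, where $Z_i$ are obtained by replacing $Q_n(s)$ with $-B(s)/\tilde f(s)$ and $B$ is a standard Brownian bridge. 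In particular, $(Z_1,Z_2)$ is a centered bivariate Gaussian vector.

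I would then apply the delta method to the map $(d,m) \mapsto g(d) + m$, which has gradient $(g'(D(X)), 1)$ at the point $(D(X), \E[X])$. Since $g$ is monotone and $1$-Lipschitz, its left derivative exists and is bounded by $1$, so differentiability at the specific point $D(X)$ suffices. This yields
\[
\sqrt{n}\bigl(\widehat{\mathrm{MD}}^D_g(n) - \mathrm{MD}^D_g(X)\bigr) \;\lawcn\; g'(D(X))\,Z_1 + Z_2 \;=\; -\int_0^1 \frac{g'(D(X))\, h'(1-s) + 1}{\tilde f(s)} B(s)\, \d s.
\]
The variance of the right-hand side is a Gaussian integral, computed using the standard covariance $\cov(B(s),B(t)) = s\wedge t - st$, and yields exactly the expression for $\sigma_g^2$ in \eqref{eq:sigma_g}.

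The main obstacle I anticipate is the justification of interchanging the weak limit with the integrals against $h'(1-\cdot)$, which is not bounded in general; this is precisely where the moment hypothesis $X \in L^{\gamma}$ with $\gamma > 2p/(2-p)$ enters, providing the uniform integrability/weighting needed to control tail behavior of $Q_n$ near $s=0$ and $s=1$ in tandem with $|h'(1-s)| \in L^q([0,1])$. A secondary (minor) subtlety is handling $g$ only through its left derivative at $D(X)$; this is benign because monotone Lipschitz functions are differentiable almost everywhere, and at an arbitrary point the left derivative furnishes a one-sided Hadamard derivative that is enough for the delta method in this one-dimensional setting.
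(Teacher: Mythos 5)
Your proposal follows essentially the same route as the paper's proof: consistency via the law of large numbers plus consistency of the Choquet-integral estimator and continuity of $g$, then joint weak convergence of the two centered statistics viewed as weighted integrals of the empirical quantile process (the paper establishes this via Cram\'er--Wold and the tail bounds $|h'(1-t)|\le C(t(1-t))^{1/p-1}$ and $1/\tilde f(t)\le C(t(1-t))^{-1/\gamma-1}$, which is exactly where you place the moment hypothesis), followed by linearization of $g$ at $D(X)$ and the Brownian-bridge covariance computation. The only cosmetic difference is that you invoke the delta method where the paper applies a mean value theorem to $g$; both arguments rest on the same use of the (left) derivative of $g$ at $D(X)$, so the approaches coincide.
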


The integrability conditions $h\in \Phi^p$ and $X\in L^{\gamma}$ with $\gamma>2p/(2-p)$, needed for asymptotic normality in Theorem \ref{thm:4}, coincide with those in \cite{JZ03}, who gave asymptotic normality of empirical estimators for distortion risk measures. In particular, in case $p=1$, we require $X\in L^\gamma$ with $\gamma>2$, which is a common assumption in weighted empirical quantile processes without distortion; see e.g., \cite{SY96}. 
The condition $p<2$ is also important. If $D_h\not \in \Phi^2$, then $D_h$ is not even finite on $L^2$, and we do not expect asymptotic normality in this case.

Note that the asymptotic variance $\sigma_g$ in \eqref{eq:sigma_g}
is decreasing in the left derivative $g'$  of $g$. Therefore, if we replace $g$ by $\tilde g\in \mathcal G$ satisfying $\tilde g'\le g'$, then the asymptotic variance, and thus the estimation error, will decrease. 
Note that a larger  $g'$ corresponds to a larger sensitivity to risk, as it measures how $\mathrm{MD}^D_g$ changes when $D(X)$ increases.
Therefore, Theorem \ref{thm:4} gives a trade-off between risk sensitivity and statistical efficiency.

In what follows, we present some simulation results  based on Theorem \ref{thm:4}.  We   assume that 
$g(x)=x+e^{-x}-1$ and $g(x)=1-e^{-x}$, respectively.  Simulation results are presented  in the case of standard normal
and Pareto risks with tail index 4.  Let  the sample size $n=10000$, and  we repeat   the procedure  $5000$ times.   

First, let $D=\ES_\alpha-\E$ with  $\alpha=0.9$,  then $\mathrm{MD}^D_g(X)$  is given as   $\mathrm{MD}^D_g(X)=g(\ES_\alpha(X)-\E[X])+\E[X].$    In this case, we have $h'(1-t)=\frac{1}{1-\alpha}\id_{\{t\geq \alpha\}}-1$ and $\sigma^2_g$ in \eqref{eq:sigma_g} can be computed explicitly. 
We compare the asymptotic variance   of $\mathrm{MD}^D_g$ with that of $\ES_\alpha$, given by,  via \eqref{eq:sigma_g},
$$\begin{aligned} \sigma^2_{\mathrm{ES}}=\frac{1}{(1-\alpha)^2} \int_\alpha^1 \int_\alpha^1 \frac{s \wedge t-s t}{\tilde f(s) \tilde f(t)} \mathrm{d} t \mathrm{d} s \end{aligned}.$$

In Figure \ref{fig:normal1} (a) and (b), the sample is simulated from standard normal risk. We can observe that, for  $g(x)=x+e^{-x}-1$ and  $D=\ES_\alpha-\E$,  empirical estimates of  $\mathrm{MD}^D_g$ match quiet well with the density function of $\mathrm{N}(0.93,2.85/n)$.  In contrast, $\ES_\alpha$  empirical estimates match  with the density function of  $\mathrm{N}(1.76,3.71/n)$, whose  asymptotic variance    is larger than that of  $\mathrm{MD}^D_g$. In Figure \ref{fig:normal1} (c) and (d), the sample is simulated from the Pareto distribution with tail index $4$. We can observe that  $\mathrm{MD}^D_g(X)$ empirical estimates match quiet well with the density function of $\mathrm{N}(0.73,4.88/n)$ and  ES  empirical estimates match  with the density function of  $\mathrm{N}(1.37,10.19/n)$, whose     asymptotic variance    is  also larger than the one of  $\mathrm{MD}^D_g$. Since $g$ satisfies the 1-Lipschitz condition, the volatility of $D$ is reduced via the distortion by $g$.

\begin{figure}[htb!]
\centering
 \includegraphics[width=16cm]{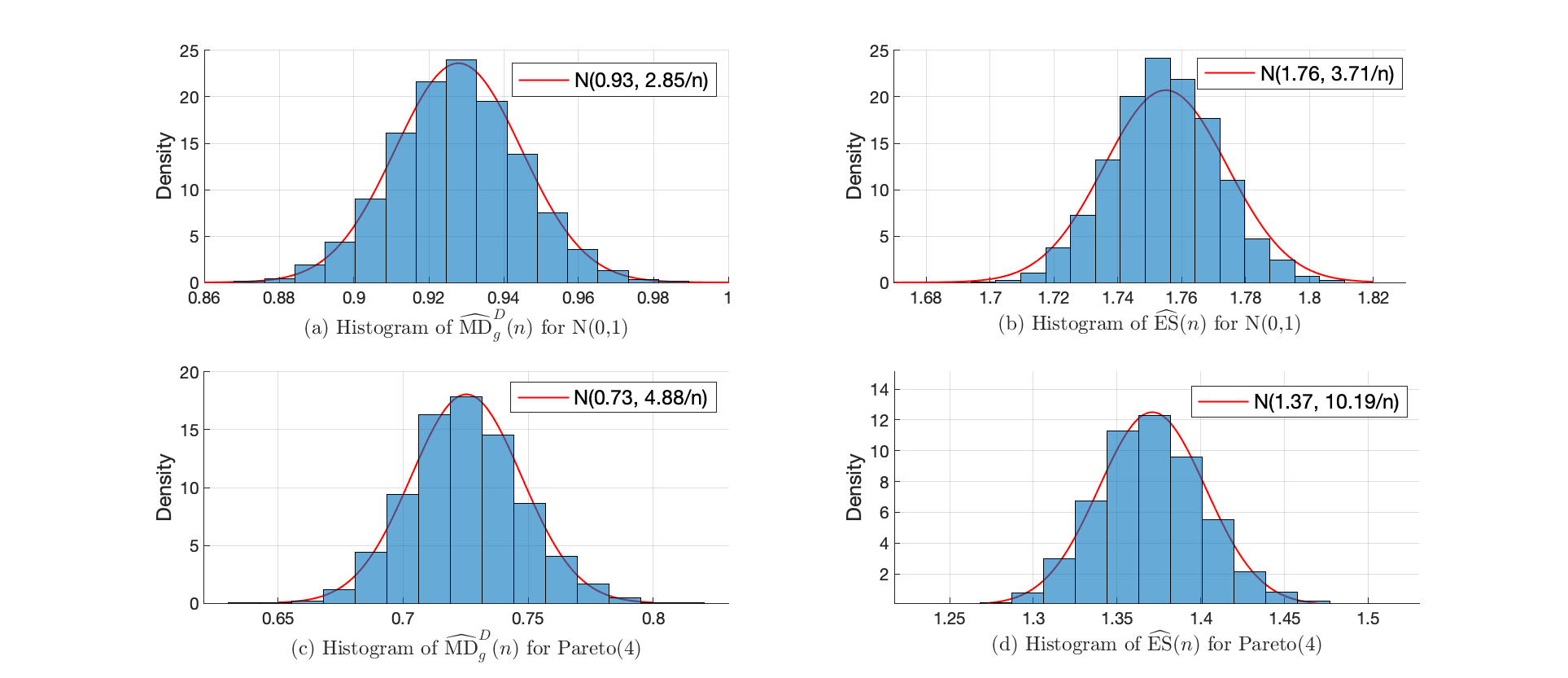}
 \captionsetup{font=small}
{\caption{ Left: $\widehat{\mathrm{MD}}^D_{g}(n)$  with $D=\ES_\alpha-\E$ and $g(x)=x+e^{-x}-1$; Right: $\widehat\ES_\alpha(n)$}\label{fig:normal1}}
\end{figure}

In Figure \ref{fig:normal2} (a) and (b),  for  $g(x)=1-e^{-x}$ and  $D=\ES_\alpha-\E$,  we can observe that  ${\mathrm{MD}}^D_g$ empirical estimates match quiet well with the density function of $\mathrm{N}(0.83,1.08/n)$ and $\mathrm{N}(0.98,1.97/n)$ when the samples are also  simulated from the standard normal distribution or  the Pareto distribution with tail index $4$.   Also, the asymptotic variance    are both  smaller than  those of  ES  empirical estimates.
\begin{figure}[htb!]
\centering
 \includegraphics[width=16cm]{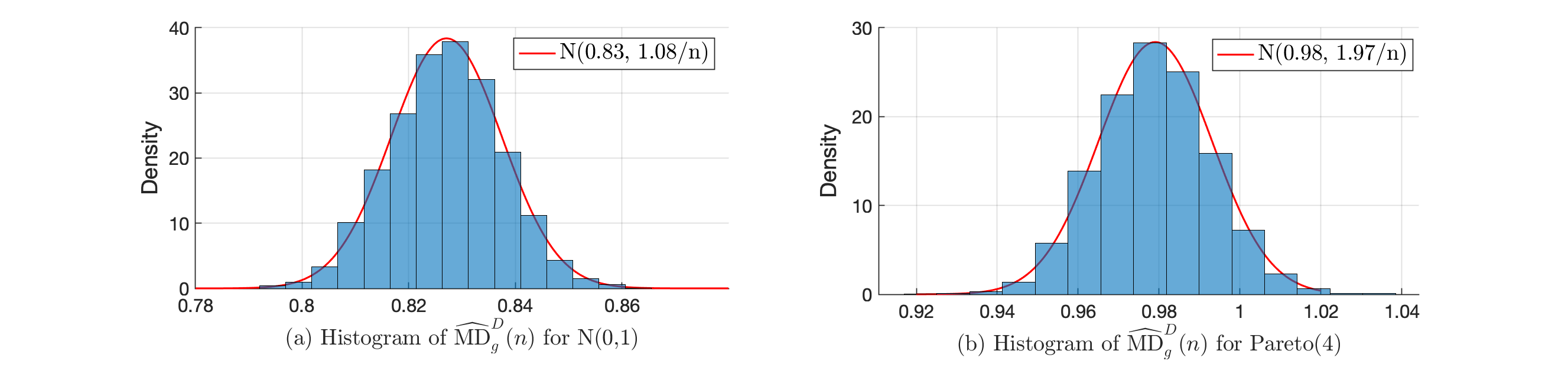} \captionsetup{font=small}
 \caption{ \small $\widehat{\mathrm{MD}}^D_{g} (n)$  with $D=\ES_\alpha-\E$ and $g(x)=1-e^{-x}$}\label{fig:normal2}
\end{figure}

If   $g(x)=\lambda x$ with $\lambda\in(0,1)$,   then we have $\mathrm{MD}^D_g(X)=\lambda \ES_\alpha(X)+(1-\lambda)\E[X]$.  It is obvious that the  asymptotic variance of $\E/\ES$-mixture is an increasing function with respect to $\lambda$ and thus is smaller than the one of $\ES$.  Moreover, if $\lambda=1$,  $\mathrm{MD}^D_g=\ES_\alpha$, and the values of $\sigma_g^2/n$ in Figure \ref{fig:normal3} (b) and (d) equal to those in Figure \ref{fig:normal1} (b) and (d).

\begin{figure}[htb!]
\centering
 \includegraphics[width=16cm]{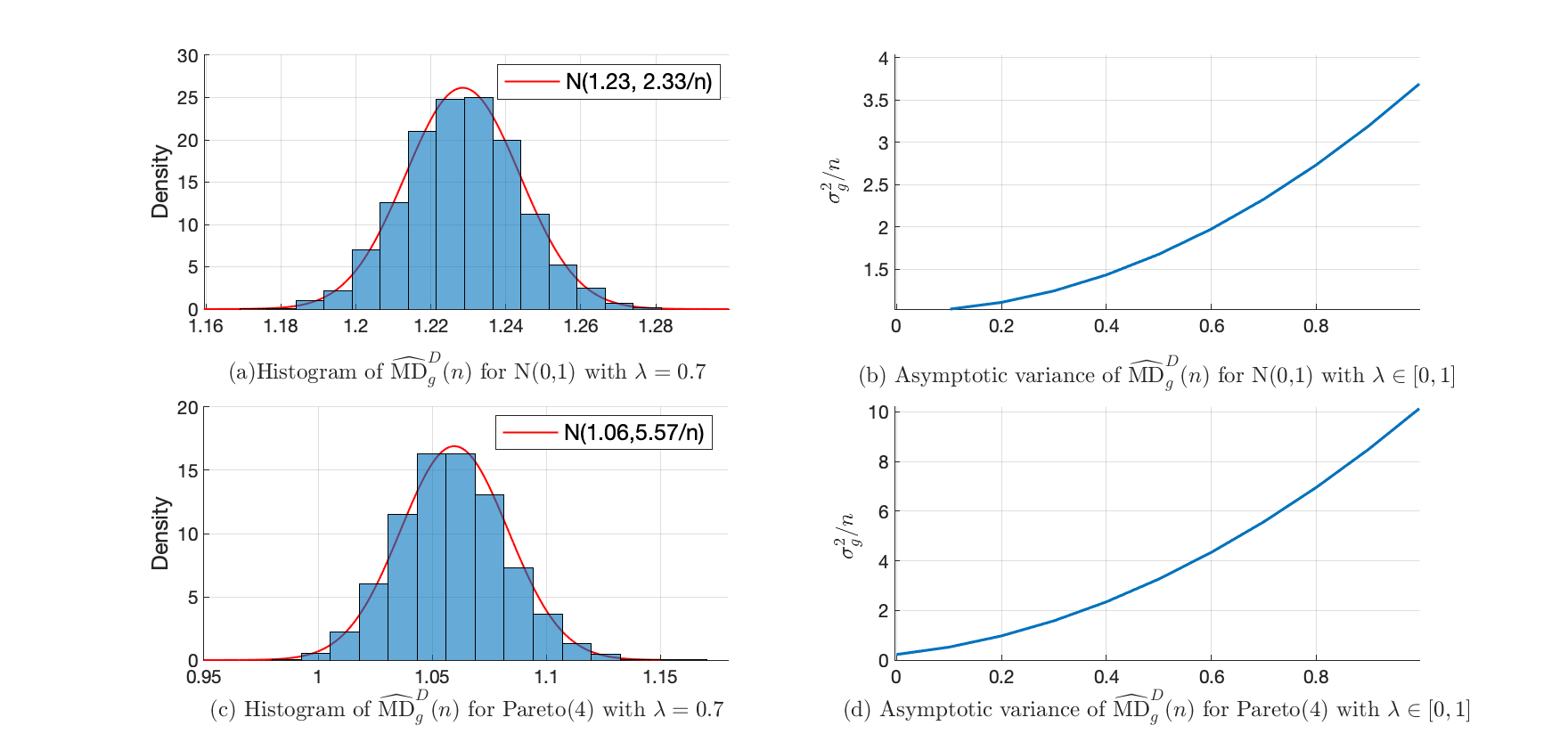}
  \captionsetup{font=small}
 \caption{ \small $\widehat{\mathrm{MD}}^D_{g}(n)$ with   $D=\ES_\alpha-\E$ and $g(x)=\lambda x$}\label{fig:normal3}
\end{figure}

Below we give another example of $D$. For $X \in L^{1}$, let  $X_1$, $X_2$, $X$  be iid, and
\begin{equation}
    D(X)=\mathrm{Gini}(X):=\frac{1}{2} \mathbb{E}\left[\left|X_1-X_2\right|\right].
\label{eq:Gini-d}
\end{equation}
The Gini deviation is a signed Choquet integral with a concave distortion function $h$ given by $h(t)=t-t^2$ for $t \in[0,1]$ (see e.g., \cite{D90}), i.e., $D={\rm Gini}=D_h$.
Then we have $$\sigma^2_g= \int_0^1 \int_0^1 \frac{((2s-1)g'(\mathrm{Gini}(X)) +1)((2t-1)g'(\mathrm{Gini}(X))+1)(s \wedge t-s t)}{\tilde f(s) \tilde f(t)} \mathrm{d} t \mathrm{d} s. $$
  Note that  the asymptotic variance  for  $\mathrm{Gini}(X)+\E[X]$, denoted by $\sigma^2_{\mathrm{Gini}+\E}$, equals $$\begin{aligned} \sigma^2_{\mathrm{Gini}+\E}=\int_0^1 \int_0^1 \frac{4ts(s \wedge t-s t)}{\tilde f(s) \tilde f(t)} \mathrm{d} t \mathrm{d} s \end{aligned}.$$ Simulation results are presented in Figures \ref{fig:Dini1} and \ref{fig:Dini2}  for $D=\rm{Gini}$ in the case of the standard normal distribution
and the Pareto distribution with tail index 4, that also confirm the asymptotic normality of the empirical
estimators in Theorem \ref{thm:4}.  Similarly,  the     asymptotic variance  of $\E+\mathrm{Gini} $  is  also larger than the one of  ${\mathrm{MD}}_{g}^D$ based on $D=\rm Gini$.  
\begin{figure}[htb!]
\centering
 \includegraphics[width=16cm]{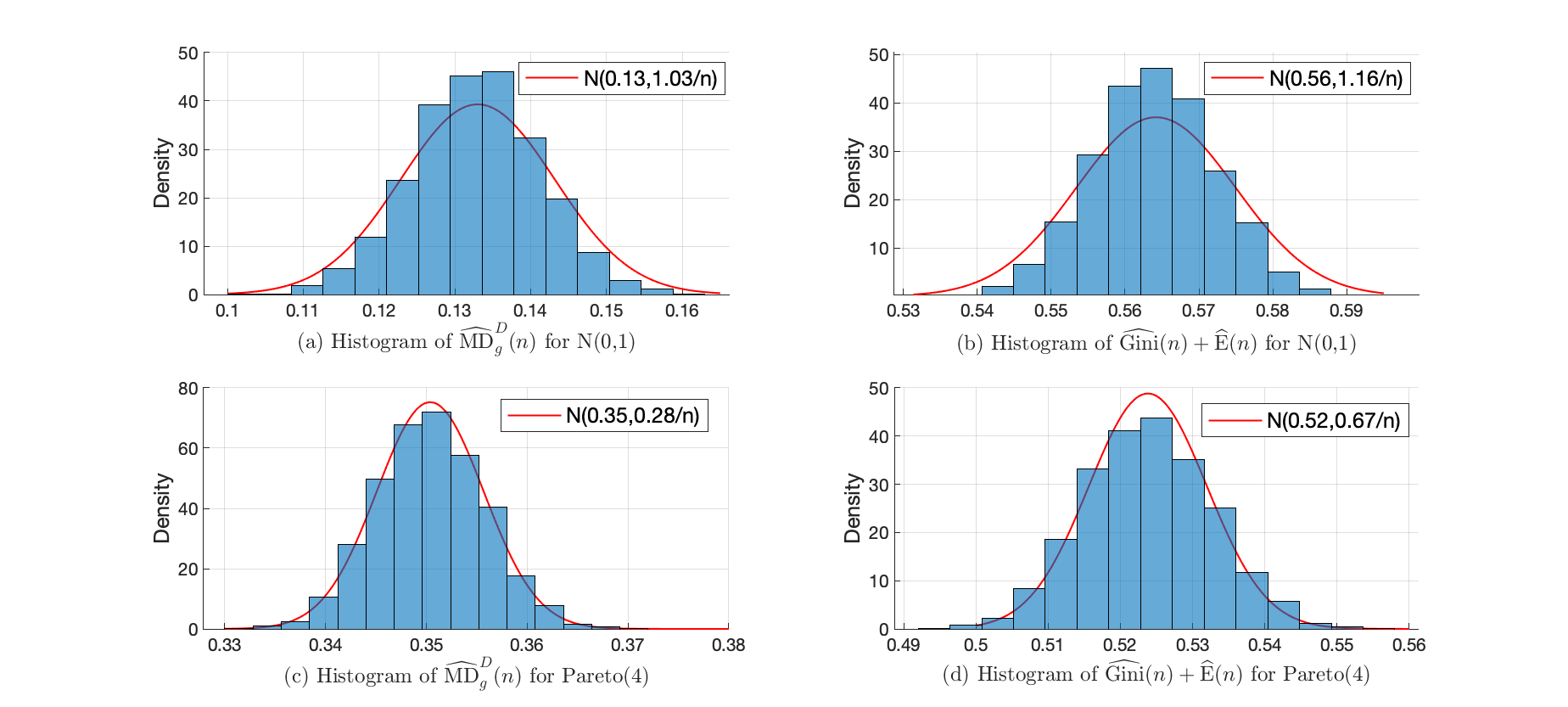}
 \captionsetup{font=small}
 \caption{ \small Left:  $\widehat{\mathrm{MD}}^D_{g}(n)$  with  $g(x)=x+e^{-x}-1$ and $D=\mathrm{Gini}$;
 Right: $\widehat{\mathrm{Gini}}(n)+\widehat{\E}(n)$}\label{fig:Dini1}
\end{figure}
\begin{figure}[htb!]
\centering
 \includegraphics[width=16cm]{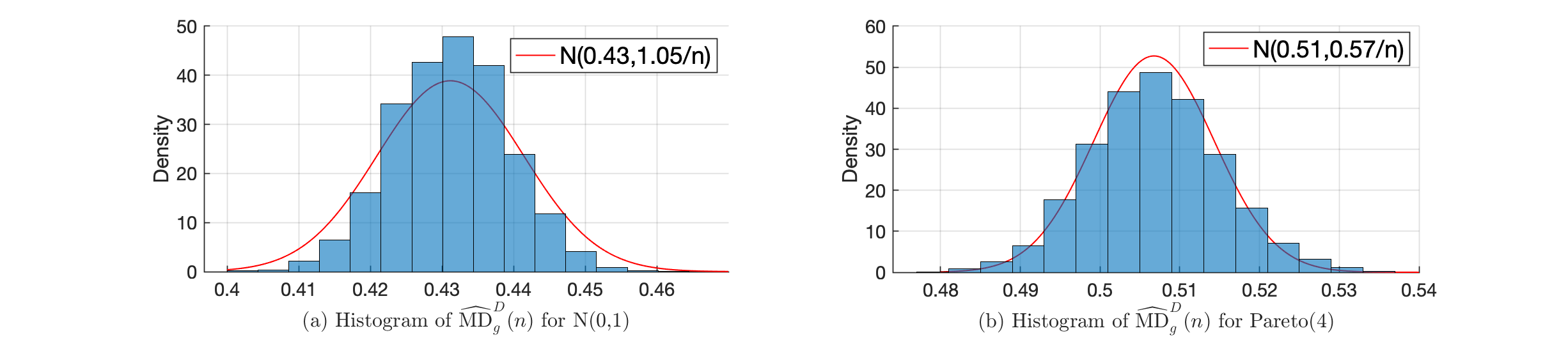}
 \captionsetup{font=small}
 \caption{ \small   $\widehat{\mathrm{MD}}^D_{g}(n)$  with  $D=\mathrm{Gini}$ and $g(x)=1-e^{-x}$}\label{fig:Dini2}
\end{figure}

{\color{black}
\section{Application to portfolio selection}\label{sec:opt}
 
In this section,  we consider portfolio selection problems based on $\mathrm{MD}_g^D$.
Let a 
random vector  $\mathbf{X}\in\X^n$ represent log-losses  (i.e., the negation of log-return of the daily asset prices; see \cite{MFE15}) from  $n$ assets and a vector $\mathbf w= (w_{1},  \dots, w_{n}) \in \Delta_{n}$ of portfolio weights, where 
$\Delta_{n}$ is the standard $n$-simplex,  given by
$$  \Delta_{n}=\left\{(w_{1},  \dots, w_{n}) \in[0,1]^{n}: w_{1} +\dots+w_{n}=1\right\}.$$ 
The total loss  of the portfolio is  $\mathbf{w}^\top \mathbf{X}$,  and the optimization problem  is formulated as \begin{equation}\label{eq:optimal}
\min _{\mathbf{w} \in \Delta_n} \mathrm{MD}_{g}^D(\mathbf{w}^{\top} \mathbf{X}).
\end{equation}
Note that convexity of $g$ implies that $\mathrm{MD}_{g}^D$ is a convex risk measure (see Theorem \ref{prop:2}), and in this case, problem \eqref{eq:optimal} is a convex optimization problem.


    We select the 4 largest stocks from each of the 10 different sectors of S\&P 500,    ranked by market cap in the beginning of 2014,  as the portfolio compositions (40 stocks in total).    The historical   asset prices are collected from Yahoo Finance,    covering the period from January 2, 2014,  to December 29, 2023, with a total of 2516 observations of daily losses.   We use the first two years' data for training and start investment strategies at the beginning of 2016. The initial wealth is set to $1$, and the risk-free rate  is $r=2.13\%$, which is the 10-year yield of the US treasury bill in Jan 2016.  Note that  the risk-free asset is not used to construct portfolios, but only used to calculate the Sharpe ratios.

Our portfolio strategies rebalance at the beginning of each month by solving \eqref{eq:optimal} and we assume no transaction cost.
At each period, we use the empirical distribution of the previous 500 log-loss data to estimate the risk measure, i.e., using an empirical estimator as in Section \ref{sec:6}.
This is the simplest method of computing the risk measure, although the standard method in practice is to fit a time-series model. We choose this simple method for illustrative purposes. 

We consider $\mathrm{MD}_{g}^D$
 by choosing $D=\ES_\alpha-\E$ with $\alpha=0.9$ and varying   $g$, since our main novelty lies in the risk weighting 
 function $g$. In particular, we consider the following   class of convex functions $g_
 \beta$ indexed by a parameter $\beta>0$ as in  Example \ref{exm:1} (i), given by \begin{equation}\label{eq:g_exp}g_
 \beta(x)=x+\frac 1 {\beta} \left(e^{-\beta x}-1\right).\end{equation} 
 The parameter $\beta$ has a natural interpretation of describing the convexity of $g_
 \beta$; that is, a smaller $\beta$ means a more convex $g_\beta$. This is because $g_\beta''/g_\beta'$ is decreasing in $\beta$ (see \cite{R81} for comparing convexity of functions). Note that $g_\beta(x)\to x$ as $\beta\to \infty$, which represents a linear risk weighting function.

 At each period, the problem is to minimize $\mathrm{MD}_{g}^D$ over $\mathbf w\in \Delta_n$, that is, 
 $$\min_{\mathbf w\in \Delta_n}:~~ \E[\mathbf w^\top \mathbf X]+ 
 g_\beta (\ES_\alpha(\mathbf w^\top \mathbf X) -\E[\mathbf w^\top \mathbf X]),
 $$
 where $\mathbf X$ follows the empirical distribution of the log-loss vector of the previous 500 trading days. 
 By using the ES optimization formula of \cite{RU02}, that is,
 $$
 \ES_\alpha(X) = \min_{x\in \R}\left\{x+ \frac 1 {1-\alpha}\E[(X-x)_+]\right\}, ~~X\in L^1,
 $$
 we can write the $\mathrm{MD}_{g}^D$ minimization problem as 
\begin{align}
    \label{eq:optimize-port}
    \min_{\mathbf w\in \Delta_n,~ x\in \R}:~~  \E[\mathbf w^\top \mathbf X]+ 
 g_\beta\left (x+ \frac{1}{1-\alpha} \E[(\mathbf w^\top \mathbf X-x)_+]-\E[\mathbf w^\top \mathbf X]\right).
\end{align}
The problem \eqref{eq:optimize-port} is jointly convex in $\mathbf w$ and $x$ and therefore can be easily solved numerically by modern computational programs such as MATLAB.

We choose $\beta =1,3,10,30,100$ to study the effect of $\beta$.
We  compare them with a portfolio that simply minimizes $\ES_\alpha$ (corresponding to $\beta=\infty$) and a \cite{M52} portfolio (by fixing an expected log-return at 10\% and minimizing the variance at each period).
 The portfolio performance is reported in Figure \ref{sec_40}.    Summary statistics,  including the annualized return (AR), the annualized volatility (AV), and the Sharpe ratio (SR)  are reported  in Table \ref{tab_40}.

\begin{table}[h]  
\def\arraystretch{1}
  \begin{center} 
    \caption{Annualized return (AR),  annualized volatility (AV), and Sharpe ratio (SR)  for   different portfolio strategies from Jan 2016 to Dec 2023} \label{tab_40} 
  \begin{tabular}{c|ccccccc} 
    $\%$ & $\beta=1$ &  $\beta=3$     & $\beta=10$ & $\beta=30$    & $\beta=100$  &$\ES_\alpha$ & MV  \\    \hline
AR & 9.28    &   9.05  &9.23 &  9.14 & 8.71& 8.41 &6.77 \\
AV & 19.92    & 15.21 & 12.78 & 12.26&12.05&12.10&11.47 \\
SR  & 35.89     & 45.47& 55.54&  57.19 & 54.59 & 51.95 &38.76  \\
 \hline \hline 
    \end{tabular}
    \end{center}
    \end{table}

 \begin{figure}[h]
\caption{Wealth processes for  portfolios, 40 stocks, Jan 2016 - Dec 2023}\label{sec_40} 
\centering
           \includegraphics[width=14cm]{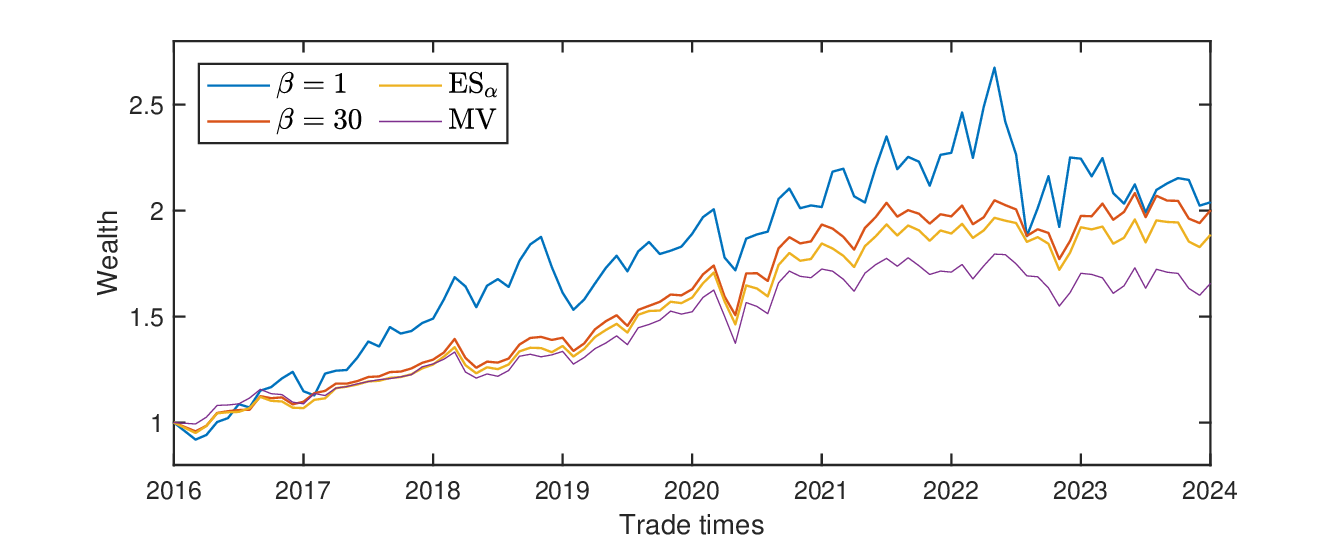}
\end{figure}

Our findings suggest that $\mathrm{MD}_g^D$  minimizing portfolios have some intuitive features. 
A smaller $\beta$, meaning a more convex and smaller risk weight function $g_\beta$, roughly leads to a larger annualized return and a larger annualized volatility. 
This is intuitive because the weight on the deviation measure $\ES_\alpha-\E$ is smaller for smaller $\beta$, thus putting higher value on the return. On the other hand, a large Sharpe ratio is attained around $\beta=30$, suggesting that a suitable level of  $\beta$ can balance the return and the volatility quite well. The more convex $g_\beta$ is, the more MD neglects small values of deviation. This may partially explain the observation in Figure \ref{sec_40}, where the   curve corresponding to $\beta=1$ has many more fluctuations than the  one corresponding to $\beta =30$. 
We admit that this observation does not have a theoretical justification, and it is based only on one dataset, so we do not intend to generalize. 
To fully understand the effect of convexity of $g$ on portfolio selection,  future studies are needed. 

Of course, our objective is not to identify which strategy yields the highest return or Sharpe ratio in financial practice, a question that depends highly on the market situation and economic environment.  Instead, the empirical results presented here mainly illustrate the interpretability of the strategy for portfolio selection based on ${\rm MD}_g^D$.  
Moreover, the optimization and portfolio strategies are easy to implement.

 }
    \section{Conclusion} \label{sec:7}
Even though mean-deviation measures are  widely considered in the literature and have a lot of  attractive features, there are few systemic treatments in the literature.   In this paper, we studied  the class $\mathrm{MD}^D_g$ of  mean-deviation measures whose  form is  a  combination of  the  deviation-related functional and the expectation,  which enriches the axiomatic theory of risk measures.      In particular, the obtained class  always belongs to the class of  consistent risk measures.  We showed that  $\mathrm{MD}^D_g$ can be
coherent, convex or star-shaped  risk measures,
identified with the corresponding properties of the risk-weighting function  $g$. 
By looking at this new class, the gap between  convex risk measures and consistent risk measures, arguably opaque in the literature due to lack of explicit examples, becomes transparent. 
The empirical estimators of   $\mathrm{MD}^D_g$ can be formulated based on those of $D_h$, VaR
and the expectation, and the asymptotic normality of the estimators is established.  We find  the asymptotic variance of  $\mathrm{MD}^D_g$   is  smaller than the one of  risk measures without distortion; a useful  feature in  statistical   estimation. This intuitively illustrates a trade-off between statistical efficiency and sensitivity to risk.  

 We discuss some future directions for the research of  $\mathrm{MD}^D_g$.   In fact,   the form of $\mathrm{MD}^D_g$ (not necessarily monotonic) includes many commonly used  reinsurance premium  principle as special cases; see, e.g., the variance related principles (\cite{FL06} and \cite{C12})  and the Denneberg’s absolute deviation principle (\cite{TWWZ20}). Thus,   it would be interesting to
formulate the optimal reinsurance problem where the reinsurance principle  is computed by  $\mathrm{MD}^D_g$.  The optimal reinsurance  strategies should rely on the properties and the form of  $g$.  It is  also meaningful to consider   risk sharing  problems and portfolio selection problems  under the criterion of minimizing  $\mathrm{MD}^D_g$ under a similar  framework  to  \citet{GMZ12, GMZ13} and \cite{GZ12}. 
Another direction of generalization is to relax cash-additivity we imposed throughout the paper to cash-subadditivity, as this allows for non-constant eligible assets when computing regulatory capital requirement; see \cite{ER09} and \cite{FKM14}. 
Finally, we worked throughout with law-invariant mean-deviation measures
 with respect to a fixed probability measure. 
 When the probability measure is uncertain, one needs to develop a framework of mean-deviation measures that can incorporate uncertainty and multiple scenarios in some forms (e.g., \cite{CF17}, \cite{DKW19} and \cite{FLW23}).

\vspace{3mm}
\textbf{Acknowledgements.}  
 We thank the Editor, an Associate Editor, two anonymous referees 
for helpful comments.  Xia Han  is supported by the National Natural Science Foundation of China (Grant No.~12301604, 12371471). 
Ruodu Wang is supported by the Natural Sciences and Engineering Research Council of Canada (CRC-2022-00141, RGPIN-2024-03728) and the Sun Life Research Fellowship.


\appendix

\setcounter{table}{0}
\setcounter{figure}{0}
\setcounter{equation}{0}
\renewcommand{\thetable}{S.\arabic{table}}
\renewcommand{\thefigure}{S.\arabic{figure}}
\renewcommand{\theequation}{S.\arabic{equation}}

\setcounter{theorem}{0}
\setcounter{proposition}{0}
\renewcommand{\thetheorem}{S.\arabic{theorem}}
\renewcommand{\theproposition}{S.\arabic{proposition}}
\setcounter{lemma}{0}
\renewcommand{\thelemma}{S.\arabic{lemma}}

\setcounter{corollary}{0}
\renewcommand{\thecorollary}{S.\arabic{corollary}}

\setcounter{remark}{0}
\renewcommand{\theremark}{S.\arabic{remark}}
\setcounter{definition}{0}
\renewcommand{\thedefinition}{S.\arabic{definition}}

\setcounter{example}{0}
\renewcommand{\theexample}{S.\arabic{example}}

\section{Monotonicity of mean-deviation models}\label{app:[M]}

In this appendix, we analyze monotonicity of mean-deviation models.
Recall the necessary condition in Lemma \ref{prop:finite_K}, that is $\lambda D\in \overline{\mathcal D}^p$ for some $\lambda>0$. We aim to re-examine [M] of the mean-deviation model $U=V(\E,D)$ and establish the characterization of [M] under this condition. We assume without loss of generality that $\lambda=1$, i.e., $D\in \overline{\mathcal D}^p$. For such $D$, by definition it is
\begin{align}\label{eq-condD1}
 \sup _{X \in  {(L^p)}^\circ }\frac{D(X)}{\esssup X-\E[X]}=1.
 \end{align}

\begin{proposition}\label{prop:chMon}
Fix $p\in[1,\infty]$. Let $D\in \overline{\mathcal D}^p$ and $U=V(\E,D)$ be defined by \eqref{eq:V} with $U(X)<\infty$ for all $X\in L^p$. Suppose that either $V:\R\times [0,\infty)$ is left continuous in its second argument or the maximizer in \eqref{eq-condD1} is attainable. Then, $U$ satisfies [M] if and only if $V(m-a,d+a)\le V(m,d)$ for all $m\in\R$ and $a,d\ge 0$.
\end{proposition}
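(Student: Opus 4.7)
The equivalence splits cleanly: the $(\Leftarrow)$ direction is a short monotonicity chase using coherence of $R = D + \E$, and the $(\Rightarrow)$ direction requires an explicit construction of a pair $X \leq Y$ that ``saturates'' the inequality. The main nontrivial ingredient is the normalized witness coming from the hypothesis $D \in \overline{\mathcal{D}}^p$.

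For $(\Leftarrow)$, take $X \leq Y$ in $L^p$, set $m = \E[Y]$, $d = D(Y)$, and $a = \E[Y]-\E[X] \geq 0$. Since $D \in \overline{\mathcal D}^p$ and by Proposition~\ref{prop:verify}, $R = D + \E$ is a coherent risk measure, so $R(X) \leq R(Y)$, which rearranges to $D(X) \leq d + a$. By the assumed inequality and component-wise monotonicity of $V$, $V(\E[X], D(X)) = V(m - a, D(X)) \leq V(m - a, d + a) \leq V(m, d) = U(Y)$, giving [M].

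For $(\Rightarrow)$, fix $m \in \R$ and $a, d \geq 0$. I first handle the case where the maximizer in \eqref{eq-condD1} is attained: pick $Z'$ with $D(Z') = \esssup Z' - \E[Z'] > 0$ and normalize to $Z_0 = (Z' - \E[Z'])/(\esssup Z' - \E[Z'])$, so that (D1), (D3) yield $\E[Z_0] = 0$, $\esssup Z_0 = 1$, $D(Z_0) = 1$. Now set
\begin{equation*}
Y = m + d Z_0, \qquad X = (m - a) + (d + a) Z_0.
\end{equation*}
Then $\E[Y] = m$, $\E[X] = m - a$, $D(Y) = d$, $D(X) = d + a$, and $Y - X = a(1 - Z_0) \geq 0$. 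Applying [M] to $X \leq Y$ gives $V(m - a, d + a) = U(X) \leq U(Y) = V(m, d)$, as desired.

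In the remaining case, assume $V$ is left continuous in its second argument. For each $\epsilon \in (0, 1)$, pick $Z'_\epsilon \in (L^p)^\circ$ with $D(Z'_\epsilon)/(\esssup Z'_\epsilon - \E[Z'_\epsilon]) > 1 - \epsilon$ and normalize as above to obtain $Z_\epsilon$ with $\E[Z_\epsilon] = 0$, $\esssup Z_\epsilon = 1$, and $D(Z_\epsilon) = 1 - \delta_\epsilon$ for some $\delta_\epsilon \in [0, \epsilon)$. Setting $Y_\epsilon = m + d Z_\epsilon$ and $X_\epsilon = (m - a) + (d + a) Z_\epsilon$, the same computation yields $X_\epsilon \leq Y_\epsilon$, $\E[X_\epsilon] = m - a$, $\E[Y_\epsilon] = m$, $D(X_\epsilon) = (d + a)(1 - \delta_\epsilon)$, and $D(Y_\epsilon) = d(1 - \delta_\epsilon) \leq d$. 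By [M] and component-wise monotonicity,
\begin{equation*}
V\bigl(m - a,\ (d + a)(1 - \delta_\epsilon)\bigr) \leq V\bigl(m,\ d(1 - \delta_\epsilon)\bigr) \leq V(m, d).
\end{equation*}
Letting $\epsilon \downarrow 0$ and invoking left continuity in the second argument on the left-hand side yields $V(m - a, d + a) \leq V(m, d)$. The anticipated obstacle is whether the left-continuity argument goes through with the bound $(d+a)(1-\delta_\epsilon)$ approaching $d+a$ strictly from below (it does, since $\delta_\epsilon > 0$ unless the maximizer is already attained, handled separately), but no deeper issue arises.
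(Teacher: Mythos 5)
Your proof is correct and follows essentially the same strategy as the paper's: for sufficiency, the bound $D(X)\le D(Y)+a$ (which you get from coherence of $D+\E$; the paper gets it from (D4) plus range normalization) combined with the assumed inequality on $V$; for necessity, an explicit pair $X\le Y$ built from a (near-)maximizer of the range-normalization ratio, with left continuity handling the approximate case. A minor point in your favour: by applying monotonicity of $V$ in the second argument before the hypothesis inequality at $(m,d)=(\E[Y],D(Y))$, you avoid the edge case $D(X)<a$ that the paper's choice $d=D(X)-a$ leaves implicit.
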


\begin{proof}

While the proof is similar to that of Proposition 4 in \cite{GMZ12}, which establishes the same necessary and sufficient condition under the assumption that $U=V(\E,D)$ is continuous, we provide the full details here for the sake of completeness and clarity.

We first verify sufficiency. For $X,Y\in L^p$ satisfying $X\le Y$ and $X\neq Y$, we denote by $a=\E[Y]-\E[X]>0$, and it holds that $\esssup(X-Y)\le 0$.
By \eqref{eq-condD1}, we have
\begin{align*}
D(X-Y)\le\esssup(X-Y)-\E[X-Y]\le a.
\end{align*}
Since $D$ satisfies (D4), we have 
\begin{align}\label{eq-strsub}
D(X)\le D(Y)+D(X-Y)\le D(Y)+a.
\end{align}
Therefore,
\begin{align*}
U(X)&=V(\E[X],D(X))=V(\E[Y]-a,D(X))\\
&\le V(\E[Y], D(X)-a)\le V(\E[Y],D(Y))=U(Y),
\end{align*}
where the first inequality follows from the assumption by letting $m=\E[Y]$, $d=D(X)-a$, and the second inequality is due to $D(X)-a\le D(Y)$ in \eqref{eq-strsub}.
Hence, we conclude that $U$ satisfies [M]. 
Conversely, we first consider the case that $V$ is left continuous in its second argument.
It follows from \eqref{eq-condD1} that for any $\epsilon>0$, there exists $X_1\in (L^p)^{\circ}$ such that 
\begin{align*}
1-\epsilon\le \frac{D(X_1)}{\esssup X_1-\E[X_1]}\le 1.
\end{align*}
Let $m\in\R$ and $d,a\ge 0$. We define
\begin{align*}
X_2=a\frac{X_1-\esssup X_1}{\esssup X_1-\E[X_1]}~~{\rm and}~~X_3=\frac{d}{a}X_2+m+d.
\end{align*}
Through standard calculation,   $\E[X_3]=m$, $(1-\epsilon)d\le D(X_3)\le d$, $\E[X_2+X_3]=m-a$ and $(a+d)(1-\epsilon)\le D(X_2+X_3)\le a+d$. Note that $X_2\le 0$ which implies $X_2+X_3\le X_3$. Using [M], we have
\begin{align*}
V(m-a,(a+d)(1-\epsilon))\le V(\E[X_2+X_3],D(X_2+X_3))\le V(\E[X_3],D(X_3))\le V(m,d).
\end{align*}
Letting $\epsilon\downarrow0$ and using the left continuity, we conclude that $V(m-a,a+d)\le V(m,d)$ for for all $m\in\R$ and $a,d\ge 0$. Now, we assume that the maximizer in \eqref{eq-condD1} is attainable, and the necessity follows a similar proof to the previous arguments by constructing $X_1$ such that $D(X_1)/(\esssup X_1-\E[X_1])=1$. Hence, we complete the proof.
\end{proof}

We note that the ES-deviation $\ES_\alpha-\E$ for $\alpha\in(0,1)$ serves as  an example where the maximizer in  \eqref{eq-condD1} is attainable.

\section{Axiomatization of monotonic mean-deviation measures}\label{app:B}

This appendix contains details on the axiomatization of  monotonic mean-deviation measures, and its connection to the results of \cite{GMZ12}. 
We first present two weaker axioms than A1 and A2, respectively. 
\begin{itemize}
\item[{B1}]    If $c_1\le c_2,$ then $c_1\succeq c_2$ for any $c_1, c_2\in \R$.
\item[B2]    For any  $X,Y$ satisfying $\E[X]=\E[Y]$ and  $c >0$,    $X \succeq Y$ if and only if $X+c \succeq  Y+c$.\end{itemize} 

\cite{GMZ12}  characterized a mean-deviation model $X\mapsto V(\E[X],D(X))$  using a set of axioms equivalent to A1, B2 and A3-A7. In contrast,
to obtain the   characterization in Theorem \ref{thm:1}, we first use   Axiom B1 to characterize $\mathrm{MD}^D_g$ in \eqref{eq:MD}, where $g$ does not necessarily satisfy the 1-Lipschitz condition, and $D$ is not necessarily weakly upper-range dominated. The characterized $\mathrm{MD}^D_g$  contains more  examples  such as the mean-variance and the mean-SD functionals, which  are not  monotonic but satisfy the weak monotonicity.   
 
\begin{proposition}\label{prop:1} 
Fix $p\in [1,\infty]$.
A preference  $\succeq$  satisfies Axioms B1 and A2--A7    if and only if  $\succeq$ can be represented by  $\mathrm{MD}^D_g=g \circ D+\E$ where    $D\in\mathcal{D}^p$ is continuous and $g:[0, \infty) \rightarrow \mathbb{R}$ is  continuous and strictly increasing.   
\end{proposition}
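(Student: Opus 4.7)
The plan for sufficiency is a routine check: with $\rho = g \circ D + \E$, B1 and A5 are immediate from $\rho(c)=c$; A2 uses (D1); A3 uses (D3) and strict monotonicity of $g$; A4 uses consistency of $D$ with convex order (valid for any continuous, law-invariant, convex deviation on a nonatomic space, as in the proof of Theorem~\ref{thm:monetay}); A6 follows from convexity of $D$ (equivalent to (D3)+(D4)); and A7 follows from continuity of $D$ and $g$.

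For the necessity direction, my first step is to construct the certainty equivalent. By A5 there exists $c\in\R$ with $X\simeq c$, and by B1 this $c$ is unique; set $\rho(X):=c$. Then $\rho$ represents $\succeq$, and A2 yields $\rho(X+c)=\rho(X)+c$. The strict part of A4 forces $\rho(X)>\E[X]$ for nonconstant $X$, and since $X\leq_{\rm cx}Y$ together with $Y\leq_{\rm cx}X$ implies $X\laweq Y$, A4 also gives law-invariance of $\rho$. A7 yields continuity of $\rho$ on $L^p$. Defining $U(X):=\rho(X)-\E[X]\geq 0$, the functional $U$ is translation-invariant and law-invariant, and depends only on $X-\E[X]$, so I view $U$ as a functional on $\mathcal{E}_0:=\{X\in L^p:\E[X]=0\}$.

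Next I construct $D$ and $g$ via a reference random variable. Fix a nonconstant $Z_0\in\mathcal{E}_0$ and set $g(\lambda):=U(\lambda Z_0)$ for $\lambda\geq 0$. Using the convex-order inequality $rY\leq_{\rm cx}Y$ for $r\in[0,1]$ and $\E[Y]=0$ (which follows from $\E[\phi(rY)]\leq r\E[\phi(Y)]+(1-r)\phi(0)\leq\E[\phi(Y)]$ by convexity of $\phi$ and Jensen), A4 combined with an A7-continuity argument shows that $g$ is continuous and strictly increasing on $[0,\infty)$ with $g(0)=0$, and that $\lambda\mapsto U(X/\lambda)$ is continuous and strictly decreasing on $(0,\infty)$ with limit $0$ as $\lambda\to\infty$ (using $X/\lambda\to 0$ in $L^p$). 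By the intermediate value theorem, for each nonconstant $X\in\mathcal{E}_0$ there is a unique $\lambda^*>0$ with $U(X/\lambda^*)=U(Z_0)$; set $D(X):=\lambda^*$, $D(X):=0$ for constants, and extend to $L^p$ via $D(X):=D(X-\E[X])$. Applying A3 to the equi-mean equivalence $X/D(X)\simeq Z_0$ with scaling factor $D(X)$ yields $X\simeq D(X)Z_0$, so $U(X)=g(D(X))$. Properties (D1), (D2), (D3), and (D5) follow directly from the construction, and continuity of $D$ follows from continuity of $\rho$ and $g^{-1}$ on the range of $g$. For (D4), A6 establishes convexity: for $D(X)=a$, $D(Y)=b>0$, both $X/a$ and $Y/b$ are equivalent to $Z_0$, so A6 gives $D(\alpha(X/a)+(1-\alpha)(Y/b))\leq 1$; writing $\alpha X+(1-\alpha)Y = t[\beta(X/a)+(1-\beta)(Y/b)]$ with $t=\alpha a+(1-\alpha)b$ and $\beta=\alpha a/t$, (D3) yields $D(\alpha X+(1-\alpha)Y)\leq t = \alpha D(X)+(1-\alpha)D(Y)$, and convexity combined with (D3) implies subadditivity.

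The main technical obstacle will be the surjectivity of the reference orbit: one must show that $\{U(\lambda Z_0):\lambda\geq 0\}$ covers $\{U(X):X\in\mathcal{E}_0\}$ so that $D(X)$ is always well-defined by the IVT step. The multiplicative structure $U(\lambda\mu X)=F(\lambda,U(\mu X))$ coming from A3 only guarantees that orbits are nested intervals $[0,L(u))$, not that they share the same supremum, and ruling out the pathological case of bounded orbits of different lengths will require combining law-invariance, A6, and A7 in a careful way. A clean workaround, should the direct reference construction encounter trouble, is to redefine $D$ as the Minkowski functional $D(X):=\inf\{\lambda>0:U(X/\lambda)\leq u_0\}$ of the sublevel set $\{U\leq u_0\}\cap\mathcal{E}_0$ for a fixed $u_0>0$, reducing the problem to establishing convexity of these sublevel sets for general (not just equi-risky) pairs.
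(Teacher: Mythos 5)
Your construction follows the paper's proof of Proposition \ref{prop:1} essentially step for step: the certainty equivalent $\rho$ obtained from B1, A2 and A5 (the paper outsources its existence and uniqueness to Theorem 3.3 of Alcantud and Bosi (2003); note that B1 alone only gives non-strict monotonicity on constants, so uniqueness also uses A2 together with the strict part of A4), the reference variable $Z_0$ (the paper's $X_0$) with $\phi(\lambda)=\rho(\lambda Z_0)$ shown to be continuous and strictly increasing via A3, A4 and A7, the definition of $D$ by inverting $\phi$ on the centred risk $\overline X=X-\E[X]$, and the verification of (D1)--(D5); in particular, your derivation of (D4) from A6 applied to the normalized pair $X/D(X)\simeq Z_0\simeq Y/D(Y)$ followed by positive homogeneity is exactly the paper's argument with $\lambda=D(X)/(D(X)+D(Y))$.

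The one step you leave open --- surjectivity of the reference orbit, i.e.\ that $U(Y)$ lies in the range of $\phi$ for every nonconstant $Y$ with $\E[Y]=0$ --- is a genuine point (the paper itself passes over it silently when it writes $\phi\circ\phi^{-1}(\rho(\overline X))=\rho(\overline X)$), but it does not require your Minkowski-functional fallback or any delicate interplay of A6 with law-invariance: A3 closes it in two lines. Let $M=\sup_{\lambda\ge 0}\phi(\lambda)>0$; by continuity of $\phi$ and $\phi(0)=0$, the range of $\phi$ contains $[0,M)$. Since $cY\to 0$ in $L^p$, A7 gives $U(cY)=\rho(cY)\to\rho(0)=0$ as $c\downarrow 0$, so there is $c>0$ with $U(cY)<M$, hence $U(cY)=\phi(\lambda)$ for some $\lambda\ge 0$; the strict part of A4 rules out $\lambda=0$, so $cY\simeq\lambda Z_0$ with both sides of mean zero, and A3 with scaling factor $1/c$ yields $Y\simeq(\lambda/c)Z_0$, i.e.\ $U(Y)=\phi(\lambda/c)$ is in the range of $\phi$. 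Thus the ``pathological case of bounded orbits of different lengths'' cannot occur, $D$ is well defined everywhere, and with this lemma inserted your proof is complete and coincides with the paper's.
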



\begin{proof}[Proof of Proposition \ref{prop:1}]
  We first show sufficiency.  Let $\mathrm{MD}^D_g=g \circ D+\E$ represent $\succeq$  where  $D\in \mathcal D^p $  and  $g:[0, \infty) \rightarrow \mathbb{R}$ is  some continuous, non-constant and increasing function.   
  Axioms B1, A2, A3, A5 are straightforward by the properties of $D\in\mathcal D^p$.
  
  


 
 The condition that $X\le_{\rm cx} Y$ implies $X\succeq Y$ in
 Axiom A4 comes from Theorem 4.1 of \cite{D05} which showed  that   every law-invariant continuous convex measure on an atomless probability space is consistent with convex ordering. Moreover, for any $X\in (L^p)^\circ$, since $ D (X)>0$, together with the fact that $g$ is a strictly increasing function, we have $g( D (X))>g(0)$. Therefore, we have $\mathrm{MD}^D_g(X)>\mathrm{MD}^D_g(\E[X])$, which implies that  $\E[X]\succ X$ for any $X\in (L^p)^\circ$. Hence, we have verified Axiom A4.

To show Axiom A6 for $\mathrm{MD}^D_g$, for any $X,Y\in L^p$  such that $\E[X]=\E[Y]$ and $\mathrm{MD}^D_g (X)=\mathrm{MD}^D_g(Y)$, we have $g( D (X))=g( D (Y))$ and thus $ D (X)=D (Y)$ because $g$ is strictly increasing. In this case, for any $\lambda>0$, we have $$\begin{aligned}\mathrm{MD}^D_g (\lambda X+(1-\lambda) Y)&=g( D (\lambda X+(1-\lambda)Y))+\lambda \E [X]+(1-\lambda)\E[Y]\\&\leq g(\lambda  D ( X)+(1-\lambda)  D (Y))+\E[Y]\\&\leq g( D (X))+\E[X]=\mathrm{MD}^D_g (X).\end{aligned}$$   
Axiom A7 follows directly form the fact that $D$ and $g$ are continuous.

 Next, we  prove   necessity.   Axioms B1 and A2, A3 and A5 imply the existence of a unique certainty equivalence functional  $\rho: L^p  \rightarrow\R$, i.e.,   we have  $X \succeq Y \Longleftrightarrow \rho(X)\le \rho(Y)$ for any $X,Y\in L^p$, and $\rho(c)=c$ for any $c\in\R$; see  Theorem 3.3 of \cite{AB03}.    In particular, $\rho$ is continuous by Axiom A7. 
 
  Let $X_0\in (L^p)^\circ$ be such that $\E\left[X_0\right]=0$. Define $\phi(\lambda)=\rho \left(\lambda X_0\right)$ for $\lambda\ge 0$. We have $\phi(0)=\rho(0)$. The continuity of $\phi$ follows from the continuity of $\rho$.  Since $\E[\lambda X_0]\succ\lambda X_0$ for any $\lambda> 0$ by Axiom 5,  we have $\rho(0)=\rho (\E[\lambda X_0])<\rho (\lambda X_0)$.  This implies that $\phi (\lambda)>\phi(0)$ for any $\lambda>0$. Moreover, it follows from Axiom 5 that $\rho (\lambda_1 X_0)\leq\rho (\lambda_2 X_0)$  for any $0<\lambda_1<\lambda_2$ as $\lambda_1 X_0\le_{\rm cx} \lambda_2X_0.$
 Thus, we have $\phi(\lambda_1)\leq\phi(\lambda_2)$ which implies that $\phi$ is an increasing function on $[0,\infty)$.  To show the inequality is strict, we assume by the contradiction, i.e., $\phi(\lambda_1)=\phi(\lambda_2)$.  In this case, we have $\rho  (\lambda_1 X_0)=\rho (\lambda_2 X_0)$ and  $\rho  (\lambda_1 k X_0)=\rho (\lambda_2 k X_0)$ for any $k>0$ by Axiom A3. Let $k=\lambda_1/\lambda_2<1$. By induction, we have $\rho (\lambda_1 X_0)=\rho (\lambda_1 k^n X_0)$ for any $n\in \N$. Letting $n\to \infty$, by    Axiom A7, we have $\rho(\lambda_1 X_0)=\rho(0)$, which contradicts to Axiom A4.   Thus,  $\phi$ is a strictly increasing and continuous function on $[0,\infty)$,  and its inverse function  $\phi^{-1}(x):=\inf\{\lambda\in[0,\infty): \phi(\lambda)\ge x\}$ is also strictly increasing and continuous on the range of $\phi$.

 For  $X\in L^p$, let $\overline X=X-\E[X]$ and
 $ D (X)=\phi^{-1}(\rho (\overline X))$. 
 Since $\rho$  and $\phi$ are continuous functions,  we know that $D$ is continuous.    
 Next, we aim to verify that $D\in\mathcal D^p$.
 It is clear that $ D $ is   law-invariant  since $\rho$ is   law-invariant by Axiom A4, and thus (D5) holds.   
 For any $c\in\R$, $ D (X+c)=\phi^{-1}(\rho (\overline {X+c}))= D (X)$, which implies (D1). 
Note that Axiom A4 implies $\rho(\overline{X})>\rho(0)$ for all $X\in(L^p)^\circ$.  We have   $D(X)=\phi^{-1}(\rho (\overline X))>\phi^{-1}(\rho(0))=0$ as $\phi^{-1}$ is strictly increasing.
 For any $c\in\R$,  $ D (c)=\phi^{-1}(\rho (0))=0$. Thus, (D2) holds.   
For any $X\in L^p$, we have 
\begin{align}\label{eq-eq}
\rho(D (X) X_0)=\phi(D(X))=\phi\circ\phi^{-1}(\rho(\overline X))=\rho(\overline X)\Longleftrightarrow
\overline X \simeq D (X) X_0.
\end{align}
It then follows from Axiom A3 that $\lambda\overline X \simeq \lambda D (X) X_0$ for all $\lambda\ge 0$.
Hence,  we have $\rho (\overline{\lambda X})=\rho (\lambda D  (X)X_0)$. On the other hand, $\overline{\lambda X} \simeq  D (\lambda X) X_0$ implies $\rho (\overline{\lambda X})=\rho ( D (\lambda X) X_0) $. This concludes that
$\rho (\lambda D (X)X_0)=\rho ( D (\lambda X)X_0)$, which is equivalent to $\phi(\lambda D (X))=\phi( D (\lambda X))$.  Note that $\phi$ is strictly increasing. It holds that $\lambda  D (X)= D (\lambda X)$ which implies (D3).
For  $X, Y\in L^p$, if $X$ or $Y$ is constant, (D4) holds directly. Otherwise, we have $ D(X)>0$ and $ D(Y)>0$. Combining \eqref{eq-eq} and Axiom A3,
we have  $\rho (\overline X/ D (X))=\rho  (X_0)$ and   $\rho (\overline Y/ D (Y))=\rho  (X_0)$ which implies $\rho (\overline X/ D (X))=\rho (\overline Y/ D (Y))$. Moreover, by Axiom A6,  for all $\lambda\in[0,1]$,  $$\rho  \left(\lambda \frac{\overline X}{ D (X)}+(1-\lambda)\frac{ \overline Y}{ D (Y)}\right)\leq \rho  \left(\frac{\overline Y}{ D (Y)}\right)=\rho   (X_0).$$ By setting $\lambda= D (X)/( D (X)+ D (Y))$, we have   $\rho  \left( ({\overline X+ \overline Y)}/{( D (X) +  D (Y))}\right)\leq \rho  (X_0).$ 
Applying \eqref{eq-eq} and Axiom A3 again, we have the following relation:
\begin{align*}
\frac{\overline X+\overline Y}{D(X)+D(Y)}=\frac{\overline{X+Y}}{D(X)+D(Y)}
\simeq\frac{D(X+Y)X_0}{D(X)+D(Y)}.
\end{align*}
Hence, denote by $k=D(X+Y)/(D(X)+D(Y))$, and 
we have  $\rho \left (k X_0\right)\leq \rho (X_0),$  which implies  $\phi\left (k\right)\leq \phi(1).$  Noting that $\phi$ is strictly increasing, we have $D (X+Y)\leq  D (X)+ D (Y)$ and (D4) holds.

For any $X\in L^p$,
using $X\simeq \rho(X)$, we have $X-\E[X]\simeq \rho(X)-\E[X] $ by Axiom A2, which implies  
$\rho(X-\E[X])= \rho(X)-\E[X]$.
Therefore,  using \eqref{eq-eq}, 
$$
\rho(X) = \rho(X-\E[X]) + \E[X] = \rho(\overline{X})+\E[X]=
\phi(D(X))+\E[X], ~~~\mbox{for all $X\in L^p$},
$$
where the last step follows from \eqref{eq-eq}.
This completes the proof.  
\end{proof}

\begin{proof}[Proof of Theorem \ref{thm:1}] Sufficiency is straightforward by  
combining Theorem \ref{thm:monetay}, Lemma  \ref{prop:finite_K} and Proposition \ref{prop:1}.  Next, we show the    necessity. By Proposition \ref{prop:1}, $\succeq$ can be represented by $\mathrm{MD}^{D'}_{f}=f \circ D'+\E$   where    $D'\in\mathcal{D}^p$, and $f:[0, \infty) \rightarrow \mathbb{R}$ is  some continuous   and strictly increasing function.  Since $\mathrm{MD}^{D'}_{f}$ satisfies monotonicity, by Lemma \ref{prop:finite_K},  we have $D'\in \overline{\mathcal D}^p_K$. Define $g=f\circ D$ and $D=D'/K$, we have $\mathrm{MD}^{D'}_{f}=\mathrm{MD}^{D}_{g}=g \circ D+\E$ where  $g:[0, \infty) \rightarrow \mathbb{R}$ is  some continuous and strictly increasing function and $D\in \mathcal {\overline D}^p$. By Theorem \ref{thm:monetay},  $g$ is 1-Lipschitz.
Hence, we complete the proof.
\end{proof}

 \section{Proof of  Theorem \ref{thm:4}}\label{app:pf_thm5}
 This appendix contains the proof of  Theorem \ref{thm:4}.
\begin{proof}
The Law of Large Numbers yields $\widehat x_n\stackrel{\mathbb{P}}{\rightarrow} \E[X]$. 
 By Theorem 2.6 of \cite{KSZ14}, 
 the empirical estimator for a finite convex risk measure on $L^p$ is consistent, that is,
 $\widehat D (n)  + \widehat x_n \stackrel{\mathbb{P}}{\rightarrow} D(X) +\E[X]$, and this gives  $\widehat D (n)  \stackrel{\mathbb{P}}{\rightarrow} D(X)$. 
Moreover, since $g\in\mathcal G$,  we have $g(\widehat D(n))+\widehat\E[n]\stackrel{\mathbb{P}}{\rightarrow}g(D(X))+\E[X]$.
This proves the first part of the result.


Next, we will show the asymptotic normality.
Let $B=(B_t)_{t\in [0,1]}$ be a standard Brownian bridge, and let $d_n= \sqrt{n} (\widehat{D}(n)-D(X) ) $,
  $e_n =\sqrt{n} (\widehat x_n-\E[X])  $, 
  and $g_n= \sqrt{n} (g(\widehat{D}(n))-g(D(X) )) $. 
We need to first show  
\begin{align}
    \label{eq:joint-conv}
(d_n,e_n)   \stackrel{\mathrm{d}} {\rightarrow}(Z,W):=\left(  \int_0^1 \frac{B_s h'(1-s)}{\tilde f(s)} \mathrm{d} s, \int_0^1 \frac{B_s}{\tilde f(s)} \mathrm{d} s\right). 
\end{align} 
By the Cram\'er-Wold theorem, it suffices to show 
\begin{align}
    \label{eq:joint-conv2}
 a d_n+ be_n   \stackrel{\mathrm{d}}\rightarrow  a Z+ b W \mbox{~for all $a,b\in \R$.} 
\end{align} 
Note that $aD + b \E$
can be written as an integral of the quantile, that is, 
$$
a D(X) + b\E[X] = \int_0^1 F^{-1}(t) (a h'(1-t)+b)\d t.
$$
Denote by $A_n$ 
the empirical quantile process, that is,
$$
A_n(t)= \sqrt{n} (\widehat F^{-1}_n(t) - F^{-1}(t)), ~~~t\in (0,1).
$$
It follows that 
$$ a d_n+ be_n = 
\int_0^1 A_n(t) (a h'(1-t)+b)\d t.
$$
Using this representation, the convergence \eqref{eq:joint-conv2} can be verified using Theorem 3.2 of \cite{JZ03}, and we briefly verify it here.
It is well known that, under Assumption \ref{assump:1}, as $n\to \infty$, $A_n$
converges to the Gaussian process $B/\tilde f$   in $L^\infty[1-\delta,1+\delta]$ for any $\delta>0$ (see e.g.,  \cite{DGU05}).
This yields 
 $$
\int_\delta^{1-\delta} A_n(t) (a h'(1-t)+b)\d t \stackrel{\mathrm{d}}\rightarrow  
\int_\delta^{1-\delta}  \frac{B_t}{\tilde f(t)}(a h'(1-t)+b)\d t.
$$ 
To show \eqref{eq:joint-conv2}, it suffices to verify      \begin{equation} 
\label{eq:joint-conv3}
\int_\delta^{1-\delta}  \frac{B_t}{\tilde f(t)}(a h'(1-t)+b)\d t\to  \int_0^1 \frac{B_t}{\tilde f(t)}(a h'(1-t)+b)\d t \mbox{~~~as $\delta \downarrow 0$}.
\end{equation}
Denote by $w_t=t(1-t)$. 
Since $h\in \Phi^p$ and 
$X\in L^{\gamma}$, we have,  for some $C>0$, 
$$|h'(1-t) |\le C   w_t^{1/p-1};~~~
|F^{-1}(t)| \le C w_t^{-1/\gamma};~~~ \frac{1}{\tilde f(t) } =
\frac{\d F^{-1}(t)} {\d t} \le C  w_t ^{-1/\gamma -1}. 
$$ 
Note that $1/p-1/\gamma >1/2$ and $B_t=o_{\p}(w_t^{1/2-\epsilon})$ for any $\epsilon>0$ as $t\to 0$ or $1$. Hence,  
for some $\eta>0$,
$$
\left|B_t \frac{a h'(1-t)+b}{\tilde f(t)}\right| =o_{\p} (w_t^{\eta-1}) \mbox{~~~for $t\in (0,1)$},
$$
and this 
guarantees \eqref{eq:joint-conv3}. Therefore, \eqref{eq:joint-conv} holds. 

By the  Mean Value Theorem, there exists $x_0$ between $D(X)$ and $\widehat D(n)$  such that 
$$ 
\sqrt{n}(g (\widehat D(n))-g(D(X)))=g'(x_0)\sqrt{n}(\widehat D(n)-D(X)). $$
Using the fact that $\widehat D(n)\stackrel{\mathbb{P}}{\rightarrow}  D(X)$,  we get  
$$
(g_n,e_n)   \stackrel{\mathrm{d}}{\rightarrow}\left(   g'(D(X))\int_0^1 \frac{B_s h'(1-s)}{\tilde f(s)} \mathrm{d} s , \int_0^1 \frac{B_s}{\tilde f(s)} \mathrm{d} s\right). $$ 
Hence,  $$
\begin{aligned}
\sqrt{n}\left(\widehat{\mathrm{MD}}^D_g(n)-\mathrm{MD}^D_g(X)\right) =g_n+e_n\stackrel{\mathrm{d}}{\rightarrow}g'(D(X))\int_0^1 \frac{B_sh'(1- s)}{\tilde f(s)} \mathrm{d} s+ \int_0^1 \frac{B_s}{\tilde f(s)} \mathrm{d} s,
\end{aligned}
$$
or equivalently,
 $$
\begin{aligned}
\sqrt{n}\left(\widehat{\mathrm{MD}}^D_g(n)-\mathrm{MD}^D_g(X)\right) \stackrel{\mathrm{d}}{\rightarrow}\int_0^1 \frac{B_s}{\tilde f(s)}(h'(1- s)g'(D(X))+1)  \mathrm{d} s.
\end{aligned}
$$
Using the convariance property of the Brownian bridge, that is,
$\mathrm{Cov}(B_t, B_s)=s-s t$ for $s<t$, we have
\begin{align*}&\var\left[\int_0^1 \frac{B_s(h'(1- s)g'(D(X))+1)}{\tilde f(s)} \mathrm{d} s\right]\\ & =\mathbb{E}\left[\int_0^1 \int_0^1 \frac{(h'(1- s)g'(D(X)) +1)(h'(1- t)g'(D(X))+1)B_s B_t}{\tilde f(s) \tilde f(t)} \mathrm{d} t \mathrm{d} s\right]\\ & = \int_0^1 \int_0^1 \frac{(h'(1- s)g'(D(X)) +1)(h'(1- t)g'(D(X))+1)(s \wedge t-s t)}{\tilde f(s) \tilde f(t)} \mathrm{d} t \mathrm{d} s. \end{align*} 
Thus,  $
\sqrt{n}\left(\widehat{\mathrm{MD}}^D_g(n)-\mathrm{MD}^D_g(X)\right) \stackrel{\mathrm{d}}{\rightarrow} \mathrm{N}(0,\sigma^2_g)
$ in which $\sigma^2_g$ is given by \eqref{eq:sigma_g}.
\end{proof}

\section{Worst-case values under model uncertainty}\label{sec:5}
In the context of robust risk evaluation, one may only have partial information on a risk $X$ to be evaluated. Thus, we  further discuss two model uncertainty problems based on  $\mathrm{MD}^D_g$.    

We  first consider the case in which one only knows the mean and the variance of $X$.  This setup has wide applications in model uncertainty and portfolio optimization.
Denote by $L^2(m, v)=\left\{X \in L^2: \mathbb{E}[X]=m, ~\sigma^2(X)=v^2\right\}$.   For a fixed $g\in\mathcal G$ and $D\in \mD^p$ with $p\in[1,2]$, we  consider the following worst-case problem 
\begin{equation}\label{eq:rb1}
\overline{\mathrm{MD}}^D_g(m, v)=\sup \left\{\mathrm{MD}^D_g(X): X \in L^2(m, v)\right\}.
\end{equation}
{\color{black} 
Since $g$ is increasing, the way to solve \eqref{eq:rb1} is similar to those used for related worst-case risk measures studied in the literature; see, for example, \cite{LCLW20}, \cite{PWW20} and \cite{BPV24}. In particular, if  $g(x) = \lambda x$ for some $0 < \lambda \leq 1$, \eqref{eq:rb1} reduces to the problem studied in Section 5 of \cite{PWW20}.} 
\begin{proposition}\label{prop:3}
Suppose that $p\in[1,2]$, $m \in \mathbb{R}$, $v>0$, $g\in\mathcal G$ and  $D\in\mathcal D^p$ is in \eqref{eq:h}.  We have $$\overline{\mathrm{MD}}^D_g(m, v)=\sup _{h \in \Psi^p}g\left(v \left\|h^{\prime}\right\|_2\right)+m.$$ 
\end{proposition}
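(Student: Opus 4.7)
The plan is to peel off the fixed expectation and the increasing continuous $g$, then reduce the problem to a Cauchy--Schwarz estimate on the quantile level. Since every $X\in L^2(m,v)$ satisfies $\E[X]=m$, I will begin by writing $\overline{\mathrm{MD}}^D_g(m,v) = \sup_{X\in L^2(m,v)} g(D(X)) + m$, so the task reduces to maximizing $g\circ D$. Using the dual representation \eqref{eq:h} for $D$ together with the fact that $g\in\mathcal G$ is continuous (being $1$-Lipschitz) and increasing, I can exchange the two suprema and move $g$ through to obtain
\[
\sup_{X\in L^2(m,v)} g(D(X)) = \sup_{h\in\Psi^p} g\!\left(\sup_{X\in L^2(m,v)} \int_0^1 F_X^{-1}(\alpha)\, h'(1-\alpha)\, \d\alpha\right).
\]

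For each fixed $h\in\Psi^p$, I will evaluate the inner supremum. Since $h(0)=h(1)=0$, one has $\int_0^1 h'(1-\alpha)\,\d\alpha = 0$, which lets me recenter:
\[
\int_0^1 F_X^{-1}(\alpha)\, h'(1-\alpha)\,\d\alpha = \int_0^1 \bigl(F_X^{-1}(\alpha)-m\bigr)\, h'(1-\alpha)\,\d\alpha.
\]
For $p\in[1,2]$ the conjugate exponent $q=(1-1/p)^{-1}$ satisfies $q\geq 2$, so on the probability space $((0,1),\mathrm{Leb})$ the inclusion $L^q\hookrightarrow L^2$ yields $\|h'\|_2\leq\|h'\|_q<\infty$ for every $h\in\Psi^p\subseteq\Phi^p$. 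Applying Cauchy--Schwarz and the identity $\int_0^1 (F_X^{-1}(\alpha)-m)^2\,\d\alpha=\var(X)=v^2$ gives the upper bound $v\|h'\|_2$.

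To show this bound is attained, I will exploit the concavity of $h$: $h'$ is decreasing, so $\alpha\mapsto m+(v/\|h'\|_2)\,h'(1-\alpha)$ is nondecreasing on $(0,1)$ and hence a valid quantile function (the degenerate case $\|h'\|_2=0$ forces $h\equiv 0$ and both sides are zero, which I will dispatch separately). Because $(\Omega,\mathcal F,\p)$ is nonatomic, there exists $X^*\in L^2$ with this quantile; a direct computation shows $\E[X^*]=m$ and $\var(X^*)=v^2$, so $X^*\in L^2(m,v)$, and Cauchy--Schwarz holds with equality at $X^*$. Combining the bound with attainability yields $\sup_{X\in L^2(m,v)} \int_0^1 F_X^{-1}(\alpha)\,h'(1-\alpha)\,\d\alpha = v\|h'\|_2$, and substituting back produces the claimed formula.

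I expect no substantial obstacle. The two points that deserve careful verification are (i) the integrability $\|h'\|_2<\infty$ for $h\in\Psi^p$ in the regime $p\leq 2$, which follows from the nesting of $L^q$ spaces on a probability space, and (ii) the monotonicity of the optimal quantile function, which is precisely where the concavity built into the class $\mathcal H$ is used.
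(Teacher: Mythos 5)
Your argument is correct and follows the same overall route as the paper: peel off the mean, use the representation \eqref{eq:h}, and interchange the suprema with $g$ using that $g$ is increasing and ($1$-Lipschitz hence) continuous. The only difference is in the key step $\sup_{X\in L^2(m,v)} D_h(X)=v\|h'\|_2$: the paper obtains this by citing Theorem 3.1 of \cite{LCLW20}, whereas you prove it directly via recentering, Cauchy--Schwarz, and the explicit maximizer with quantile function $\alpha\mapsto m+(v/\|h'\|_2)h'(1-\alpha)$, whose monotonicity follows from concavity of $h$. Your self-contained version is a valid (and standard) proof of that cited result, and you correctly flag the two points needing care, namely $\|h'\|_2\le\|h'\|_q<\infty$ for $p\le 2$ and the validity of the optimal quantile function; no gap.
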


\begin{proof} By \eqref{eq:h} and \eqref{eq:D_h}, we have 
\begin{align}\label{eq-MVrobust1}
\overline{\mathrm{MD}}^D_g(m, v)&=\sup_{X\in L^2(m,v)}g(D(X))+m
\notag \\&=\sup_{X\in L^2(m,v)} g\left(\sup _{h \in \Psi^p}\left\{\int_0^1 \VaR_\alpha(X) h'(1-\alpha)\mathrm{d} \alpha\right\} \right) +m\notag\\
&=\sup_{X\in L^2(m,v)} g\left(\sup_{h\in\Psi^p} D_h(X)\right)+m
= g\left(\sup_{h\in\Psi^p} \sup_{X\in L^2(m,v)} D_h(X)\right)+m,
\end{align}
where the last step holds because $g$ is increasing.
By Theorem 3.1 of \cite{LCLW20}, we have that 
$
\sup_{X\in L^2(m,v)} D_h(X)=v\left\|h^{\prime}\right\|_2
$
for any $h \in \Phi^p$. This completes the proof.
  \end{proof} 
  
\begin{remark}The worst-case problem formulated in \eqref{eq:rb1} can be extended to  the case  of other central moment instead of the variance. For $a>1, m \in \mathbb{R}$ and $v>0$, denote by \begin{equation}\label{eq:La}L^a(m, v)=\left\{X \in L^a: \mathbb{E}[X]=m, ~\mathbb{E}\left[|X-m|^a\right]=v^a\right\}.\end{equation}
Suppose that $p\in[1,a]$.
 Theorem 5 of \cite{PWW20} implies  that 
 $$\sup \left\{D_h(X): X \in L^p(m, v)\right\}=v[h]_q,~~~h\in\Phi_p,
$$
 where $q=(1-1/p)^{-1}$, $D_h$ is defined by \eqref{eq:D_h} and
$[h]_q=\min_{x\in\R}\|h'-x\|_q$.
Therefore, for $D\in\mathcal D^p$ defined by \eqref{eq:h}, it follows the similar arguments in the proof of Proposition \ref{prop:3} that
$$
\left\{{\rm MD}_g^D(X):X\in L^a(m,v)\right\}=\sup _{h \in \Psi^p}g\left(v[h]_q\right)+m.
$$
\end{remark}

\begin{example}Let $D=\ES_\alpha-\E$ with $\alpha\in(0,1)$. We have $D=D_h$, where $h(t)=(t-\alpha)_+/(1-\alpha)-t$ for $t\in[0,1]$. It holds that
\begin{align*}
[h]_q=\min_{x\in\R}\|h'-x\|_q
=\min_{x\in\R} \left(\alpha|1+x|^q+(1-\alpha)\left|\frac{\alpha}{1-\alpha}-x\right|^q\right)^{1/q}.
\end{align*}
By standard manipulation, we conclude that the minimizer of the above optimization problem can be attained at $x^*=(\alpha(1-\alpha)^{p-2}-\alpha^{p-1})/(\alpha^{p-1}+(1-\alpha)^{p-1})$, and the optimal value is $[h]_q=\alpha\left(\alpha^p(1-\alpha)+\alpha(1-\alpha)^p \right)^{-1 / p}$.
Thus, in this case, we have
$$
\overline{\mathrm{MD}}^D_g(m, v)=m + g\left(v  \alpha\left(\alpha^p(1-\alpha)+\alpha(1-\alpha)^p \right)^{-1 / p}\right).
$$ 
We compare the results for normal, {Pareto} and exponential distributions  with  the worst-case distribution with the same mean and variance. Setting $p=2$ and both mean and variance to $1$, we show the values of ${\mathrm{MD}}^D_g$ and $\overline {\mathrm{MD}}^D_{g}$  when $D=\ES_\alpha-\E$ for different  values of $\alpha\in[0.9, 0.99]$ in   Figure \ref{fig:worst_mv}. {\color{black}In particular, when $g(x) = x$, $\mathrm{MD}^D_g$ simplifies to $\mathrm{ES}_\alpha$. Given that $g$ is 1-Lipschitz, it is expected that the worst-case values of $\mathrm{ES}_\alpha$ will be larger than those of $\overline{\mathrm{MD}}^D_g$ for other forms of $g$.}  \begin{figure}[htb!]
\centering
 \includegraphics[width=17cm]{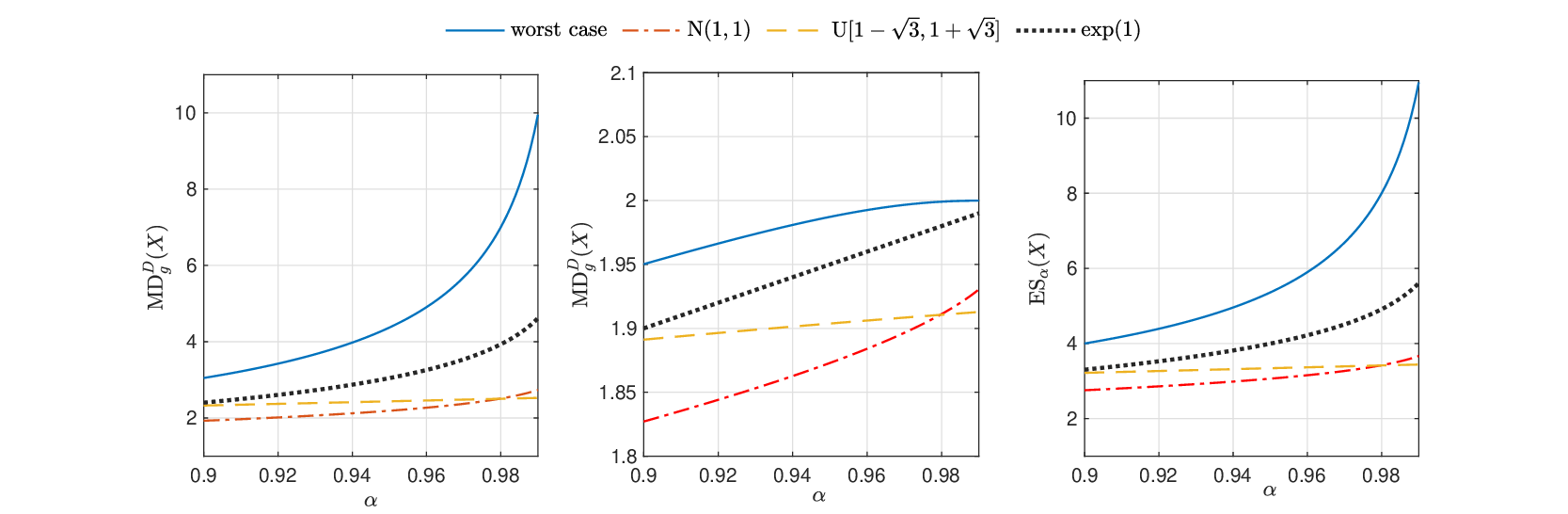}
 \captionsetup{font=small}
{\caption{   \small The values  of ${\mathrm{MD}}^D_g$ and $\overline {\mathrm{MD}}^D_{g}$ with   $g(x)=x+e^{-x}-1$  (left), $g(x)=1-e^{-x}$ (middle) and $g(x)=x$ (right)}\label{fig:worst_mv}}
\end{figure}

\end{example}

Optimization problems under the uncertainty set of a Wasserstein ball are also common in the literature when quantifying the discrepancy between a benchmark distribution and alternative scenarios; see e.g., \cite{EK18}.  
For two distributions $F$ and $G$, the type-$p$ Wasserstein metric with $p \geqslant 1$,  is given by$$
W_p(F, G)=\left(\int_0^1\left|F^{-1}(u)-G^{-1}(u)\right|^p \mathrm{~d} u\right)^{1 / p}.
$$
Denote by $\mathcal M_p$ the set of all distribution functions that have finite $p$th moment. For $F_0\in\mathcal M_p$ and $\epsilon\ge0$, we define the following uncertainty set based on the type-$p$ Wasserstein metric
\begin{align*}
\mathcal B_p(F_0,\epsilon)=\{F\in \mathcal M_p: W_p(F,F_0)\le \epsilon\}.
\end{align*}
The above uncertainty set is also known as a type-$p$ Wasserstein ball (see e.g., \cite{K19}, \cite{WLM22} and \cite{BPV24}), where $F_0$ is the center and $\epsilon$ is the radius. Note that  $\epsilon=0$ corresponds to the case of no model uncertainty.
In what follows, we focus on the type-$2$ Wasserstein ball.
For any $\varepsilon \geqslant 0$, $g\in\mathcal G$ and $D\in  \mD^p$ with $p\in[1,2]$, we define the worst-case   ${\mathrm{MD}}^D_g$    as
 $$\widetilde{\mathrm{MD}}^D_g
(X|\epsilon)=\sup \left\{\mathrm {MD}^D_g(Y): F_Y\in \mathcal B_2(F_X,\epsilon)
\right\}.
$$  
 The following proposition
gives a formula to compute the worst-case value of $\mathrm{MD}^D_g$ under the type-2 Wasserstein ball.

\begin{proposition}\label{prop:4}
Suppose $p\in[1,2]$, $g\in\mathcal G$ and $D=D_h$  in  \eqref{eq:D_h} where $h\in\Phi^p$.  
We  have 
\begin{align*}
\widetilde{\mathrm{MD}}^D_g
(X|\epsilon)=\sup_{t\in[-1, 1]}\left\{g\left(\epsilon\sqrt{1-t^2}  \left\|h^{\prime}\right\|_2
+D(X)\right)+t\epsilon+\E[X]\right\},  {~~~\epsilon\ge 0, ~X\in L^2.}
\end{align*}
\end{proposition}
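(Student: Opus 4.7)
The plan is to convert the Wasserstein constraint to an $L^2$-constraint on quantile functions via the standard identity $W_2(F_Y,F_X)^2 = \int_0^1 (F_Y^{-1}(u) - F_X^{-1}(u))^2\,\d u$, and then reduce to a two-parameter optimization. Writing any admissible $Y$ as having quantile function $F_Y^{-1} = F_X^{-1} + \eta$ with $\|\eta\|_2\le \epsilon$, where $\|\cdot\|_2$ is the $L^2([0,1])$ norm, law-invariance together with \eqref{eq:D_h} yields
\[
\E[Y] = \E[X] + \langle \eta, 1\rangle, \qquad D(Y) = D(X) + \langle \eta, \phi\rangle,
\]
with $\phi(u):= h'(1-u)$ and $\langle\cdot,\cdot\rangle$ the $L^2([0,1])$ inner product. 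The crucial fact is that $\langle 1,\phi\rangle = \int_0^1 h'(1-u)\,\d u = h(1)-h(0) = 0$, so $1\perp \phi$ in $L^2([0,1])$, with $\|1\|_2 = 1$ and $\|\phi\|_2 = \|h'\|_2$ (finite since $h\in\Phi^p$ with $p\in[1,2]$).

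For the upper bound, I would project $\eta$ onto $\mathrm{span}(1,\phi)$ and apply Bessel's inequality, giving
\[
\langle \eta, 1\rangle^2 + \frac{\langle \eta, \phi\rangle^2}{\|h'\|_2^2} \;\le\; \|\eta\|_2^2 \;\le\; \epsilon^2.
\]
Setting $t = \langle \eta, 1\rangle/\epsilon \in [-1,1]$, the inequality forces $\langle \eta, \phi\rangle \le \sqrt{1-t^2}\,\epsilon\,\|h'\|_2$, and since $g$ is non-decreasing this yields
\[
\mathrm{MD}^D_g(Y) \;\le\; g\bigl(\epsilon\sqrt{1-t^2}\,\|h'\|_2 + D(X)\bigr) + t\epsilon + \E[X],
\]
which is dominated by the sup over $t\in[-1,1]$.

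For the matching lower bound I would exhibit, for each $t\in[-1,1]$, the explicit candidate quantile function
\[
Q(u) := F_X^{-1}(u) + t\epsilon + \frac{\sqrt{1-t^2}\,\epsilon}{\|h'\|_2}\, h'(1-u),\qquad u\in(0,1).
\]
The main obstacle is verifying that $Q$ is non-decreasing so that it defines a bona fide distribution, and this is the decisive use of concavity of $h$: indeed, $h$ concave makes $h'$ non-increasing, so $u\mapsto h'(1-u)$ is non-decreasing; coupled with $\sqrt{1-t^2}\ge 0$ and the monotonicity of $F_X^{-1}$, $Q$ is non-decreasing. Letting $Y^*$ be a random variable with quantile function $Q$ (which lies in $L^2$ because $F_X^{-1}, h'\in L^2([0,1])$), direct computation exploiting $\langle 1,\phi\rangle = 0$ gives $\|Q - F_X^{-1}\|_2 = \epsilon$, $\E[Y^*] = \E[X] + t\epsilon$, and $D(Y^*) = D(X) + \sqrt{1-t^2}\,\epsilon\,\|h'\|_2$; hence as $t$ ranges over $[-1,1]$, $\mathrm{MD}^D_g(Y^*)$ attains every value inside the stated supremum, closing the inequality.
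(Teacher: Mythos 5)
Your proof is correct, and it takes a genuinely different route from the paper's. The paper first identifies $\mathcal B_2(F_X,\epsilon)$ with $\{F_Y:\|Y-X\|_2\le\epsilon\}$ via a comonotonic coupling, writes $Y=X+V$, stratifies by $\E[V]=\mu-\mu_0$, uses subadditivity plus comonotonic additivity of $D_h$ to reduce the inner problem to maximizing $D_h(V)$ under mean--variance constraints, and then quotes the moment-based worst-case formula $\sup\{D_h(V):\sigma^2(V)\le s^2,\ \E[V]\ \text{fixed}\}=s\|h'\|_2$ from Proposition \ref{prop:3} (Theorem 3.1 of \cite{LCLW20}). You instead work entirely in quantile space: since $D_h$ is linear in the quantile function, $D(Y)-D(X)=\langle\eta,\phi\rangle$ with $\phi(u)=h'(1-u)$, and the key identity $\langle 1,\phi\rangle=h(1)-h(0)=0$ turns the whole problem into a two-dimensional projection, with Bessel's inequality giving the upper bound and an explicit extremal quantile function giving the lower bound. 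What your approach buys is self-containment and transparency --- no appeal to the external moment-problem result, and your upper bound does not even use concavity of $h$ (only the lower bound does, to certify that $Q$ is non-decreasing and hence a genuine quantile function), whereas the paper's subadditivity step requires it. What the paper's route buys is modularity (it recycles Proposition \ref{prop:3} and makes the link to the mean--variance uncertainty set explicit) and a form that extends more readily to general $D\in\mD^p$ via the supremum over $\Psi^p$ in \eqref{eq:h}. Two trivial caveats to your write-up: the degenerate case $\|h'\|_2=0$ (i.e.\ $h\equiv0$) should be excluded or handled separately before dividing by $\|h'\|_2$, and $Q$ need only agree with the left quantile of $Y^*$ up to a null set, which does not affect any of the integrals; neither affects the validity of the argument.
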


\begin{proof}  
Denote by $\mathcal M=\{F_Y: Y\in L^2,~\|Y-X\|_2\le \epsilon\}$.  We first aim to show that $\mathcal M=\mathcal B_2(F_X,\epsilon)$. Note that $\mathcal B_2(F_X,\epsilon)=\{F_Y: Y\in L^2,~\int_0^1 |F_Y^{-1}(u)-F_X^{-1}(u)|^2\d u\le \epsilon^2\}$. It is obvious that $\mathcal M\subseteq \mathcal B_2(F_X,\epsilon)$ since $\|Y-X\|_2^2\ge \int_0^1 |F_Y^{-1}(u)-F_X^{-1}(u)|^2\d u$ for any $X,Y\in L^2$. To see the converse direction, for any $F\in \mathcal B_2(F_X,\epsilon)$, let $Y\in L^2$ be such that $Y$ and $X$ are comonotonic and $Y$ has distribution $F$. It holds that $\|Y-X\|_2^2=\int_0^1 |F^{-1}(u)-F_X^{-1}(u)|^2\d u\le \epsilon^2$, where the last step is due to $F\in \mathcal B_2(F_X,\epsilon)$. Hence, we have $F\in \mathcal M$. This implies that $\mathcal B_2(F_X,\epsilon)\subseteq\mathcal M$, and we have concluded that $\mathcal M=\mathcal B_2(F_X,\epsilon)$. 
Note that ${\rm MD}_g^D$ is law-invariant.
We have
\begin{align}\label{eq-RW1}
\widetilde{\mathrm{MD}}^D_g(X|\epsilon)
 &=\sup\{{\rm MD}_g^D(Y): F_Y\in \mathcal B_2(F_X,\epsilon)\}\notag\\
&=\sup_{\|Y-X\|_2\le \epsilon} {\rm MD}_g^D(Y)
=\sup _{ \|Y-X\|_2 \leq \epsilon} \left\{g(D(Y))+\E[Y]\right\}.
\end{align}
Denote by $\mu_0=\E[X]$. It holds that 
$$
\{\E[Y]: \|Y-X\|_2\le \epsilon\}=\{\mu_0+\E[V]: \|V\|_2\le \epsilon\}\subseteq[\mu_0-\epsilon,\mu_0+\epsilon].
$$
Therefore, \eqref{eq-RW1} reduces to
$$\begin{aligned}\sup _{ \|Y-X\|_2 \leq \epsilon} \left\{g(D(Y))+\E[Y]\right\}
&=\sup_{\mu\in[\mu_0-\epsilon,\mu_0+\epsilon]}~~~~\sup _{ \|Y-X\|_2 \leq \epsilon,~ \E[Y]=\mu} \left\{g(D(Y))+\E[Y]\right\}\\
&=\sup_{\mu\in[\mu_0-\epsilon, \mu_0+\epsilon]}~~~~\sup _{ \|V\|_2 \leq \epsilon,~ \E[V]=\mu-\mu_0} \left\{g(D(V+X))+\mu\right\}\\&=\sup_{\mu\in[\mu_0-\epsilon,\mu_0+\epsilon]}~~~~\sup _{ \|V\|_2 \leq \epsilon,  ~\E[V]=\mu-\mu_0} \left\{g(D(V)+D(X))+\mu\right\}\\&= \sup_{\mu\in[\mu_0-\epsilon,\mu_0+\epsilon]}~~~~\sup _{ 
\sigma^2(V)\leq \epsilon^2-(\mu-\mu_0)^2,  ~\E[V]=\mu-\mu_0}\left\{g(D(V)+D(X))+\mu\right\},\end{aligned}$$ 
where the third equality holds because $g$ is increasing and $D$ defined in \eqref{eq:D_h}  is subadditive and comonotonic additive, and  we can construct $V$ and $X$ to be comonotonic. Since $g$ is increasing, the inner optimization problem is equivalent to maximizing  $D(V)$ over $\{V: \sigma^2(V) \leq \epsilon^2-(\mu-\mu_0)^2,  ~\E[V]=\mu-\mu_0\}$. Using the arguments in the proof of Proposition \ref{prop:3}, we have 
$$\begin{aligned}
\sup\{D(V): \sigma^2(V)\leq \epsilon^2-(\mu-\mu_0)^2,  ~\E[V]=\mu-\mu_0\}=\sqrt{\epsilon^2-(\mu-\mu_0)^2}  \left\|h^{\prime}\right\|_2.
\end{aligned}$$  Therefore, we have  $$\begin{aligned}\widetilde{\mathrm{MD}}^D_g
(X|\epsilon)
&=\sup_{\mu\in[\mu_0-\epsilon,\mu_0+\epsilon]}\left\{g\left(\sqrt{\epsilon^2-(\mu-\mu_0)^2}\|h'\|_2+D(X)\right)+\mu\right\}\\
&=\sup_{t\in[-\epsilon, \epsilon]} \left\{g\left(\sqrt{\epsilon^2-t^2} \left\|h^{\prime}\right\|_2
+D(X)\right)+t+\mu_0 \right\}\\& =\sup_{t\in[-1, 1]} \left\{g\left(\epsilon\sqrt{1-t^2} \left\|h^{\prime}\right\|_2
+D(X)\right)+t\epsilon+\E[X]\right\},\end{aligned}$$
which completes the proof.
\end{proof}

In Proposition \ref{prop:4},   our  analysis is confined to the case of type-$2$ Wasserstein ball and signed Choquet integral $D_h$.
Working with general deviation measure  $D$ is not more difficult, as it only involves another supremum over $\Psi^p$ by using  \eqref{eq:h}.
For the general type-$p$ Wasserstein ball with $p\ne 2$, following similar arguments to those used in the proof of Proposition \ref{prop:4} leads us to 
\begin{align}\label{eq-eqgp}
\sup \left\{\mathrm {MD}^D_g(Y): F_Y\in \mathcal B_p(F_X,\epsilon)\right\}
=\sup_{\mu\in[\E[X]-\epsilon, \E[X]+\epsilon]}~~~~\sup _{ \|V\|_p \leq \epsilon,~ \E[V]=\mu-\E[X]} \left\{g(D(V+X))+\mu\right\}.
\end{align}
We do not have an explicit formula to solve the inner supremum problem in the right-hand side of \eqref{eq-eqgp}. This is due to the fact that $\Vert V \Vert_p$ and $\E[V]$ does not align very well unless $p=2$.

\begin{example}  
Let $g(x)=x-\log(1+x)$ for $x\in\R$ and $h(t)=(t-\alpha)_+/(1-\alpha)-t$ for $t\in[0,1]$ with $\alpha\in(0,1)$.
We have $D:=D_h=\ES_\alpha-\E$, and it follows from Proposition \ref{prop:4} that
\begin{align*}&\widetilde{\mathrm{MD}}^D_g
(X|\epsilon) =\sup_{t\in[-1, 1]}\left\{g\left(\epsilon\sqrt{1-t^2}\sqrt{\frac{\alpha}{1-\alpha}}+\ES_\alpha(X)-\E[X]\right)+t\epsilon+\E[X]\right\}\\& =\sup_{t\in[-1, 1]}\left\{t\epsilon+\epsilon\sqrt{1-t^2}\sqrt{\frac{\alpha}{1-\alpha}}+\ES_\alpha(X)-\log\left(1+\epsilon\sqrt{1-t^2}\sqrt{\frac{\alpha}{1-\alpha}}+\ES_\alpha(X)-\E[X]\right)\right\}.\end{align*}

Next, let $D={\rm Gini}$ defined in \eqref{eq:Gini-d}.
By Proposition \ref{prop:4}, we have  $$\begin{aligned}\widetilde{\mathrm{MD}}^D_g
(X|\epsilon)&=\sup_{t\in[-1, 1]}\left\{g\left(\frac {\sqrt{3}\epsilon} {3}\sqrt{1-t^2}+\mathrm{Gini}(X)\right)+t\epsilon+\E[X]\right\}\\&=\sup_{t\in[-1, 1]} \left\{t\epsilon+\E[X]+\frac {\sqrt{3}\epsilon} {3}\sqrt{1-t^2}+\mathrm{Gini}(X)-\log\left(1+\frac {\sqrt{3}\epsilon} {3}\sqrt{1-t^2}+\mathrm{Gini}(X)\right) \right\}. \end{aligned}
$$  
{\color{black}The maximum values are computed numerically by considering  $g(x)=x-\log(1+x)$ and $g(x)=x$ with the benchmark distributions being  normal,
Pareto and exponential.  We calculate  the worst values of $\mathrm{MD}^D_g(X)$ when $D=\ES_\alpha-\E$ with $\alpha=0.9$ and $D=\mathrm{Gini}$, across  different values of uncertainty level $\epsilon$.   Results are presented  in    Figures \ref{fig:worst_wass1} and \ref{fig:worst_wass2}, respectively.   Again, when $g(x) = x$, $\mathrm{MD}^D_g$ simplifies to $\mathrm{ES}_\alpha$ or $\mathrm{Gini}+\E$. Given that $g$ is 1-Lipschitz,  the worst-case values of $\mathrm{ES}_\alpha$ or  $\mathrm{Gini}+\E$ will be larger than those of $\widetilde{\mathrm{MD}}^D_g$ for other forms of $g$. }

\begin{figure}[h]
    \centering
    \begin{minipage}{0.6\textwidth}
        \centering
        \includegraphics[width=\textwidth]{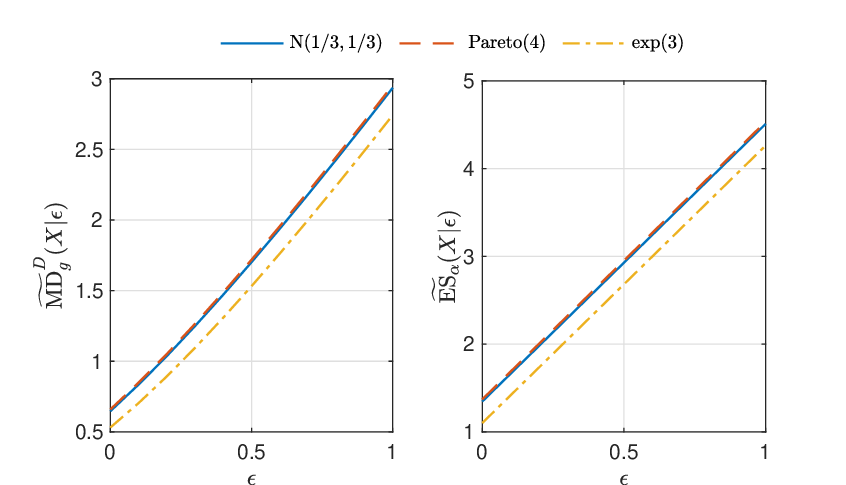}        \caption{The values of  $\widetilde{\mathrm{MD}}^D_g
(X|\epsilon)$ with   $g(x)=x-\log(1+x)$ (left) and  $g(x)=x$ (right), and $D=\ES_\alpha-\E$ }
        \label{fig:worst_wass1}
    \end{minipage}
  \hspace{0.01\textwidth}
    \begin{minipage}{0.6\textwidth}
        \centering
        \includegraphics[width=\textwidth]{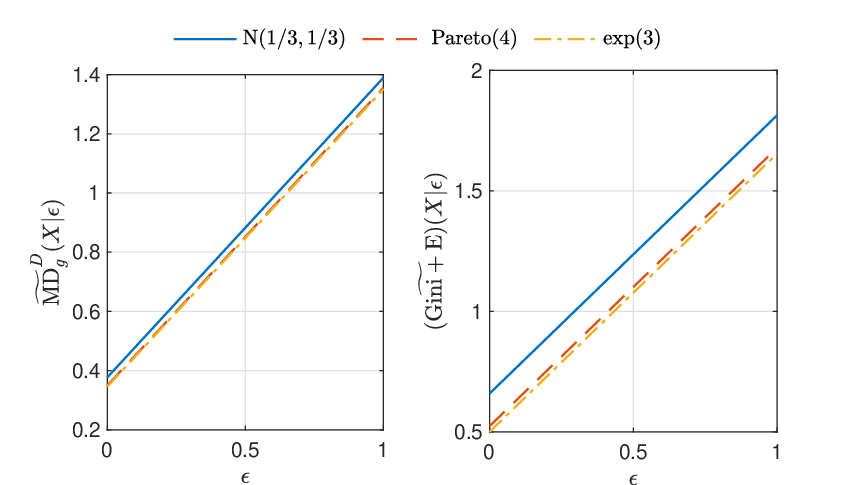} 
        \caption{The values of  $\widetilde{\mathrm{MD}}^D_g
(X|\epsilon)$  with   $g(x)=x-\log(1+x)$ (left) and  $g(x)=x$ (right), and $D=\mathrm{Gini}$}
        \label{fig:worst_wass2}
    \end{minipage}
\end{figure}
\end{example}


\end{document}